\newtheorem{lemma}{Lemma}
\newtheorem{proposition}{Proposition}
\begin{document}
\begin{spacing}{1.5}
\title{Estimating Social Network Models with Link Misclassification\footnote{               
    We are grateful to seminar and conference participates at CalTech, Chinese University of Hong Kong, London School of Economics, Northwestern, Oxford, Texas Camp Econometrics, U Amsterdam, U Chicago, University College London, U Penn, UT Austin, U Warwick, U Wisconsin, Vanderbilt, Wuhan U and Xiamen U for useful comments and suggestions. Lewbel and Tang are grateful for the financial support from National Science Foundation (Grant SES-1919489). Qu thanks the support from the National Natural Science Foundation of China (Project no. 72222007 and 72031006). Any or all remaining errors are our own. }
}

\author{Arthur Lewbel, Xi Qu, and Xun Tang}


\date{ \today }

\maketitle

\begin{spacing}{1}
\begin{abstract}

\noindent We propose an adjusted 2SLS estimator for social network models when reported binary network links are misclassified (some zeros reported as ones and vice versa) due, e.g., to survey respondents' recall errors, or lapses in data input. 
We show misclassification adds new sources of correlation between the regressors and errors, which makes all covariates endogenous and invalidates conventional estimators. 
We resolve these issues by constructing a novel estimator of misclassification rates and using those estimates to both adjust endogenous peer outcomes and construct new instruments for 2SLS estimation. 
A distinctive feature of our method is that it does not require structural modeling of link formation. 
Simulation results confirm our adjusted 2SLS estimator corrects the bias from a naive, unadjusted 2SLS estimator which ignores misclassification and uses conventional instruments. 
We apply our method to study peer effects in household decisions to participate in a microfinance program in Indian villages.

\bigskip \textit{JEL classification: C31, C51}

\medskip \textit{Keywords: Social networks, Peer effects, Link misclassification}
\end{abstract}

\end{spacing}

\pagebreak

\section{Introduction}

In many social and economic environments, an individual's behavior or outcome depends not only on his/her own characteristics, but also on the behavior and characteristics of others. 
Call such dependence between two individuals a \textit{link}. A \textit{social network} consists of a group of individuals, some of whom are linked to others. 
The econometrics literature on social networks has largely focused on disentangling various channels of effects based on observed outcomes and characteristics of network members. 
These include identifying the effects on each individual's outcome by (i) the individual's own characteristics (\textit{individual effects}), (ii) the characteristics of people linked to the individual (\textit{contextual effects}), and (iii) the outcomes of people linked to the individual (\textit{peer effects}). 
See \cite{blume2011identification} and \cite{graham2020network} for surveys about identifying such effects in social network models.

A popular approach for estimating social network models is to use two-stage least squares (2SLS). 
This requires researchers to construct instruments for the endogenous peer outcomes, using \textit{perfect knowledge} of the network structure, as given by the \textit{adjacency} matrix (i.e., the matrix that lists all links in the network). 
See, for example, \cite{bramoulle2009identification}, 
\cite{kelejian1998generalized}, 
\cite{lee2007identification}, 
and \cite{lin2010identifying}. 
In practice, network links are often collected from surveys, which may be subject to misclassification, due to, e.g., recall errors or misunderstandings by survey respondents, or lapses in data input.
These misclassification errors can be \textit{two-sided}: an existing link between two individuals may be misclassified as non-existent, or the sample may erroneously record links between those who are not linked.

Misclassification of links in the sample poses major methodological challenges for estimators like 2SLS. 
To see this, consider a data-generating process (DGP) from which a large number of independent networks (i.e., groups) are drawn. Each group $s$ consists of $n_s$ members\footnote{
    We also consider the case with a single growing network in the Online Appendix, but our results are easiest to illustrate in the context of many independent groups.} 
and a vector of individual outcomes $y\in \mathbb{R}^{n_s}$ is determined by a structural model: 
\[ y = \lambda Gy+X\beta +\varepsilon, \text{  where } E(\varepsilon|X,G)=0. \] 
In this model, the  $n_s$-by-$n_s$ adjacency matrix $G$ contains dummy variables that describe the group's network: its $(j,k)$-th entry equals one if individual $j$ is linked to member $k$, and zero otherwise. 
(In Section \ref{sec:LIM}, we discuss how to extend our method under an alternative, linear-in-means, a.k.a. ``local average'', specification, where the actual $G$ is row-normalized.) 
Here $X$ is an $n_s$-by-$K$ matrix of exogenous covariates, and $\varepsilon $ is an $n_s$-vector of structural errors.
The random arrays $y$, $G$, $X$, and $\varepsilon $ all vary across the groups in the sample, while the coefficients $\lambda $ and $\beta $ are the same across groups. We drop group subscripts for clarity.\footnote{For simplicity we have for now omitted contextual effects, i.e., a term defined as $GX\gamma $, and any group-level fixed effects. Extensions are in the Online Appendix.}

The regressors in the model are $Gy$ and $X$. While $X$ is exogenous, $Gy$ is correlated with $\varepsilon $. 
The issue of simultaneity arises because any individual's outcome depends on, and is determined simultaneously with, the outcomes of other peers. 
A simple estimator of the peer effect $\lambda$ and individual effects $\beta$ that deals with this simultaneity problem is 2SLS, using $GX$ or $G^{2}X$ as instruments for $Gy$, as in \cite{bramoulle2009identification}.\footnote{
    If the model includes contextual effects $GX\gamma $ in its structural form, then $G^{2}X$ can  be used as instruments for $Gy$; otherwise use of $GX$ as instruments suffices.}

But now suppose that, a researcher does not observe $G$ perfectly. Instead, the researcher observes a noisy measure $H$, which differs from $G$ by randomly misclassifying some links in the DGP. 
The goal now is to estimate $\lambda$ and $\beta$ from a \textquotedblleft \textit{feasible}\textquotedblright\ structural form like: 
\begin{equation} 
\label{eq:feasible_sf}
    y = \lambda Hy+X\beta +u,
\end{equation}
where $u\equiv \lbrack \varepsilon +\lambda (G-H)y]$ is a vector of \textit{composite} errors. 

The misclassified links in $H$ aggravate endogeneity issues in (\ref{eq:feasible_sf}) in three important ways. First, they lead to correlation between $X$ and the error $u$ through $\lambda(G-H)y$, due to the measurement error in the adjacency matrix. This component contains $y$, which by the model is correlated with $X$. This means that the regressors $X$ are no longer exogenous.

Second, these misclassified links cause an additional source of endogeneity in $Hy$. Like $Gy$, the feasible $Hy$ is correlated with the model error $\varepsilon $ due to simultaneity. 
But in addition, $Hy$ is also correlated with $u$ through the measurement error $\lambda(G-H)y$. 

Third, misclassification means that, unlike using $GX$ or $G^{2}X$ as instruments when $G$ is perfectly reported, 2SLS estimates based on the \textit{feasible} instruments $HX$ or $H^{2}X$ would be inconsistent as $HX$ correlates with $\lambda(G-H)y$, leading to a failure of instrument exogeneity.

For all these reasons, conventional 2SLS estimators become inconsistent in the presence of misclassification errors in the links.\footnote{
    While we focus on the 2SLS, same arguments apply to show that maximum likelihood, and the generalized least squares estimators based on (\ref{eq:feasible_sf}) are also inconsistent with link misclassification errors. } In this paper, we introduce an \textit{adjusted-2SLS} estimator, which resolves these endogeneity issues and consistently estimates $(\lambda,\beta)$ using alternative valid instruments constructed from $H$ despite the misclassification errors in the links. 
We first introduce the main idea for a benchmark case, where an observed $H$ differs from the true $G$ due to random, two-sided misclassification errors at unknown rates $p_0,p_1\in (0,1)$. Here, $p_1$ is the probability that any existing link is missing in the sample, while $p_0$ is the probability that a non-existent link is erroneously recorded as existing in the sample.\footnote{In the Online Appendix, we extend to allow misclassification rates $p_0(X),p_1(X)$ to depend on covariates.}

Our method is based on a series of new insights that have not been explored in the literature.
\textit{First}, we observe that by adjusting the noisy measure of peer outcomes $Hy$ using the misclassification rates $(p_0,p_1)$, we can restore the exogeneity of $X$ in an \textit{adjusted} feasible structural form. 
Formally, this means if we replace (\ref{eq:feasible_sf}) with: 
\begin{equation}
\label{eq:repara_feasible_sf}
    y = \lambda W_{(H,p_0,p_1)} y +X\beta + v,
\end{equation}
where $W_{(H,p_0,p_1)}$ is a properly designed adjustment of the network measure $H$, then the adjusted composite errors $v \equiv \varepsilon + \lambda[G - W_{(H,p_0,p_1)}]y$ in (\ref{eq:repara_feasible_sf}) satisfy $E(v|X,G) = 0$. 
This holds regardless of how the actual network $G$ is formed, as long as $E(\varepsilon|X,G) = 0$.

\textit{Second}, despite the restored exogeneity of $X$ in (\ref{eq:repara_feasible_sf}), conventional instruments such as $HX$ or $H^{2}X$ remain invalid, because the adjusted errors $v$ still depend on $H$. 
To resolve this issue, we provide alternative functions of $H$ and $X$ that are valid instruments. 
For example, we show that if $H$ is an unsymmetrized measure of $G$, then under some weak conditions $H^{\prime}X$ is uncorrelated with $v$ (where $H^{\prime}$ is the transpose of $H$), \textit{despite} misclassification errors in $H$. 
This result holds regardless of whether $G$ is symmetric (i.e., with all links being undirected) or asymmetric (i.e., consisting of directed links).
Therefore, we can use $H^{\prime}X$ as valid instruments in an adjusted-2SLS where network measures are adjusted by $W_{(H,p_0,p_1)}$.
To the best of our knowledge, no other paper in the literature has proposed $H^{\prime }X$ as instruments. 

Another scenario, which works regardless of whether the observed or actual adjacency matrices are symmetric or not, is when we observe two noisy measures of the same actual $G$.  
An example is our empirical application, where we observe two different reports of who visits whom. This means we observe two different $H$ matrices with independent misclassification errors. We show 2SLS becomes valid if we use one of these matrices to construct the adjustment term $W_{(H,p_0,p_1)}$ and the other to construct instruments.%

Our \textit{third} contribution is to show that under either scenario above (i.e., when the sample reports either a single unsymmetrized noisy measure $H$, or two independent measures that may or may not be symmetrized), we can provide simple methods to identify and estimate the unknown misclassification rates $(p_0,p_1)$.\footnote{
    The approach we take in this step differs from, and is simpler than, other papers that use multiple measures to deal with misclassification in discrete explanatory variables (e.g. \cite{mahajan2006identification}, \cite{lewbel2007estimation}, and \cite{hu2008identification}).
    This is because, for implementing our adjusted-2SLS, it is only necessary to estimate the rates $(p_0,p_1)$, rather than the distribution of outcomes conditional on the actual $G$.}

Building on these insights, we construct adjusted 2SLS estimators for $(\lambda,\beta)$, and provide their limiting distribution as the number of groups in the sample grows to infinity. 
This estimator essentially applies 2SLS to the adjusted peer outcomes $W_{(H,p_0,p_1)}y$ in (\ref{eq:repara_feasible_sf}), using our new instruments and a closed-form, sample analog estimator for the misclassification rates $(p_0,p_1)$.
The estimator is easy to implement, does not require any numerical searches, and Monte Carlo simulations demonstrate its good performance in finite samples.

A distinctive feature of our method is that it does not require the researchers to impose any structural model of link formation.
Nor does it require specification of the distribution of the latent $G$ given the observed $H$ and $X$.
We show how intuitive restrictions on link misclassification could provide enough leverage to estimate the social effects through 2SLS. 
This is an important advantage, because in many contexts researchers will not have sufficient prior knowledge to reliably specify a link formation model.

We adapt our method under an alternative specification where the actual network structure $G$ is row-normalized. In addition, in the Online Appendix we generalize the model and our estimator in several directions. 
We show how to include contextual effects (a term defined as $GX\gamma$) as well as group-level fixed effects into the structural form in (\ref{eq:repara_feasible_sf}). 
We also allow the misclassification rates $(p_0,p_1)$ to be heterogeneous and depend on covariates in $X$.

Furthermore, we extend our method to a single large network case, where the asymptotics is to increase the number of individuals on a single network, rather than increasing the number of small groups with fixed sizes. For this extension we examine a setting where the sample is partitioned into \textit{approximate} groups, a.k.a. blocks. Sparse links (with diminishing formation rates) exist \textit{between} blocks, but are not recorded in the sample; links within the blocks can be dense and are randomly misclassified.

Finally, we apply our method to estimate peer effects in household decisions to participate in a microfinance program in Indian villages, using data from \cite{banerjee2013diffusion}. 
The sample matches the individual surveys to the household surveys, yielding a total of 4,149 households from 43 villages in South India. 
The parameter of interest is the peer effect, which reflects how a household's decision is influenced by the participation of other households to which it is linked. 
Survey information about visits between the households provides two symmetrized noisy measures of undirected links (i.e., two symmetrized $H$ measures). 
We estimate the misclassification rates in each of these two measures using our method, and apply these estimated rates in our adjusted 2SLS to estimate the peer effects. 

We find that participation by another linked household increases a household's own participation rate by around 5.1\%.
This effect is economically significant, compared to the average participation rate of 18.9\% in the sample.
We also find that ignoring the issue of link misclassification in the noisy measures and applying  conventional 2SLS results in an upward bias in the estimates of these peer effects (Monte Carlo simulations show that this bias can be large, though it turns out to be modest in our application). \medskip 

\noindent \textbf{Roadmap.} Section \ref{sec:literature} reviews the literature, and explains our contribution.  
Section \ref{sec:iden} specifies the model, and illustrates the main ideas with independent and identical misclassification rates.   
Section \ref{sec:estimation} defines a closed-form estimator for misclassification rates, and provides an adjusted-2SLS estimator for the social effects. 
Section \ref{sec:simulation} presents Monte Carlo simulation results. 
Section \ref{sec:application} applies our method to analyze peer effects in the microfinance participation in India. 
Proofs of the main results are collected in the Appendix. 
Extensions to models with contextual effects, heterogeneous misclassification rates, group fixed effects, as well as the setting of one single large network, are in the Online Appendix. 

\section{Related Literature} \label{sec:literature}

Models with misclassified binary or discrete variables have been studied extensively in the econometrics literature. 
\cite{aigner1973regression}, \cite{klepper1988bounding}, \cite{bollinger1996bounding}, and \cite{molinari2008partial} point-identify or set-identify such models using various restrictions on the misclassification rates; 
\cite{mahajan2006identification}, \cite{lewbel2007estimation}, and \cite{hu2008identification} exploit exogenous instruments to deal with misclassified explanatory variables. 

Estimation of peer effects in social networks with measurement errors in the links is an increasingly important topic.
\cite{butts2003network} proposes a hierarchical Bayesian model to infer social structure in the presence of measurement errors. 
\cite{shalizi2013consistency} note the challenge of dealing with missing network links in Random Graph Models.
\cite{advani2018credibly} show that even a relatively low misreporting rate can lead to large bias in causal effect estimates.
\cite{chandrasekhar2011econometrics} show how egocentrically sampled network data can be used to predict the full network in a graphical reconstruction process. 
\cite{liu2013estimation} shows that when the adjacency matrix is not row-normalized, instrumental variable estimators based on an out-degree distribution can be valid.

\cite{hardy2019estimating} estimate treatment effects on a social network when the reported links are a noisy representation of true spillover pathways. 
They use a mixture model that accounts for missing links as unobserved network heterogeneity, and estimate it using an Expectation-Maximization algorithm. 
This approach requires a parametric model of how links are determined and treatment is assigned, and requires enumerating the likelihood conditional on all possible treatment exposures (which in turn depends on the latent unobserved network). 
\cite{auerbach2022identification} studies a network model where links are correctly measured but both peer and contextual effects interact with unobserved individual heterogeneity that affects link formation.
In contrast with these papers, we focus on estimating social effects in linear social network models while fully exploiting implications of randomly misclassified links. 
Our method does not require modeling the formation of actual links; our estimator is an adjusted 2SLS, which has a closed form and is easy to compute.

\cite{liu2013estimation} estimates a social network model when the data consists of a \textit{subset} of individuals sampled randomly from a larger group in the population. 
In his setting, the links and outcomes among this sampled subset of group members are perfectly measured while those of all others are not reported in the data. 
In comparison, we do not study the inference of sampled networks; instead, we let the group memberships be fixed and known, and allow every individual in the sample to have randomly missing links. 
As noted above, this imperfect measure of links leads to the failure of conventional 2SLS in our setting.

\cite{boucher2020estimating} estimate peer effects when the social networks in the sample are subject to measurement issues, such as missing or misclassified links. Their method requires researchers know, or have a consistent estimator of, the distribution of the actual network. 
They construct instruments by drawing from this distribution, and use 2SLS to estimate the peer effects. 
In comparison, the method we propose does not require such prior knowledge or estimates of network distribution.

\cite{griffith2022name} studies the case where links are censored in the sample, and characterizes the bias in a reduced-form regression (i.e., when the outcomes in $y$ are regressed on exogenous covariates $X$ and $GX$). 
For a model with $\lambda=0$, \cite{griffith2022name} shows the bias can be consistently estimated under an order invariance condition, i.e., the covariance of characteristics of those linked to an individual is invariant to the order in which those links are reported or censored.\footnote{
    This condition mitigates the issue of endogenous selection of uncensored links, and in this sense plays a similar role to our assumption of randomly misclassified links.}
\cite{Griffith2023impact} extend this investigation to include both linear-in-sums (where $G$ has binary entries) and linear-in-means (where $G$ is row-normalized). They show how nonzero, structural peer effects $\lambda$ enter the estimand of the reduced-form regression above, as well as how general misclassification, e.g., due to randomly missing links or censored links, affect these estimands.
In comparison, we focus on empirical settings where links are misclassified at random. (This is later generalized to the case with heterogeneous misclassification rates.) 
We show that conventional 2SLS estimands in this case contain bias in peer effects.
Bias correction in our case is immediate once the misclassification rates are estimated using a simple approach that we provide.

\cite{lewbel2023ignoring} show that if the order of measurement errors in links is sufficiently small (e.g., the number of misclassified links in a single, large network does not grow too fast with the sample size), conventional instrumental variables estimators that ignore these measurement errors remain consistent, and standard asymptotic inference methods remain valid. In contrast, in this paper we deal with new challenges outside the scope of \cite{lewbel2023ignoring}. Namely, we allow the misclassification rates to be non-diminishing (fixed) in an asymptotic framework with many independent, finite-sized groups. In such settings, the measurement errors are large enough to invalidate conventional 2SLS estimators.

\section{Model and Identification} \label{sec:iden}

Consider a DGP from which a large number of small, independent networks (groups) are drawn. Each group $s$ consists of $n_s \geq 3$ individuals, with $n_s$ being finite integers.
We first identify and estimate a social network model when links are randomly misclassified in the sample. (In the Online Appendix, we allow misclassification to depend on covariates and we extend to a single large network with sparse links between groups.) 
We establish asymptotic properties of our estimator as the number of groups approaches infinity. 

The structural form for the vector of individual outcomes $y_s\in \mathbb{R}^{n_s}$ in group $s$ is:%
\begin{equation}
y_s=\lambda G_sy+X_s\beta + \varepsilon_s \text{,}  \label{StrucForm}
\end{equation}%
where $\lambda $ and $\beta $ are constant parameters, $X_s$ is an $n_s$-by-$K$ matrix of explanatory variables, and $G_s\in \{0,1\}^{n_s\times n_s}$ is the network adjacency matrix, with its $(i,j)$-th entry $G_{s,ij}=1$ if an individual $i$ is linked to $j$ in group $s$, and $G_{s,ij}=0$ otherwise. The matrix $G_s$ may be asymmetric with directed links ($G_{s,ij} \neq G_{s,ji}$ for some $i \neq j$), or symmetric with undirected links ($G_{s,ij}=G_{s,ji}$ almost surely).
Section \ref{sec:LIM} adapts our method when $G$ is row-normalized.

Let $I_s$ by an $n_s$-by-$n_s$ identity matrix, and assume $(I_s-\lambda G_s)$ is invertible almost surely. 
A sufficient condition for this is $||\lambda G_s||<1$ for \textit{any} matrix norm $||\cdot||$ almost surely.
Solving equation (\ref{StrucForm}) for $y_s$ gives the reduced form for outcomes: 
\begin{equation}
y_s=M_s(X_s\beta + \varepsilon_s )\text{, where }M_s\equiv (I_s-\lambda G_s)^{-1}\text{. }
\label{ReduForm}
\end{equation}

We do not observe the actual network $G_s$. Instead, the sample reports a noisy measure $H_s\in\{0,1\}^{n_s\times n_s}$. 
That is, for unknown pairs of individuals $i \neq j$, $G_{s,ij}$ is randomly misclassified as $H_{s,ij}=1-G_{s,ij}$.
By convention, let $G_{s,ii}=0$ and $H_{s,ii}=0$ for all $i$ and $s$.

For simplicity, let the group sizes $n_s = n $ be fixed across groups for now. We will add back the group subscripts and allow group size variation in Section \ref{sec:estimation}. 
\subsection{Assumptions}
\label{sec:assum_H}
We maintain the following conditions on the noisy measure $H$ throughout Section \ref{sec:iden}:
\begin{eqnarray*}
&&\text{(A1) }E(H_{ij}|G,X)=E(H_{ij}|G_{ij},X)\text{ for all $i$ and $j$;}
\\
&&\text{(A2) }E(H_{ij}|G_{ij}=1,X)=1-p_1\text{, }E(H_{ij}|G_{ij}=0,X)=p_0\text{, and }p_0+p_1<1\text{ for all  $i$}  \not= \text{  $j$;}
\\
&&\text{(A3) }E(\varepsilon |G,X,H)=0\text{.}   
\end{eqnarray*}
Condition (A1) states the incidence of misclassifying a link is conditionally independent from the actual status of all other links.
This condition doesn't allow situations where the chance of misreporting a link depends on other links.

Under (A2), misclassification probabilities  conditional on actual link status are fixed at $p_0$ and $p_1$ respectively, and are independent from $X$. 
With $\Pr\{G_{ij}=1\}<1$, the inequality constraint ``$p_0+p_1<1$'' is equivalent to ``$H_{ij}$ and $G_{ij}$ are positively correlated.'' 
That is, the noisy measure is positively correlated with the actual link status despite the misclassification error. 
This is standard in the literature on misclassified regressors, e.g., \cite{bollinger1996bounding}, \cite{hausman1998misclassification}.
Condition (A3) rules out endogeneity in link formation, assuming $(G,X,H)$ are exogenous to structural errors $\varepsilon $.

Conditions (A1) and (A2) hold jointly in two common scenarios. 
In the first scenario, which we refer to as \textit{unsymmetrized} measures, each $(i,j)$-th entry in $H$ is an \textit{independent} measure of $G_{ij}$.
For example, $H_{ij}$ (or $H_{ji}$) reports individual $i$'s (or $j$'s) binary response to a survey question about whether a link exists between $i$ and $j$. 
A measure $H$ constructed this way is flexible in that it allows the researcher to remain agnostic about whether the actual $G$ is symmetric with undirected links or not.
This is also an intuitive way to construct $H$ when the actual $G$ is \textit{known} to be asymmetric with directed links.
In this scenario, if misclassification of $G_{ij}$ happens independently at rates $p_0$ or $p_1$ across links (depending on whether $G_{ij}=1$ or $0$), then (A1) and (A2) are satisfied. 
To reiterate, (A1) and (A2) hold in this first scenario, regardless of whether the actual $G$ is symmetric or not.

In the second scenario, which we refer to as \textit{symmetrized} measures, the actual $G$ is \textit{known} to be symmetric with undirected links, and hence the researcher chooses to symmetrize $H$.
For example, the researcher asks $i$ and $j$ whether they have an undirected link, and records their responses respectively. 
The researcher then constructs a symmetrized measure by setting $H_{ij}$ and $H_{ji}$ both to $1$ if \textit{either $i$ or $j$} responds positively, and both to $0$ otherwise. 
Suppose the responses from $i$ or $j$ independently misclassify an \textit{existing} link at rate $\varphi_1 >0$ (say, due to idiosyncratic recall errors).
Then $\Pr\{H_{ij}=0|G_{ij}=1\} = p_1 \equiv \varphi_1^2$.   
Likewise, if $i$ and $j$ independently misclassify a \textit{non-existent} link at rate $\varphi_0$, then $\Pr\{H_{ij}=1|G_{ij}=0\} = p_0 \equiv 1-(1-\varphi_0)^2$.
Thus, in the second scenario, (A1) and (A2) hold with $\Pr\{H_{ij} = H_{ji}\}=1$ and with the two entries sharing the same misclassification rates $p_1$ and $p_0$ specified above.

On the other hand, (A1) \textit{does} rule out a third, empirically less plausible scenario, in which the actual $G$ is asymmetric with directed links but researchers mistakenly impose a symmetrized $H$ using independent measures of $G_{ij}$ and $G_{ji}$ as in the second scenario. 
In this case, the equality in (A1) fails in general because $E(H_{ij}|G_{ij}=1,G_{ji}=1) = 1-\varphi_1^2 $ while $E(H_{ij}|G_{ij}=1,G_{ji}=0) = \varphi_0 + (1-\varphi_1) - \varphi_0 (1-\varphi_1)$.

A clear advantage of the method we propose is that it allows researchers to consistently estimate social effects while being agnostic about whether the actual links in $G$ are directed or not. 
Our method only requires the noisy measure $H$ satisfy (A1)-(A3), which is not confined to the (a)symmetry of $G$ or $H$.
We recommend a simple guideline for practitioners: if a researcher is unsure about whether the actual links in $G$ are directed or undirected, a safe approach is to construct an unsymmetrized measure $H$ as in the first scenario, and apply our method in this paper to deal with possible misclassification of the links.

It is also important to note that (A1)-(A3) do \textit{not} specify how the actual links in $G$ are formed. Nor do they impose any structure that can be used to derive a conditional likelihood for the actual network, which is $\Pr\{G|H,X\}=\frac{\Pr(H|G,X) \Pr(G|X)}{\sum_{G'}\Pr(H|G',X)\Pr(G'|X)}$.
Constructing such a likelihood requires specifying the DGP of the actual network $\Pr\{G|X\}$, which we refrain from doing. 
Our method therefore differs qualitatively from alternative methods which either use graphical reconstructions such as \cite{chandrasekhar2011econometrics}, or require knowledge of the distribution of actual network matrix such as \cite{boucher2020estimating}.  

Define an \textit{infeasible, adjusted} adjacency matrix: $W\equiv W_{(H,p_0,p_1)}\equiv \frac{H-p_0(\iota\iota'-I)}{1-p_0-p_1}$,  
where $\iota$ is a vector of ones. For the rest of this paper, we suppress the subscripts indicating the arguments $(H,p_0,p_1)$ in $W$ to simplify notation.
That is, $W_{ij} = (H_{ij} - p_0)/(1-p_0-p_1)$ for $i\neq j$, and $W_{ii} = H_{ii} = 0$.
Under (A1) and (A2), $E(W_{ij}|G,X)=1$ whenever $G_{ij}=1$, and $E(W_{ij}|G,X)=0$ whenever $G_{ij}=0$ (including the case with $i=j$).
Thus,%
\begin{equation}
E(W|G,X)=G\text{.}  \label{EQ1}
\end{equation}%
Next, we will exploit this property in (\ref{EQ1}) to establish a useful intermediate result: despite link misclassification, structural parameters $(\lambda,\beta)$ could be consistently estimated by an adjusted 2SLS if the misclassification rates $p_0,p_1$ were known.

\subsection{Infeasible two-stage least squares}
\label{sec:augmentBias}
We write a new \textit{adjusted} structural form using $W$:%
\begin{equation}
y=\lambda Wy + X\beta + \underset{\equiv v}{\underbrace{\varepsilon
+\lambda \left(G- W\right) y}}\text{.}  \label{SF1}
\end{equation}%
This is infeasible as $W$ is a function of the unknown misclassification rates $p_0$ and $p_1$. 
Lemma \ref{lm:exogenous_v} shows $X$ is mean independent with its composite errors $v$, despite link  misclassification. 

\begin{lemma}
\label{lm:exogenous_v} Under (A1), (A2), and (A3), $E(v|G,X)\equiv E[\varepsilon
+\lambda \left(G- W\right) y|G,X]=0$.
\end{lemma}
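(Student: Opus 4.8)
The plan is to substitute the reduced form for $y$ into the composite error $v$, split the result into three additive pieces, and show each piece has zero conditional mean given $(G,X)$ — bringing in the extra conditioning on $H$ precisely where the joint dependence of $W$ and $\varepsilon$ would otherwise block a factorization.

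First I would write $y = M(X\beta + \varepsilon)$ with $M \equiv (I-\lambda G)^{-1}$, which is well-defined almost surely by assumption, so that
\[
v = \varepsilon + \lambda\,(G-W)\,MX\beta + \lambda\,(G-W)\,M\varepsilon .
\]
Taking $E(\,\cdot\mid G,X)$ of the first term gives zero: (A3) states $E(\varepsilon\mid G,X,H)=0$, so by the law of iterated expectations $E(\varepsilon\mid G,X)=0$.

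For the second term, observe that conditional on $(G,X)$ the vector $MX\beta$ is deterministic (a function of $G$ and $X$ only). Hence $E[(G-W)MX\beta\mid G,X] = E[(G-W)\mid G,X]\,MX\beta$, and I would invoke the already-established identity (\ref{EQ1}), namely $E(W\mid G,X)=G$ — itself a consequence of (A1)–(A2) — to get $E[(G-W)\mid G,X]=0$, which annihilates this term. The diagonal poses no difficulty since $W_{ii}=H_{ii}=G_{ii}=0$ by convention.

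The third term is the only one that uses the stronger form of (A3). Here I would condition additionally on $H$: because $W$ is a deterministic function of $(H,p_0,p_1)$ and $M$ is a deterministic function of $G$, the matrix $(G-W)M$ is measurable with respect to $(G,X,H)$, so
\[
E\big[(G-W)M\varepsilon \mid G,X,H\big] = (G-W)M\,E[\varepsilon\mid G,X,H] = 0
\]
by (A3). Applying iterated expectations again yields $E[(G-W)M\varepsilon\mid G,X]=0$. Summing the three contributions gives $E(v\mid G,X)=0$, as claimed. I do not anticipate a genuine obstacle; the one point requiring care — and presumably the reason (A3) conditions on $H$ as well as on $(G,X)$ — is this third term, where $W$ (through $H$) and $\varepsilon$ are jointly dependent given $(G,X)$, so the expectation cannot be factored without first conditioning on $H$.
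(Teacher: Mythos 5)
Your proof is correct and follows essentially the same route as the paper's: substitute the reduced form $y=M(X\beta+\varepsilon)$, use the identity $E(W\mid G,X)=G$ from (A1)--(A2) to kill the $(G-W)MX\beta$ piece, and condition on $H$ so that (A3) annihilates the terms involving $\varepsilon$ (the paper does this implicitly by writing $E(Wy\mid G,X)=E[WM\,E(X\beta+\varepsilon\mid H,G,X)\mid G,X]$). Your explicit three-way split and the remark on why (A3) must condition on $H$ are just a more detailed exposition of the same argument.
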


This lemma is fundamental for our method; it restores exogeneity of $X$ by adjusting the structural form properly to account for link misclassification.  

The importance of Lemma \ref{lm:exogenous_v} is best illustrated in contrast with the \textit{naive} structural form in (\ref{eq:feasible_sf}), i.e., $ y = \lambda Hy + X\beta + u$, which ignores misclassification errors and simply uses the reported $Hy$ as peer outcomes on the right-hand side. 
The composition errors in (\ref{eq:feasible_sf}) are: 
\begin{eqnarray}   
    u   = \varepsilon + \lambda(G-H)y    = v + \left(\frac{p_0+p_1}{1-p_0-p_1}\right)\lambda Hy - \left(\frac{p_0}{1-p_0-p_1}\right)\lambda (\iota\iota'-I)y. \label{eq:relate_compErr}
\end{eqnarray}
While $E(v|G,X)=0$ by Lemma \ref{lm:exogenous_v}, the second and third terms on the right-hand side of (\ref{eq:relate_compErr}) do not satisfy such mean independence. 
Therefore, in a simple, feasible structural form that uses $Hy$ instead of $Wy$, the covariates in $X$ are generally endogenous due to the ignored misclassification errors. 
Later we show such endogeneity leads to an ``\textit{augmentation bias}'' in the 2SLS estimation of (\ref{eq:feasible_sf}) when misclassification is one-sided ($p_0=0$). 

Lemma \ref{lm:exogenous_v} may seem surprising ex ante, because one would expect $(G,X)$ to be correlated with the composite error $v$ which depends on $y$. 
The intuition for the exogeneity is as follows.
Once we condition on the actual network $G$ and $X$, randomness in individual outcomes $y$ is solely due to the actual structural errors $\varepsilon$, which are uncorrelated with both $X$ and $(H,G)$ under (A3). 
As a result, any potential correlation between $v$ and $(G,X)$ could only be due to the measurement error $\lambda (G - W)y$. 
But the property established in (\ref{EQ1}) and the exogeneity of $\varepsilon$ in (A3) imply this measurement error is mean-independent from $(G,X)$. 

Note that we \emph{cannot} use the exogeneity established in \Cref{lm:exogenous_v} alone to construct GMM estimators for $(\lambda,p_0,p_1)$, as it does not suffice for the joint identification of these parameters.
This is seen in the special case of  one-sided misclassification $(p_0=0)$, where the moment condition due to \Cref{lm:exogenous_v} simplifies to $E(y-\frac{\lambda}{1-p_1}Hy | G, X)=X\beta$, which is not sufficient for recovering $\lambda$ and $p_1$ separately \emph{even if} $G$ were perfectly observed in the DGP.

Our goal for the rest of \Cref{sec:iden} is to combine the exogeneity attained in \Cref{lm:exogenous_v} with further information, such as instruments and multiple measures $H$, to identify all model parameters, including the  misclassification rates.
First off, note the term $Wy$ in (\ref{SF1}) remains endogenous, \emph{even if} the misclassification rates were known and used to construct the adjusted measure $W$. This is because $E[\left( Wy\right) ^{\prime }v]\not=0$ in general.\footnote{
    Under (A1) and (A2), $E(W^{\prime}G|G,X)=G^{\prime}G$, but $E(W^{\prime}W|G,X)\neq G^{\prime}G$ in general.
    This is because the $i$-th diagonal entry in $W'W$ is $\sum_k W_{ki}^2$ while its $(i,j)$-th off-diagonal entry is $\sum_k W_{ki}W_{kj}$.
    It then follows from (A3) and the law of iterated expectation that $E\left( y^{\prime }W^{\prime }Wy\right) \not= \lambda E(y^{\prime }W^{\prime}Gy)$ in general.} 

We next consider 2SLS estimation of equation (\ref{SF1}). 
Let $R\equiv (Wy,X)$. Suppose we had a set of instruments $Z$ for $R$. 
By Lemma \ref{lm:exogenous_v}, $Z$ can include $X$, so we only need an additional instrument for $Wy$. 
We will later provide some possible instruments for $Wy$. But for now, just consider what properties any such matrix of instruments $Z$ must satisfy: $Z$ must be an $n$-by-$L$ matrix with $L\geq K+1$ such that $E(Z^{\prime }v)=0$ and the following rank condition holds:%
\begin{equation*}
\text{(IV-R) }   E(Z^{\prime }R)\text{ and }E(Z^{\prime }Z)\text{ have full column
rank.}
\end{equation*}%
Let $\Pi \equiv \left[ E(Z^{\prime }Z)\right] ^{-1}E(Z^{\prime }R)$. 
By (\ref{SF1}) and Lemma \ref{lm:exogenous_v},
\begin{eqnarray}
\Pi ^{\prime }E(Z^{\prime }y) &=&\Pi ^{\prime }E(Z^{\prime }R)(\lambda,\beta ^{\prime })^{\prime }+\Pi ^{\prime }E(Z^{\prime }v)  
\Rightarrow ( \lambda,\beta ^{\prime }) ^{\prime }= 
\left[ \Pi ^{\prime }E(Z^{\prime }R)\right] ^{-1}\left[ \Pi ^{\prime
}E(Z^{\prime }y)\right] \text{.}  \label{EQ2}
\end{eqnarray}%

\begin{proposition} \label{pn:1}
Suppose (A1), (A2), and (A3) hold, and that (IV-R) holds for instruments $Z$. 
The two-stage least-squares estimand using $Z$ for (\ref{SF1}) is $(\lambda,\beta ^{\prime})^{\prime }$.
\end{proposition}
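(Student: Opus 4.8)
The plan is to prove the claim by direct substitution into the closed-form two-stage least squares estimand, using only the orthogonality $E(Z'v)=0$ (maintained as part of what it means for $Z$ to be a valid instrument here, as stated just before (IV-R)) together with the rank condition (IV-R); no limiting, perturbation, or optimization argument is needed.

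First I would record the 2SLS estimand explicitly. Instrumenting $R=(Wy,X)$ with $Z$ and replacing sample moments by population moments, the estimand is
\[
\theta^{\ast} \;=\; \left[\,E(R'Z)\big(E(Z'Z)\big)^{-1}E(Z'R)\,\right]^{-1} E(R'Z)\big(E(Z'Z)\big)^{-1}E(Z'y),
\]
which, on setting $\Pi \equiv [E(Z'Z)]^{-1}E(Z'R)$, equals $[\Pi'E(Z'R)]^{-1}\Pi'E(Z'y)$ --- the quantity on the right-hand side of (\ref{EQ2}). The leading inverse exists: under (IV-R) the $L\times L$ matrix $E(Z'Z)$ is invertible and $E(Z'R)$ has full column rank $K+1$, so $\Pi'E(Z'R)=E(R'Z)[E(Z'Z)]^{-1}E(Z'R)$ is positive definite.

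Next I would substitute the adjusted structural form (\ref{SF1}), written compactly as $y = R\theta + v$ with $\theta\equiv(\lambda,\beta')'$. This gives $E(Z'y) = E(Z'R)\theta + E(Z'v) = E(Z'R)\theta$, the last equality by $E(Z'v)=0$. Premultiplying by $\Pi'$ and then by $[\Pi'E(Z'R)]^{-1}$ yields $\theta^{\ast} = [\Pi'E(Z'R)]^{-1}\Pi'E(Z'R)\theta = \theta$, which is the assertion; this is exactly the chain of implications displayed in (\ref{EQ2}), so the proof amounts to reading off that calculation with (IV-R) supplying the needed invertibility.

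There is no genuine obstacle internal to this proposition: it is bookkeeping once its two inputs are in hand, namely that the adjusted equation (\ref{SF1}) --- rather than the naive (\ref{eq:feasible_sf}) --- has composite error with $E(v|G,X)=0$, which is Lemma \ref{lm:exogenous_v} and the only place (A1)--(A3) enter, and that the candidate $Z$ satisfies $E(Z'v)=0$. The real work is postponed: since $v$ depends on the observed $H$ through $W$, the textbook choices $HX$ and $H^2X$ violate $E(Z'v)=0$, so one must later exhibit valid alternatives (for instance $H'X$ when $H$ is unsymmetrized, or a second independent measure of $G$) and separately verify both $E(Z'v)=0$ and the rank part of (IV-R) for them. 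Accordingly I would keep the proof of Proposition \ref{pn:1} short and purely algebraic, and devote the subsequent subsections to constructing and validating such $Z$.
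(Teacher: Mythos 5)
Your argument is correct and is essentially the paper's own: the result is exactly the calculation displayed in (\ref{EQ2}), where $E(Z'v)=0$ (part of the definition of valid instruments given just before (IV-R), with $X$'s inclusion justified by Lemma \ref{lm:exogenous_v}) and the rank condition supply the invertibility needed to read off $(\lambda,\beta')'$ from the population 2SLS formula. You also correctly identify that the substantive work lies in later sections, namely exhibiting instruments that actually satisfy $E(Z'v)=0$ and (IV-R).
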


Using $Wy$ instead of $Hy$ as the first regressor in $R$ is crucial for consistency in Proposition \ref{pn:1}. To see why, suppose one applies 2SLS to (\ref{eq:feasible_sf}) using $Hy$ in the regressors $\check R \equiv (Hy,X)$, so the resulting model errors are $u$ as defined in (\ref{eq:relate_compErr}). 
Then the 2SLS estimand would be $
(\lambda,\beta')' + [\check \Pi ^{\prime}E(Z^{\prime} \check R) ] ^{-1} [\check\Pi E(Z'u)]$,
where $\check \Pi$ is similar to $\Pi$, only with $R$ replaced by $\check R$. Endogeneity bias arises in general when $Z$ has components in $X$. This is because $X$ is correlated with the latter two terms on the right-hand side of (\ref{eq:relate_compErr}) through $y$.

In the special case with one-sided misclassification (i.e., $p_0=0$ and $p_1>0$ so that actual links are missing at random, but the sample never reports links that do not exist), we can show $ E(Z'u) = (\frac{p_1}{1-p_1}) E[Z' \check R (\lambda,\mathbf 0)'] $. Consequently, the 2SLS estimand in this case is $(\frac{\lambda}{1-p_1},\beta')'$, indicating an ``\textit{augmentation}'' bias in the peer effect estimator.  

Based on Proposition \ref{pn:1}, we have two main requirements for estimation. First, we need to construct a valid instrument for $Wy$. One possibility is nonlinear functions of $X$, if they correlate with the link formation in $G$ and satisfy the rank condition in (IV-R). However, nonlinear functions of $X$ may be weak instruments as the structural model is linear in $X$.
So, instead we show in Section \ref{sec:use_H} how to construct valid instruments using $H$ and $X$.

The second requirement is that we need to identify and estimate the unknown misclassification rates $p_0$ and $p_1$ to construct $W$. We address this question in Section \ref{sec:get_p}.

\subsection{Constructing instruments from network measures} \label{sec:use_H}

We propose two options for constructing IVs, depending on the number of measures available and whether the measures are symmetrized in the sample. 

\subsubsection{Instruments using a single unsymmetrized measure} \label{subsec:use_1H}

Suppose the sample reports a single, unsymmetrized network measure $H$.   
Assume:
\begin{equation*}
\text{(A4) Conditional on }(G,X)\text{, }H_{ij}\text{ and }H_{kl}\text{ are
independent whenever }(i,j)\not=(k,l)\text{.}
\end{equation*}%
This condition states that different links are misclassified independently conditional on the actual link status.
This does \textit{not} restrict whether the actual network $G$ is symmetric or not. 
For example, $H$ may be an \textit{unsymmetrized} measure of $G$ as defined in the first scenario under (A1)-(A2) in Section \ref{sec:assum_H}). 
In this case, (A4) holds when $H_{ij}$ and $H_{ji}$ are independent measures of $G_{ij}$ and $G_{ji}$ respectively, \emph{regardless of} whether $G_{ij}=G_{ji}$ in the actual $G$. 

On the other hand, (A4) fails when $H$ is a \textit{symmetrized} measure, as $H_{ij}$ and $H_{ji}$ are identical and hence cannot be independent.
To deal with this case of symmetrized measures, we give an alternative method for constructing instruments in Section \ref{sec:use_2H}.

\begin{proposition} \label{pn:HX} 
Under (A1)-(A4), $E(Z^{\prime}v)=0$ for $Z\equiv (H^{\prime }X,X)$ or $Z\equiv (W^{\prime }X,X).$
\end{proposition}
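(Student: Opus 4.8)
The plan is to verify the orthogonality $E(Z'v)=0$ entry by entry, reducing everything to property (\ref{EQ1}) and the conditional-independence restriction (A4). Take $Z=(H'X,X)$ first. Since $Z'v$ stacks $X'Hv$ on top of $X'v$, the condition $E(Z'v)=0$ splits into $E(X'v)=0$ and $E(X'Hv)=0$. The first is immediate from Lemma \ref{lm:exogenous_v}: because $E(v\mid G,X)=0$, we get $E(X'v)=E[X'\,E(v\mid G,X)]=0$. So the entire content of the proposition is the claim $E(X'Hv)=0$ (and, for the second choice of instruments, its analogue $E(X'Wv)=0$).

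For $E(X'Hv)$ I would substitute $v=\varepsilon+\lambda(G-W)y$ and, using the reduced form (\ref{ReduForm}), $y=MX\beta+M\varepsilon$ with $M=(I-\lambda G)^{-1}$, to obtain
\[
X'Hv \;=\; X'H\varepsilon \;+\; \lambda\,X'H(G-W)MX\beta \;+\; \lambda\,X'H(G-W)M\varepsilon .
\]
The first and third terms have zero expectation by (A3): conditional on $(G,X,H)$ the matrices $X$, $H$, $W$ (a function of $H$), and $M$ (a function of $G$) are all fixed, so each of these two terms equals a fixed matrix times $E(\varepsilon\mid G,X,H)=0$. The middle term no longer involves $\varepsilon$; conditioning on $(G,X)$ it equals $X'\,E[H(G-W)\mid G,X]\,MX\beta$, so it suffices to establish
\[
E[\,H(G-W)\mid G,X\,]=0 .
\]

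This identity is the crux, and it is the only place the argument actually uses (A4). Its $(i,j)$ entry is $\sum_k H_{ik}(G_{kj}-W_{kj})$. I would first dispose of the coincident-index terms: $k=i$ contributes nothing since $H_{ii}=0$, and $k=j$ contributes nothing since $G_{jj}=W_{jj}=0$. For each remaining $k$ (so $k\neq i$ and $k\neq j$) the first factor $H_{ik}$ and the factor $W_{kj}$ --- which is an affine function of $H_{kj}$ --- involve two \emph{distinct} off-diagonal entries of $H$ (distinct because $(i,k)\neq(k,j)$ once $k\neq i$), so by (A4) they are independent given $(G,X)$, whence
\[
E[\,H_{ik}(G_{kj}-W_{kj})\mid G,X\,]=E(H_{ik}\mid G,X)\cdot E(G_{kj}-W_{kj}\mid G,X)=E(H_{ik}\mid G,X)\cdot 0=0 ,
\]
using $E(W_{kj}\mid G,X)=G_{kj}$ from (\ref{EQ1}). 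Summing over $k$ proves the identity, so $E(X'Hv)=0$. The case $Z=(W'X,X)$ goes through verbatim after replacing the leading factor $H_{ik}$ by $W_{ik}$: it is still an affine function of $H_{ik}$, so (A4) still separates it from $W_{kj}$, the coincident-index terms still vanish, and $E(W_{ik}\mid G,X)=G_{ik}$ plays the role of $E(H_{ik}\mid G,X)$.

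The main obstacle --- really the only step requiring care --- is the index bookkeeping in the last display: one must be sure that the two entries of $H$ appearing in $H_{ik}(G_{kj}-W_{kj})$, namely $H_{ik}$ and $H_{kj}$, are never the same entry, since on the "diagonal" $k\in\{i,j\}$ the conditional independence in (A4) would be useless. The resolution is precisely that those diagonal terms are identically zero by the conventions $H_{ii}=G_{jj}=W_{jj}=0$, so they can be peeled off before (A4) is invoked; after that, the conclusion is a one-line consequence of (\ref{EQ1}) and (A3).
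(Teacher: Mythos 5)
Your proof is correct and takes essentially the same route as the paper's: both arguments kill the $\varepsilon$ terms via (A3) and the reduced form, and reduce the remaining cross term to products of two \emph{distinct} off-diagonal entries of $H$ (or $W$), which factorize under (A4) and vanish because $E(W\mid G,X)=G$. The only cosmetic difference is the order of presentation: the paper treats $Z=(W'X,X)$ first by computing $E(W^2\mid G,X)=G^2$ entrywise and then deduces the $H'X$ case from the affine relation $H=(1-p_0-p_1)W+p_0(\iota\iota'-I)$ together with Lemma \ref{lm:exogenous_v}, whereas you verify $E[H(G-W)\mid G,X]=0$ directly and note the $W'X$ case goes through verbatim.
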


Proposition 2 suggests using $H'X$ or $W'X$ as instruments for $Wy$. Recall that $GX$ are valid instruments for $Gy$ if $G$ were perfectly observed. 
Therefore, one may wonder why $H^{\prime }X$ are valid instruments while $HX$ are not. 
To give some intuition, observe that the composite error $v$ in (\ref{SF1}) contains $\lambda (G-W)$ and so includes $H$ through $W$. 
The covariance of this error with $HX$ contains the conditional variance of $H$, which cannot be zero. Therefore, the error $v$ is correlated with $HX$. 
In contrast, the corresponding terms in the covariance of $v$ with $H^{\prime }X$ are conditional covariances of $H_{ij}$ with $H_{ji}$, which by (A4) are zero.
Hence $H'X$ satisfies instrument exogeneity while $HX$ does not. 

In addition to validity, instruments $Z$ need to also satisfy the rank condition (IV-R). The next proposition specifies sufficient conditions for $Z\equiv (W^{\prime }X,X)$ to satisfy (IV-R).
These conditions are primitive in terms of moments of functions of $(X,G)$.%

\begin{proposition}
\label{pn:IVR} 
Under (A1)-(A4), (IV-R) holds for $Z\equiv (W^{\prime }X,X)$ if
\begin{equation}
\left(
\begin{array}{cc}
E(X^{\prime }X) & E(X^{\prime }M^{-1}X) \\
E(X^{\prime }MX) & E(X^{\prime }X)%
\end{array}%
\right) \text{ and }\left(
\begin{array}{cc}
E(X^{\prime }G^{2}X) & E(X^{\prime }GX) \\
E(X^{\prime }GX) & E(X^{\prime }X)%
\end{array}%
\right) \text{ are non-singular.}  \label{SuffCond_IVR}
\end{equation}
\end{proposition}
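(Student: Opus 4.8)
The plan is to evaluate the two population moment matrices in (IV-R) and reduce their rank conditions to the non-singularity of the two matrices in (\ref{SuffCond_IVR}). Recall $Z=(W'X,X)$ and $R=(Wy,X)$, so $E(Z'Z)$ is $2K\times 2K$ and $E(Z'R)$ is $2K\times(K+1)$; (IV-R) asks that the former be non-singular and the latter have full column rank $K+1$. Throughout I condition on $(G,X)$ and use: (i) $E(W|G,X)=G$ from (\ref{EQ1}); (ii) by (A4), in $(W^2)_{ij}=\sum_k W_{ik}W_{kj}$ the terms with $k\notin\{i,j\}$ factor as $E(W_{ik}W_{kj}|G,X)=G_{ik}G_{kj}$ (distinct index pairs, hence conditionally independent) while the $k=i$ and $k=j$ terms vanish since $W_{ii}=W_{jj}=0$, so that $E(W^2|G,X)=G^2$; and (iii) the reduced form $y=MX\beta+M\varepsilon$ together with (A3) (giving $E(\varepsilon|G,X,H)=0$) annihilates every term linear in $\varepsilon$, so $E(Wy|G,X)=GMX\beta$ and $E(W^2y|G,X)=G^2MX\beta$.

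First I would treat $E(Z'R)$. Its blocks are $E(X'W^2y)=E(X'G^2MX)\beta$, $E(X'Wy)=E(X'GMX)\beta$, $E(X'WX)=E(X'GX)$, and $E(X'X)$, so
\[
E(Z'R)=\begin{pmatrix} E(X'G^2MX)\beta & E(X'GX)\\ E(X'GMX)\beta & E(X'X)\end{pmatrix}.
\]
Suppose $E(Z'R)(a_0,a')'=0$ with $(a_0,a)\neq 0$. If $a_0=0$, then $E(X'X)a=0$ and $E(X'GX)a=0$, i.e.\ $(\mathbf 0,a)$ kills both block rows of the \emph{second} matrix in (\ref{SuffCond_IVR}); its non-singularity forces $a=0$, a contradiction. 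If $a_0\neq 0$, rescale to $a_0=1$. Using $GM=\lambda^{-1}(M-I)$ and $G^2M=\lambda^{-1}(GM-G)$ (the case $\lambda=0$ being handled directly: then $M=I$ and the two equations read $\left(\begin{smallmatrix}E(X'G^2X)&E(X'GX)\\ E(X'GX)&E(X'X)\end{smallmatrix}\right)(\beta',a')'=0$, again impossible by the second matrix in (\ref{SuffCond_IVR}) when $\beta\neq 0$), substituting, setting $b:=\beta-\lambda a$, and noting $M^{-1}=I-\lambda G$, the two equations become
\[
E(X'X)b=E(X'MX)\beta,\qquad E(X'M^{-1}X)b=E(X'X)\beta.
\]
Thus $(b',\beta')'$ lies in the kernel of $\left(\begin{smallmatrix}E(X'X)&-E(X'MX)\\ E(X'M^{-1}X)&-E(X'X)\end{smallmatrix}\right)$, which (negate the second block column, then swap block rows and block columns) is non-singular iff the \emph{first} matrix in (\ref{SuffCond_IVR}) is. Hence $\beta=0$ and $b=0$, so this branch is impossible whenever the model is non-degenerate in the sense that $\beta\neq 0$ (equivalently, the first column of $E(Z'R)$ is not the zero vector). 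Therefore $E(Z'R)$ has full column rank.

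Next, for $E(Z'Z)$ I would not compute its blocks (the $(1,1)$ block $E(X'WW'X)$ picks up diagonal corrections in $p_0,p_1$ and does not match a block of (\ref{SuffCond_IVR})), but argue directly: $E(Z'Z)\succeq 0$, and if $c=(c_1',c_2')'$ has $c'E(Z'Z)c=0$ then $Zc=W'Xc_1+Xc_2=0$ almost surely, hence $\operatorname{Var}((W'Xc_1)_i|G,X)=0$ for every $i$. By (A4) this equals $\sum_{k\neq i}(Xc_1)_k^2\operatorname{Var}(W_{ki}|G,X)$ with every $\operatorname{Var}(W_{ki}|G,X)>0$ (since $p_0,p_1\in(0,1)$), so $Xc_1=0$ a.s.\ and then $Xc_2=0$ a.s. Non-singularity of the second matrix in (\ref{SuffCond_IVR}) implies $E(X'X)$ is non-singular (a singular $E(X'X)$ would, for any $Xc=0$, put $(\mathbf 0,c)$ in its kernel), so $c_1=c_2=0$ and $E(Z'Z)$ is non-singular. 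Combining the two parts yields (IV-R).

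The step I expect to be the main obstacle is the $a_0\neq 0$ branch: one must substitute $GM=\lambda^{-1}(M-I)$ and $G^2M$ while carrying the $\lambda^{-1},\lambda^{-2}$ factors, treat $\lambda=0$ separately, introduce the right change of variables $b=\beta-\lambda a$, and then recognize that the resulting $2K\times 2K$ coefficient matrix equals — up to a benign block permutation and a sign change — precisely the first matrix in (\ref{SuffCond_IVR}). A subsidiary point requiring care is the identity $E(W^2|G,X)=G^2$ (note this is \emph{not} true of $E(W'W|G,X)$, whose diagonal differs from $G'G$ even under (A4)); it is exactly this identity that makes $E(Z'R)$ depend on $H$ only through $G$-valued moments, and it rests on the index bookkeeping in point (ii) above.
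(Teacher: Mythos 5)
Your proposal is correct, and it diverges from the paper's own proof in one half. For $E(Z'R)$ you and the paper compute the same blocks ($E(X'W^2y)=E(X'G^2MX)\beta$, $E(X'Wy)=E(X'GMX)\beta$, etc.); the paper then reduces non-singularity of the square matrix $[B_1,B_4;B_2,B_5]$ to the first matrix in (\ref{SuffCond_IVR}) by explicit block row/column operations using $GM=\lambda^{-1}(M-I)$ and $G^2M=\lambda^{-2}(M-I-\lambda G)$, whereas you run the equivalent algebra as a kernel argument with the substitution $b=\beta-\lambda a$ --- same content, and your version has the added merit of treating $\lambda=0$ separately and of making explicit the non-degeneracy $\beta\neq 0$ that the paper's step ``rank deficiency of $E(Z'R)$ implies singularity of the square matrix'' silently uses (when $\lambda=0$ the first matrix in (\ref{SuffCond_IVR}) is automatically singular, so that case is vacuous anyway). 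For $E(Z'Z)$ you genuinely depart from the paper: its proof writes $E(Z'Z)=[E(X'W^2X),E(X'WX);E(X'WX),E(X'X)]$ and equates this to $[E(X'G^2X),E(X'GX);E(X'GX),E(X'X)]$, but the $(1,1)$ block of $Z'Z$ is $X'WW'X$, whose conditional expectation is $GG'$ plus a diagonal variance correction (precisely the phenomenon the paper itself flags in its footnote about $E(W'W|G,X)$), so the paper's displayed equality --- and hence its ``if and only if'' --- does not hold as written; your positive-semidefiniteness/degeneracy argument, which uses (A4) to force $Xc_1=0$ and then $Xc_2=0$ almost surely, sidesteps this and in fact needs only non-singularity of $E(X'X)$, which is implied by the second matrix in (\ref{SuffCond_IVR}). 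The one caveat is that your variance step requires both rates strictly interior, $p_0,p_1\in(0,1)$, i.e.\ genuinely two-sided misclassification; that matches the paper's benchmark setting but would need extra care (e.g.\ combining the zero-variance links with the conditional-mean restriction) in the one-sided case $p_0=0$ that the paper discusses elsewhere.
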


These primitive conditions serve to rule out ``knife-edge'' cases where the link formation process is aligned with the regressor distribution in such a pathological way that the rank of moments above is reduced. 
Our simulation shows (\ref{SuffCond_IVR}) holds even for restrictive cases where dyadic links are formed as i.i.d. Bernoulli, and independent from $X$. On the other hand, (\ref{SuffCond_IVR}) fails in some other special cases. 
One example is the linear-in-means social interactions model where all members in a group are linked so that $G$ is the product of 1/n and a square matrix of ones and $G^2=G$. Note such a social interactions model would not be identified, due to the ``reflection'' problem as defined in \cite{manski1993identification}. See, e.g., \cite{bramoulle2009identification}, who require that $I$, $G$, and $G^2$ be linearly independent.

\subsubsection{Instruments using multiple measures} \label{sec:use_2H}

Section \ref{subsec:use_1H} assumes the sample reports a single \textit{unsymmetrized} measure $H$. 
Now we provide an alternative, complementary method for constructing instruments when the sample provides two (or more) $H$, regardless of whether the measures are symmetrized or not. 

For example, \cite{banerjee2013diffusion} provide multiple measures of symmetrized links between households. For each pair of households, the survey asks which households you visited, and which ones visited you. 
\cite{banerjee2013diffusion} symmetrize each of these two measures, yielding symmetric matrices we call $H^{(1)}$ and $H^{(2)}$. 
These two matrices are both measures of the same underlying symmetric network $G$ (where $G_{ij}$ is one if either $i$ visited $j$ or $j$ visited $i$, and zero otherwise).
However, as we show in Section \ref{sec:application}, these two matrices empirically differ substantially, indicating that they are different noisy measures of $G$.

Suppose we observe two matrices, $H^{(1)}$ and $H^{(2)}$, which satisfy (A1), (A2), (A3), and
\begin{equation*}
\text{(A4') Conditional on }(G,X)\text{, }H_{ij}^{(1)}\text{ and }%
H_{kl}^{(2)}\text{ are independent for all }(i,j)\text{ and }(k,l)\text{.}
\end{equation*}%
These two measures $H^{(1)}$ and $H^{(2)}$ have their own misclassification rates, denoted $(p^{(t)}_0,p^{(t)}_1)$ for $t=1,2$ respectively. 
Condition (A4') is plausible when these distinct measures are constructed independently using responses from separate survey questions. Define

\[W^{(t)}\equiv W^{(t)}_{(H,p_0,p_1)}\equiv \frac{H^{(t)}-p^{(t)}_0(\iota\iota'-I)}{1-p^{(t)}_0-p^{(t)}_1}.\]
Using either $W^{(1)}$ or $W^{(2)}$, we can construct a structural form. 
That is, for $t=1,2$,%
\begin{equation}
y=\lambda W^{(t)}y+X\beta +v^{(t)}\text{, where }%
v^{(t)}=\varepsilon +\lambda \left[ G-W^{(t)}\right] y%
\text{.}  \label{eq:2HSF}
\end{equation}%
Under (A1)-(A3) and (A4'), and by an argument similar to Proposition \ref{pn:HX}, we can show that $W^{(2)}X$ and $H^{(2)}X$ satisfy instrument exogeneity with regard to $v^{(1)}$:%
\begin{equation*}
E\left[ (W^{(2)}X)'v^{(1)}\right] = \frac{1}{1-p^{(2)}_0-p^{(2)}_1}E\left[ (H^{(2)}X)'v^{(1)}\right] =0\text{,} \notag
\end{equation*}%
and likewise with $W^{(2)}$ replaced by $H^{(2)}$.
A symmetric result holds by swapping the indexes $t=1,2$. We can then use either $H^{(1)}X$ or $W^{(1)}X$ as instruments for $W^{(2)}y$, or use either $W^{(2)}X$ or $H^{(2)}X$ as instruments for $W^{(1)}y$.

Note that unlike Section \ref{subsec:use_1H} which required an unsymmetrized $H$, the use of multiple $H^{(t)}$ matrices here works regardless of whether each $H^{(t)}$ is symmetrized or not.

\subsection{Recovering misclassification rates \label{sec:get_p}}

To construct $W$ and apply 2SLS, we still need to identify and estimate misclassification rates $p_{0}$ and $p_{1}$. 
Now we show how to leverage variation in $X$ that affects actual link formation and recover these rates from the observation of noisy network measures.

\subsubsection{Using two conditionally independent measures} \label{subsubsec: 2H}

We start with the case of two independent measures $H^{(1)}$ and $H^{(2)}$, which have misclassification rates $(p_{0}^{(t)},p_{1}^{(t)}) $ for $t=1,2$ respectively, and satisfy (A1), (A2), (A3), and (A4').\footnote{It is worth noting that this case is flexible enough to accommodate \textit{both scenarios} in Section \ref{sec:assum_H}. That is, the two independent measures $H^{(1)},H^{(2)}$ may either be \textit{unsymmetrized} or \textit{symmetrized}, as introduced in Section \ref{sec:assum_H}. Recall that in the first scenario researchers do not know whether the actual adjacency $G$ is symmetric or not, while in the second scenario researchers do know the actual $G$ is symmetric.} 

 Assume $X$ correlates with network formation. Specifically, assume we can define a function $\phi_{ij}(X)$ that is related to the distribution of $G_{ij}$. 
 In the simplest case, $\phi_{ij}(X)$ would be binary-valued, with $\Pr\{G_{ij}=1\vert \phi_{ij}(X)=0\} \neq \Pr\{G_{ij}=1\vert \phi_{ij}(X)=1\}$. Note this imposes no restriction on the true link formation other than being correlated in some way with $X$. We can accommodate polar extreme cases, such as endogenous network formation based on pairwise stability, where $G_{ij}$ depends on the demographics of all group members' $X$, versus dyadic link formation models where $G_{ij}$ depends only on pair-specific demographics $(X_i,X_j)$.        

To illustrate, in our empirical study in Section \ref{sec:application}, we define $\phi_{ij}(X)=1$ if $i$ and $j$ are from the same caste; otherwise $\phi_{ij}(X)=0$. This requires that two people of the same caste have a different probability of forming a true link than people from different castes. 

The intuition for our identification is as follows. Let $\pi_{1}$ be the unknown average probability that a cell $G_{ij}=1$, conditional on $\phi_{ij}(X)=1$. 
Then consider $\Pr\{H_{ij}^{(t)}=1\vert \phi_{ij}(X)=1\}$, which is a known function of $\pi_{1}$, $p_{0}^{(t)}$, and $p_{1}^{(t)}$ for $t=1,2$. 
This provides two equations (one for each value of $t$) in the unknown misclassification rates and in $\pi_{1}$. The same construction conditioning on $\phi_{ij}(X)=0$ gives two more equations in the unknown misclassification rates and in $\pi_{0}$. 
Finally, the conditional average probability of the product $H_{ij}^{(1)}H_{ij}^{(2)}=1$ gives two more equations for identification.

Making this logic precise, define $\pi _{1}\equiv \frac{1}{n(n-1)}\sum_{i\neq j} \Pr \{G_{ij}=1|\phi_{ij}(X)=1\}$. Consider the following set of three conditional moments of $H_{ij}^{(1)}$ and $H_{ij}^{(2)}$:
\begin{eqnarray}
\frac{1}{n(n-1)}\sum_{i\neq j}E\left[ \left. H_{ij}^{(1)}H_{ij}^{(2)}\right\vert \phi_{ij}(X)=1 \right] &=&\left(
1-p_{1}^{(1)}\right) \left( 1-p_{1}^{(2)}\right) \pi_{1}+p_{0}^{(1)}p_{0}^{(2)}(1-\pi _{1})\text{;}  \nonumber \\
\frac{1}{n(n-1)}\sum_{i\neq j} E\left[ \left. H_{ij}^{(t)}\right\vert \phi_{ij}(X)=1 \right] &=&\left(
1-p_{1}^{(t)}\right) \pi _{1}+p_{0}^{(t)}(1-\pi _{1})\text{ for }t=1,2\text{.%
}  \label{eq:id_misc_rates}
\end{eqnarray}%
Note these are three distinct equations as the second applies for $t=1$ and $t=2$. We obtain three more equations by replacing $\phi_{ij}(X)=1$ with $\phi_{ij}(X)=0$ and replacing $\pi _{1}$ with $\pi _{0}$. 

The left-hand side of these six equations can be estimated from observed $H^{(1)}$, $H^{(2)}$, and $X$, while the right-hand sides are functions of six unknown parameters: $\pi _{1},\pi _{0}$ and $p_{1}^{(t)},p_{0}^{(t)}$ for $t=1,2$. 
Assume that $\pi _{1} \neq \pi _{0}$, meaning that $\phi_{ij}(X)$ does affect the true link formation. 

Despite the nonlinearity of these six equations, we show that they can be uniquely solved for the six unknown parameters, and in particular we provide closed-form expressions for the misclassification rates $p_{1}^{(t)},p_{0}^{(t)}$ for $t=1,2$. See the proof in Appendix A2 for details.

This identification requires choosing a function $\phi_{ij}(\cdot)$ such that the probability of link formation is different for the event $\{\phi_{ij}(X)=1\}$ than when $\{\phi_{ij}(X)=0\}$ so $\pi_1 \neq \pi_0$. 

We can generalize the identification argument above to broader settings with other choices of $\phi_{ij}(\cdot)$, including those with a continuous range. %
Our focus here is on recovering misclassification rates.
We treat $\pi_1,\pi_0$ as ``nuisance'' parameters that are identified as a by-product in our identification of $p^{(t)}_1, p^{(t)}_0$ for $t=1,2$. 
We do not exploit knowledge of $\pi_1,\pi_0$ for estimation, or to infer the link formation process. 

\subsubsection{Using a single, unsymmetrized measure} \label{subsubsec: 1H}

The identification method of the previous section can be readily modified to recover the misclassification probabilities in the case with a single, unsymmetrized measure $H$ when the actual $G$ is \textit{known} to be symmetric with undirected links. 
Suppose $H$ satisfies (A1)-(A4) with misclassification rates $p_{1},p_{0}$. 
For any \textit{unordered} pair $i\not=j$, construct two noisy measures $H_{\{i,j\}}^{(1)}\equiv H_{ij}$ and $H_{\{i,j\}}^{(2)}\equiv H_{ji}$. 
(The new subscripts for $H^{(t)}$, i.e., $\{i,j\}$, only serve as a reminder that these two measures are symmetrized.)
We then obtain a system of equations similar to (\ref{eq:id_misc_rates}), only with $\frac{1}{n(n-1)}$, $\sum_{i\neq j} $, $ H_{ij}^{(t)} $, $\phi_{ij}$ replaced by $\frac{2}{n(n-1)}$, $\sum_{i>j}$, $ H_{\{i,j\}}^{(t)} $, $\phi_{\{i,j\}}$ respectively, and with identical rates across the measures, i.e. $p_{1}^{(t)}=p_{1}$ and $p_{0}^{(t)}=p_{0}$ for $t=1,2$. 
The same argument then identifies $\pi _{1},\pi_{0},p_{1},p_{0}$ using variation in $\phi_{\{i,j\}}(X)$.

\subsection{Concluding remarks about identification}

The methods proposed in Section \ref{sec:iden} are flexible enough to accommodate various scenarios defined by whether the actual adjacency matrix $G$ is symmetric or not, and whether the observed network measure(s) $H$ is symmetrized or not. 
The table below summarizes the solutions of adjusted 2SLS that we propose for each one of those scenarios. \bigskip 

\begin{tabular}{ccccccc}
& \multicolumn{6}{c}{Reported Network Measures $H$} \\ \cline{2-7}
& \multicolumn{2}{|c}{Single, unsym'zed} & \multicolumn{2}{|c}{Multiple, sym'zed} & \multicolumn{2}{|c|}{Multiple, unsym'zed} \\ \cline{2-7}
& \multicolumn{1}{|c}{(IV)} & (MR) & \multicolumn{1}{|c}{(IV)} & (MR) & 
\multicolumn{1}{|c}{(IV)} & \multicolumn{1}{c|}{(MR)} \\ \cline{2-7}
Sym. $G$ & \multicolumn{1}{|c}{ $\ \text{Sec 3.3.1 }$} & Sec 3.4.2 & 
\multicolumn{1}{|c}{ \ Sec 3.3.2 \ } & Sec 3.4.1 & \multicolumn{1}{|c}{ \ \ Sec 3.3 \ \ }
& \multicolumn{1}{c|}{Sec 3.4} \\ \cline{2-7}
Asym. $G$ & \multicolumn{1}{|c}{ Sec 3.3.1 } & see text & \multicolumn{2}{|c}{
violates (A1)} & \multicolumn{1}{|c}{Sec 3.3.2} & \multicolumn{1}{c|}{Sec 3.4.1} \\ 
\cline{2-7}
\end{tabular}

\bigskip \bigskip

\noindent Each one of the six cells in last two rows represents a particular scenario, defined by the (a)symmetry of the actual adjacency $G$ \textit{as well as} the number and property of network measures $H$ available. 
Solutions for estimating $\lambda$ and $\beta$ in each scenario consist of two parts: construction of instruments (IV), and recovery of misclassification rates (MR). 

If the actual $G$ is symmetric and the sample reports a single, unsymmetrized $H$, one can recover MRs using Section 3.4.2 and construct IVs using Section 3.3.1. 
Likewise, if the actual $G$ is asymmetric and the sample reports multiple, unsymmetrized $H$, one can recover MRs using Section 3.4.1 and construct IVs using Section 3.3.2. 
If the actual $G$ is symmetric and the sample report multiple, unsymmetrized $H$, then one can recover MRs using \textit{either} approach in Section 3.4, and construct IVs using \textit{either} approach in Section 3.3.

For the scenario with an asymmetric $G$ and a single, unsymmetrized measure, our paper presents a valid way to construct instruments, but does not propose a way to identify the MRs. 
To perform the latter task, one may adopt a method from \cite{hausman1998misclassification} to a dyadic link formation model. 
We do not elaborate on that method as it requires a link formation model, which we have intentionally refrained from doing throughout this paper. 

Additional remarks about our use of multiple, noisy network measures in Section \ref{sec:use_2H} and \ref{subsubsec: 2H} are in order. There is a broad and growing econometrics literature that uses repeated noisy measures to estimate nonlinear models with errors in variables, e.g., \cite{li2002robust}, \cite{chen2005measurement} and \cite{hu2017identification} or unobserved heterogeneity, e.g., \cite{hu2008identification} and \cite{bonhomme2016non}.
\cite{Hu2018hidden} use repeated measurement to estimate a binary choice model with misclassification and social interactions. 
These papers typically apply mathematical tools such as deconvolution, and eigenvalue or LU decomposition. 

In contrast, we use the repeated measures in a different way, one that does not require any deconvolution or matrix decomposition. 
Focusing on linear social networks, we exploit the identifying power from repeated measures through a standard 2SLS in Section \ref{sec:use_2H}, and apply a constructive closed-form algebraic argument to recover the misclassification rates in Section \ref{subsubsec: 2H}. 
Finally, note that our 2SLS estimators are unlikely to suffer from weak instrument issues, because Assumption (A2) ensures correlation between $H$ and $G$, and our instruments are constructed from $H$.

\subsection{Adaptation to Linear-in-Means Models} \label{sec:LIM}

In a linear-in-means (LIM) model, the adjacency matrix $G^*$ consists of binary entries $G^*_{ij}\in\{0,1\}$ (with $G^*_{ii}=0$ by convention) while the matrix $G$ in the actual structural form $y=\lambda G y + X\beta + \varepsilon$ is row-normalized, i.e., $G_{ij} \equiv G^*_{ij}/\left(\sum_{k} G^*_{ik}\right)$ if $\sum_{k} G^*_{ik}\neq 0$, and $G_{ij}\equiv 0$ otherwise. 
Suppose a researcher observes a noisy proxy $H^*$, with each off-diagonal entry $H^*_{ij}$ being potential misclassification of $G^*_{ij}$ satisfying (A1)-(A3), and $H^*_{ii}=0$ for all $i$.

To address link misclassification in these LIM models, we employ the same logic for the linear-in-sums model above, with necessary adaptation in how we adjust the noisy network measures to restore the exogeneity in $X$. 
Namely, we recover misclassification rates $p_0,p_1$ from the noisy measures $H^*$ as in Section \ref{sec:get_p}, and use them to construct a new transformation, denoted by $\widetilde W(H^*;p_0,p_1)$, so that $E(\widetilde W\vert X,G) = G$. 
In contrast with the linear-in-sums model in Section \ref{sec:augmentBias}, this new transformation requires a qualitatively different idea, because the row normalized $G$ in the structural form is \textit{non-linear} in the actual adjacency $G^*$. 

Let $H^*_{i\cdot}$ be the $i$-th row in $H^*$, and define $G^*_{i\cdot}$, $H_{i\cdot}$, $G_{i\cdot}$ similarly. 
It suffices to consider row-specific transformation.\footnote{
    Recall $G^*_{ii}=G_{ii}=0$ and $H^*_{ii}=H_{ii}=0$ by construct. So, we focus on off-diagonal entries only.} That is, for each row $i$, our objective is to construct a vector-valued function $\widetilde{W}_{i\cdot}(\cdot;p_0,p_1)$ that maps from $H^*_{i\cdot}\in \{0\}\times\{0,1\}^{n-1}$ to $\{0\}\times\mathbb R^{n-1}$ (where $n$ is the network size and $\widetilde{W}_{ii}(\cdot;p_0,p_1)$ is degenerate at zero by construct), so that  
\begin{equation} \label{eq:LS}
    E[\widetilde{W}_{ij}(H^*_{i\cdot};p_0,p_1)\vert X,G_{i\cdot}]=G_{ij} \text{ for all } j\neq i \text{ and all realized }G_{i\cdot}.
\end{equation}

There is one-to-one mapping between $G^*_{i\cdot}$ and its row-normalization $G_{i\cdot}$. Hence the conditional probability mass $P(H^*_{i\cdot}\vert G_{i\cdot})$ is identical to $P( H^*_{i\cdot} \vert G^*_{i\cdot})$, which, under the conditions (A1) and (A2), is determined by the misclassification rates $(p_0,p_1)$ as follows: 
\begin{equation} \label{eq:CPM}
    P(H^*_{i\cdot}\vert G^*_{i\cdot}) = \prod_{j\neq i} \left[(1-p_1)^{H^*_{ij}G^*_{ij}}(p_1)^{(1-H^*_{ij})G^*_{ij}}(p_0)^{H^*_{ij}(1-G^*_{ij})}(1-p_0)^{(1-H^*_{ij})(1-G^*_{ij})}\right].
\end{equation}
Consequently, for any $j\neq i$, the equalities in (\ref{eq:LS}) form a linear system of equations:
\begin{equation} \label{eq:LS2}
    \mathbf P_{(H^*_{i\cdot}\vert G^*_{i\cdot})} \times \widetilde W_{ij} = \widetilde V_{ij},
\end{equation}
where $\widetilde W_{ij}$ is $(2^{n-1})$-by-1 with components indexed by the realization of $H^*_{i\cdot}$, and $\mathbf P_{(H^*_{i\cdot}\vert G^*_{i\cdot})}$ is $(2^{n-1})$-by-$(2^{n-1})$ with rows indexed by the realizations of $G^*_{i\cdot}$ and columns indexed by $H^*_{i\cdot}$. 

\begin{proposition} \label{pn:generalCPM}
    The $\mathbf P_{(H^*_{i\cdot}\vert G^*_{i\cdot})}$ in (\ref{eq:LS2}) is the $(n-1)$-th Kronecker power of \[
        \left(\begin{array}{cc}
            1-p_0 & p_0 \\ 
            p_1 & 1-p_1 
        \end{array} \right),
    \]
    and is non-singular whenever $p_0+p_1\neq 1$.
\end{proposition}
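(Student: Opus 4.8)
The plan is to read the Kronecker structure directly off the product factorization in (\ref{eq:CPM}), and then obtain non-singularity from the behavior of determinants under Kronecker products.

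\emph{Step 1: identify the $2\times 2$ building block.} Fix a single off-diagonal coordinate $j\neq i$ and isolate the corresponding factor in the product in (\ref{eq:CPM}). When $G^*_{ij}=0$ that factor equals $1-p_0$ if $H^*_{ij}=0$ and $p_0$ if $H^*_{ij}=1$; when $G^*_{ij}=1$ it equals $p_1$ if $H^*_{ij}=0$ and $1-p_1$ if $H^*_{ij}=1$. Collecting these four numbers into a matrix $Q$ with rows indexed by $G^*_{ij}\in\{0,1\}$ and columns indexed by $H^*_{ij}\in\{0,1\}$ yields exactly
\[
    Q \;=\; \left(\begin{array}{cc} 1-p_0 & p_0 \\ p_1 & 1-p_1 \end{array}\right),
\]
so the $j$-th factor in (\ref{eq:CPM}) is precisely the entry $Q_{G^*_{ij},\,H^*_{ij}}$.

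\emph{Step 2: assemble the Kronecker power.} Order the $2^{n-1}$ realizations of $G^*_{i\cdot}$ (the off-diagonal entries of row $i$, excluding the always-zero $i$-th coordinate) lexicographically, and likewise order the realizations of $H^*_{i\cdot}$; this is the ordering under which the Kronecker product of matrices is defined. By (\ref{eq:CPM}), the $(G^*_{i\cdot},H^*_{i\cdot})$ entry of $\mathbf P_{(H^*_{i\cdot}\vert G^*_{i\cdot})}$ equals $\prod_{j\neq i}Q_{G^*_{ij},\,H^*_{ij}}$, which by definition is the $(G^*_{i\cdot},H^*_{i\cdot})$ entry of the $(n-1)$-fold Kronecker power $Q^{\otimes(n-1)}$. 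Hence $\mathbf P_{(H^*_{i\cdot}\vert G^*_{i\cdot})}=Q^{\otimes(n-1)}$, which is the first claim.

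\emph{Step 3: non-singularity.} Compute $\det Q = (1-p_0)(1-p_1)-p_0p_1 = 1-p_0-p_1$, so $Q$ is invertible exactly when $p_0+p_1\neq1$. Using $\det(A\otimes B)=(\det A)^{\dim B}(\det B)^{\dim A}$ and inducting on the number of factors gives $\det\big(Q^{\otimes(n-1)}\big)=(\det Q)^{(n-1)2^{n-2}}=(1-p_0-p_1)^{(n-1)2^{n-2}}$, a positive power of $\det Q$, hence nonzero iff $p_0+p_1\neq1$. (Equivalently, invertibility of $Q$ gives the explicit inverse $(Q^{\otimes(n-1)})^{-1}=(Q^{-1})^{\otimes(n-1)}$.)

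There is essentially no analytic difficulty here; the only thing requiring care is the bookkeeping. One must confirm that removing the $i$-th coordinate (where $G^*_{ii}=H^*_{ii}=0$ by convention) leaves exactly $n-1$ free binary coordinates, so that we get the $(n-1)$-th — not $n$-th — Kronecker power and dimension $2^{n-1}$, and that the chosen lexicographic ordering of the conditioning and conditioned strings is consistent with the convention defining the Kronecker product, so that the product factorization in (\ref{eq:CPM}) is literally the entrywise formula for $Q^{\otimes(n-1)}$.
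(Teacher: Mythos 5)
Your proof is correct, and it reaches the result by a slightly different route than the paper, whose own argument (in the Online Appendix) proceeds by induction on the number of coordinates. The substance is shared: both rest on the factorization in (\ref{eq:CPM}), with each link-specific factor equal to the entry $Q_{G^*_{ij},H^*_{ij}}$ of the $2\times 2$ matrix $Q$ in the statement. The paper builds the Kronecker power up one coordinate at a time, which makes the block structure and the ordering of realizations (as in Example~1) explicit at every stage; you instead identify $\mathbf P_{(H^*_{i\cdot}\vert G^*_{i\cdot})}$ with $Q^{\otimes(n-1)}$ in one step by noting that its $(G^*_{i\cdot},H^*_{i\cdot})$ entry, $\prod_{j\neq i}Q_{G^*_{ij},H^*_{ij}}$, is literally the defining entrywise formula of the Kronecker power. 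Your version is shorter, but it shifts the weight onto exactly the bookkeeping you flag: the lexicographic orderings of the $G^*_{i\cdot}$ and $H^*_{i\cdot}$ realizations must match the Kronecker convention, and dropping the diagonal entry leaves $n-1$ free binary coordinates, hence the $(n-1)$-th power and dimension $2^{n-1}$. For non-singularity, your computation $\det Q=(1-p_0)(1-p_1)-p_0p_1=1-p_0-p_1$ combined with $\det(A\otimes B)=(\det A)^{\dim B}(\det B)^{\dim A}$ (or, equivalently, $(Q^{\otimes(n-1)})^{-1}=(Q^{-1})^{\otimes(n-1)}$) is standard and fully adequate, and correctly yields invertibility precisely when $p_0+p_1\neq 1$.
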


The proof uses induction, and is included in the Online Appendix. The $2^{n-1}$-vector $\widetilde V_{ij} $ on the right-hand side of (\ref{eq:LS2}) has each component indexed by a realization of $G^*_{i\cdot}$ and equal to $G^*_{ij}/(\sum\nolimits_{k}G^*_{ik})\equiv G_{ij}$ (or $0$ if $\sum\nolimits_{k}G^*_{ik}=0$).
For each pair $j\neq i$, it is known by enumerating all elements in the support of $G^*_{i\cdot}$.
Thus, we can uniquely solve for the vector $\widetilde{W}_{ij}=\mathbf P_{(H^*_{i\cdot}\vert G^*_{i\cdot})}^{-1}\times\widetilde V_{ij}$ from (\ref{eq:LS2}), because $ \mathbf P_{(H^*_{i\cdot}\vert G^*_{i\cdot})} $ is invertible from $p_0+p_1<1$ in (A2).

By construct, $\widetilde W_{ij}$ is a transformation of $H^*_{i\cdot}$ using $(p_0,p_1)$, and satisfies (\ref{eq:LS}).
While constructing $\widetilde{W}_{ij}$, we enumerate all elements in the support of $G^*_{i\cdot}$. 
Yet this transformation is not a function of the actual $G^*_{i\cdot}$ that underlies the observed $H^*_{i\cdot}$. 
We apply this approach to construct all off-diagonal components in the matrix $\widetilde{W}(H^*;p_0,p_1)$.

\noindent \textbf{Example 1.}
To illustrate the idea, consider a simple case with $n=3$, and focus on the first row of the network, i.e., $G_{1.}^{\ast }$ and $H_{1.}^{\ast }$, for now. 
Since $G_{11}^{\ast }=H_{11}^{\ast }\equiv 0$, we have $2^{n-1}=4$ realizations of all possible $G_{1.}^{\ast }$ and $H_{1.}^{\ast}$. 
The probability mass for $H^*_{i\cdot}$ conditional on $G^*_{i\cdot}$ is:

\ \ \ 

\begin{tabular}{|l|l|l|l|l|}
\hline
$\mathbf{P}(H_{1.}^{\ast }|G_{1.}^{\ast })$ & $H_{1.}^{\ast }=$(0,0,0) & $%
H_{1.}^{\ast }=$(0,0,1) & $H_{1.}^{\ast }=$(0,1,0) & $H_{1.}^{\ast }=$(0,1,1)
\\ \hline
$G_{1.}^{\ast }=$(0,0,0) & $(1-p_{0})^{2}$ & $(1-p_{0})p_{0}$ & $%
p_{0}(1-p_{0})$ & $p_{0}^{2}$ \\ \hline
$G_{1.}^{\ast }=$(0,0,1) & $(1-p_{0})p_{1}$ & $(1-p_{0})(1-p_{1})$ & $%
p_{0}p_{1}$ & $p_{0}(1-p_{1})$ \\ \hline
$G_{1.}^{\ast }=$(0,1,0) & $p_{1}(1-p_{0})$ & $p_{1}p_{0}$ & $%
(1-p_{1})(1-p_{0})$ & $(1-p_{1})p_{0}$ \\ \hline
$G_{1.}^{\ast }=$(0,1,1) & $p_{1}^{2}$ & $p_{1}(1-p_{1})$ & $(1-p_{1})p_{1}$
& $(1-p_{1})^{2}$ \\ \hline
\end{tabular}

\ \ \ 

\noindent For this case, $\mathbf P_{(H^*_{i\cdot}\vert G^*_{i\cdot})}$ on the left-hand side of (\ref{eq:LS2}) is the matrix in the 4-by-4 table above. 

For $(i,j)=(1,2)$, the right-hand side of (\ref{eq:LS2}) is $\widetilde V_{12}=(0,0,1,1/2)'$.
This is because $G_{12}$ is $0$ when $G^*_{1\cdot}=(0,0,0)\text{ or }(0,0,1)$, $G_{12}=1$ when $G^*_{1\cdot}=(0,1,0)$, and $G_{12}=1/2$ when $G^*_{1\cdot}=(0,1,1)$.
Likewise, for $(i,j)=(1,3)$, the right-hand side of (\ref{eq:LS2}) is $\widetilde V_{13}=(0,1,0,1/2)'$. 

It then follows from the linear system (\ref{eq:LS2}) that
\[
\widetilde{W}_{12}=\mathbf P_{(H^*_{1\cdot}\vert G^*_{1\cdot})}^{-1}\times\widetilde V_{12}=\frac{1}{(1-p_{0}-p_{1})^{2}}\left( 
\begin{array}{c}
\frac{1}{2}p_{0}^{2}-p_{0}(1-p_{1}) \\ 
p_{0}p_{1}-\frac{1}{2}p_{0}(1-p_{0}) \\ 
(1-p_{1})(1-p_{0})-\frac{1}{2}p_{0}(1-p_{0}) \\ 
\frac{1}{2}(1-p_{0})^{2}-(1-p_{0})p_{1}%
\end{array}%
\right); 
\]
and $\widetilde{W}_{13}$ is similar to $\widetilde{W}_{12}$, only swapping the order of the second and third components.

For concreteness, consider a group in the sample, indexed by $s$, whose first row is $H^*_{s,1\cdot} = (0,1,1)$. 
The transformation for this row is $(0,\widetilde W_{12,4},\widetilde W_{13,4})$, where $\widetilde W_{1j,4}$ is the fourth component in the solution $\widetilde W_{1j}$ for $j\in\{2,3\}$.
(Recall $ H^*_{s,1\cdot} = (0,1,1) $ corresponds to the fourth column in $\mathbf P_{(H^*_{i\cdot}\vert G^*_{i\cdot})}$ above.)
By the same token, for another group $t$ in the sample with $H^*_{t,1\cdot} = (0,0,1)$, the transformed first row is $ (0, \widetilde W_{12,2}, \widetilde W_{13,2})$; 
and so on and so forth.  

We use the same logic to transform of all other rows. 
For instance, for the second row, we construct $\mathbf P(H^*_{2\cdot}\vert G^*_{2\cdot})$ and $\widetilde V_{21},\widetilde V_{23} \in \mathbb R^4$ similarly using $p_0,p_1$. 
    In fact, 
    these are \textit{identical} to $\mathbf P(H^*_{1\cdot}\vert G^*_{1\cdot})$ and $\widetilde V_{12},\widetilde V_{13}$ respectively, if the the support of $G^*_{2\cdot}$ (and $H^*_{2\cdot}$) are numerated in the order of (0,0,0), (0,0,1), (1,0,0), (1,0,1). 
In this case, for a group $s$ with $ H^*_{s,2\cdot} = (1,0,1)$, the transformed second row is $(\widetilde W_{21,4},0,\widetilde W_{23,4})$ where $\widetilde W_{2j,4}$ is the fourth component in the solution  $\widetilde W_{2j}=\mathbf P_{(H^*_{2\cdot}\vert G^*_{2\cdot})}^{-1} \widetilde V_{2j}$ for $j \in \{1,3\}$. 
\hfill (\textbf{End of Example 1.})

\ \ \ 

In summary, to handle link misclassification in LIM models, we construct a new adjusted structural form, similar to (\ref{SF1}) in Section \ref{sec:augmentBias}, only with $W(\cdot)$ replaced by $\widetilde{W}(\cdot)$.
Then, with (\ref{eq:LS}) holding by construction, we can apply the subsequent steps from Section \ref{sec:augmentBias} and \ref{sec:use_H} to identify the social effects in the linear in means model with link misclassification.


\section{Two-Step Estimation} \label{sec:estimation}

We now propose adjusted 2SLS estimators for the coefficients of structural effects $(\lambda ,\beta^{\prime })^{\prime }$. Consider a sample of $S$ independent groups with each group $s$ consisting of $n_{s}$ members. For each group $s$, the sample reports an $n_{s}$-by-$1$ vector of outcomes $y_{s}$, an $n_{s}$-by-$K$ matrix of regressors $X_{s}$, and either an $n_{s}$-by-$n_{s}$ unsymmetrized measure $H_{s}$, or two $n_{s} $-by-$n_{s}$ conditionally independent symmetrized measures $H_{s}^{(1)}$ and $H_{s}^{(2)}$.

\subsection{Closed-form estimation of misclassification rates} \label{subsec:est_MR}

To estimate misclassification rates, we apply the analog principle to the identification in Section \ref{sec:get_p}. We include closed-form estimates for completeness; the logic for these estimators is self-evident as presented in Section \ref{sec:get_p} and detailed proof is in Appendix A2.

First, consider the case in Section \ref{subsubsec: 2H}, where the sample reports two conditionally independent measures $H_s^{(1)}$ and $H_s^{(2)}$. To exploit identifying power from their joint distribution, let $H_{s,ij}^{(3)}\equiv \max \left\{H_{s,ij}^{(1)},H_{s,ij}^{(2)}\right\} $ for each $(i,j)$-th entry in $H_s^{(t)}$. 
For $t=1,2,3$, define $\hat{\psi}_{1}^{(t)}$:%
\begin{equation}
\hat{\psi}_{1}^{(t)}\equiv \frac{\sum_{s}\left[ \frac{1}{n_{s}(n_{s}-1)}\left(
\sum_{i\not=j}H_{s,ij}^{(t)}1\{\phi _{s,ij}=1\}\right) \right] 
}{\sum_{s}\left[ \frac{1}{n_{s}(n_{s}-1)}\left( \sum_{i\not=j}1\{\phi
_{s,ij}=1\}\right) \right] }\text{,}  \label{defn:psi_ta}
\end{equation}%
where $\phi _{s,ij}$ is short for $\phi _{ij}(X_{s})$. Define $\hat{\psi}_{0}^{(t)}$ by
replacing $\phi _{s,ij}=1$ with $\phi _{s,ij}=0$ in (\ref{defn:psi_ta}).

For instance, in our application, we define $\phi _{ij}(X_{s})$ as $1\{X_{s,i,k}=X_{s,j,k}\}$, where $X_{s,i,k}$ is the $k$-th component in $X_{s,i}$ that reports the individual $i$'s caste. 
In this case, $\hat{\psi}^{(t)}_1$ and $\hat{\psi}^{(t)}_0$ are, respectively, the fraction of same-caste and different-caste pairs that are linked according to the measures $H_s^{(t)}$ for $t=1,2,3$. Using the sample moments, we define:%
\begin{eqnarray*}
\widehat{\mathcal{C}_{2}} &\equiv &\frac{\hat{\psi}_{0}^{(1)}-\hat{\psi}_{1}^{(1)}}{%
\hat{\psi}_{0}^{(2)}-\hat{\psi}_{1}^{(2)}}\text{, \space  } 
\widehat{\mathcal{C}_{1}} \equiv \hat{\psi}_{1}^{(1)}-1+\frac{\hat{\psi}_{0}^{(3)}-%
\hat{\psi}_{1}^{(3)}}{\hat{\psi}_{0}^{(2)}-\hat{\psi}_{1}^{(2)}}-(1-\hat{\psi}_{1}^{(2)})%
\widehat{\mathcal{C}_{2}}\text{, } \\
\widehat{\mathcal{C}_{0}} &\equiv &\hat{\psi}_{1}^{(1)}+\hat{\psi}_{1}^{(2)}-\hat{\psi}%
_{1}^{(1)}\hat{\psi}_{1}^{(2)}-\hat{\psi}_{1}^{(3)}\text{, \space }\hat{\xi}\equiv (2\widehat{\mathcal{C}_{2}})^{-1}\left( \widehat{\mathcal{C}%
_{1}}+\sqrt{\left( \widehat{\mathcal{C}_{1}}\right) ^{2}+4\widehat{\mathcal{C%
}_{2}}\widehat{\mathcal{C}_{0}}}\right). 
\end{eqnarray*}
\noindent Our closed-form estimators for misclassification rates are then:%
\begin{equation*}
\hat{p}_{0}^{(1)}\equiv \hat{\psi}_{1}^{(1)}-\widehat{\mathcal{C}_{2}}\hat{\xi}%
\text{, \space  }\hat{p}_{0}^{(2)}\equiv \hat{\psi}_{1}^{(2)}-\hat{\xi}\text{, and } \hat{p}_{1}^{(t)}\equiv 1-\hat{p}_{0}^{(t)}-\frac{\hat{\psi}_{1}^{(t)}-\hat{p}%
_{0}^{(t)}}{\hat{\pi}_{1}}\text{ for }t=1,2\text{, where}
\end{equation*}%
\begin{equation*}
\hat{\pi}_{1}=\frac{\left( \hat{\psi}_{1}^{(1)}-\hat{p}_{0}^{(1)}\right) \left( 
\hat{\psi}_{1}^{(2)}-\hat{p}_{0}^{(2)}\right) }{\left( 1-\hat{p}_{0}^{(1)}\right)
\left( \hat{\psi}_{1}^{(2)}-\hat{p}_{0}^{(2)}\right) +\left( 1-\hat{p}%
_{0}^{(2)}\right) \left( \hat{\psi}_{1}^{(1)}-\hat{p}_{0}^{(1)}\right) -\left( 
\hat{\psi}_{1}^{(3)}-\hat{p}_{0}^{(3)}\right) }\text{,}
\end{equation*}%
with $\hat{p}_{0}^{(3)}\equiv \hat{p}_{0}^{(1)}+\hat{p}_{0}^{(2)}-\hat{p}%
_{0}^{(1)}\hat{p}_{0}^{(2)}$ by construction.

Next, consider the case in Section \ref{subsubsec: 1H}, where the sample reports a single, unsymmetrized measure $H$ with misclassification rates $(p_{0},p_{1})$ while the actual $G$ is known to be symmetric. 
Estimation of $(p_0,p_1)$ in this case follows from almost identical steps. 
For any \textit{unordered} pair $\{i,j\}$, define $H_{s,\{i,j\}}^{(1)}\equiv H_{s,ij}$ and $H_{s,\{i,j\}}^{(2)}\equiv H_{s,ji}$. By construction, $p_{1}^{(t)}=p_{1}$ and $p_{0}^{(t)}=p_{0}$ do not vary between $t=1,2$. 
Construct $H_{\{i,j\}}^{(3)}=\max \{H_{\{i,j\}}^{(1)},H_{\{i,j\}}^{(2)}\}$; define $\hat{\psi}_{1}^{(t)}$ and $\hat{\psi}_{0}^{(t)}$ in this case by replacing $\frac{1}{n_{s}(n_{s}-1)}$, $\sum_{i\not=j}$ and $H_{s,ij}^{(t)}$ in (\ref{defn:psi_ta}) with $\frac{2}{n_{s}(n_{s}-1)}$, $\sum_{i>j}$, and $H_{s,\{i,j\}}^{(t)}$ respectively.
Replace $\widehat{\mathcal{C}_{2}}$ with $1$, and replace $\hat{\psi}_{1}^{(1)},%
\hat{\psi}_{1}^{(2)}$ with their average in $\widehat{\mathcal{C}_{1}},\widehat{\mathcal{C}_{0}}$ and all subsequent expressions. These lead to a single pair of estimates $(\hat{p}_{0},\hat{p}_{1})$.

We derive the limiting distribution of these estimators using a standard delta method. Consider the case with a single unsymmetrized measure in Section \ref{subsubsec: 1H}. 
For each group $s$, define $\upsilon_{1s,1}\equiv \frac{2}{n_{s}(n_{s}-1)}\sum_{i>j}H_{s,\{i,j\}}1\{\phi_{s,\{i,j\}}=1\}$ and $\upsilon _{2s,1}\equiv \frac{2}{n_{s}(n_{s}-1)}\sum_{i>j}1\{\phi _{s,\{i,j\}}=1\}$; define $\upsilon _{1s,0}$, $\upsilon _{2s,0}$ analogously be replacing $\phi_{s,\{i,j\}}=1$ with $\phi_{s,\{i,j\}}=0$. 
Let $\upsilon_{s}\equiv (\upsilon _{1s,1},\upsilon _{2s,1},\upsilon_{1s,0},\upsilon _{2s,0})^{\prime }$. 
The estimator $\hat{p}=(\hat{p}_{0},\hat{p}_{1})$ is a closed-form function of $\upsilon _{s}$; it has an asymptotic linear presentation: $\sqrt{S}(\widehat{p}-p)=\tfrac{1}{\sqrt{S}}\sum\nolimits_{s}\underset{\equiv
\tau _{s}}{\underbrace{\mathcal{J}_{0}\times \left[ \upsilon _{s}-E(\upsilon
_{s})\right] }}+o_{p}(1)$, where $\mathcal{J}_{0}$ denotes the Jacobian matrix of $\hat{p}$ w.r.t. the sample averages of $\upsilon _{s}$, evaluated at population counterparts. 
Thus $\sqrt{S}(\widehat{p}-p)$ converges in distribution to a multivariate normal distribution with zero means and a covariance matrix $E(\tau _{s}\tau _{s}^{\prime })$. 
Limiting distribution for the case with two measures in Section \ref{subsubsec: 2H} follows from the same type of arguments.

\subsection{Adjusted 2SLS using a single unsymmetrized measure} \label{subsec:A2SLS_1H}
First consider the setting in Section \ref{subsec:use_1H}, where the sample reports a single \textit{unsymmetrized} measure $H_{s}$ for each group. 
Let $p\equiv \left( p_{0},p_{1}\right) ^{\prime }$. For each group $s$, define $R_{s}(p)\equiv \left( W_{s}(p)y_{s},X_{s}\right)$ and $Z_{s}\equiv \left( H_{s}^{\prime }X_{s},X_{s}\right)$, where $W_{s}(p)\equiv \lbrack H_{s}-p_0(\iota _{s}\iota'
_{s}-I_{s})]/(1-p_{0}-p_{1})$. Let $N\equiv \sum_{s=1}^{S}n_{s}$, and $Y$ be
an $N$-by-1 vector that stacks $y_{s}$ for $s=1,...,S$. Let $\mathbf{R}(p)$
be an $N$-by-$(K+1)$ matrix that stacks $R_{s}(p)$, and $%
\mathbf{Z}$ an $N$-by-$2K$ matrix that stacks $Z_{s}$ for all $s$. Let  $\mathbf{A\equiv Z}^{\prime }\mathbf{R}(\widehat{p})\text{ and }\mathbf{B}%
\equiv \mathbf{Z}^{\prime }\mathbf{Z}\text{, with }\hat{p}\equiv \left( \hat{%
p}_{0},\hat{p}_{1}\right) ^{\prime }$. Our
adjusted 2SLS estimator for $\theta \equiv (\lambda ,\beta ^{\prime
})^{\prime }$ is:%
\begin{equation}
\widehat{\theta }\equiv \left( \mathbf{A}^{\prime }\mathbf{B}^{-1}\mathbf{A}%
\right) ^{-1}\mathbf{A}^{\prime }\mathbf{B}^{-1}\left( \mathbf{Z}^{\prime
}Y\right) \text{.}  \label{2sls}
\end{equation}%

We now present the limiting distribution of $\widehat{\theta }$ as $%
S\rightarrow \infty $. Define $\Sigma _{0}\equiv \left( A_{0}^{\prime }B_{0}^{-1}A_{0}\right)
^{-1}A_{0}^{\prime }B_{0}^{-1}$,
where %
$ A_{0} \equiv \lim_{S\rightarrow \infty }\tfrac{1}{S}\sum\nolimits_{s=1}^{S}E%
\left[ Z_{s}^{\prime }R_{s}(p)\right] \text{ and }B_{0}\equiv
\lim_{S\rightarrow \infty }\tfrac{1}{S}\sum\nolimits_{s=1}^{S}E(Z_{s}^{%
\prime }Z_{s})\text{.}$
For each group $s$ and individual $i\leq n_{s}$, let $R_{s,i}(p)$ denote the
corresponding row in $\mathbf{R}(p)$, and $\triangledown _{p}R_{s,i}(p)$ be
the $(K+1)$-by-$2$ Jacobian of $R_{s,i}(p)$ with respect to $p$.%
\footnote{%
    The last $K$ rows in $\triangledown _{p}R_{s,i}(p)$ are 0; its 1st row is the $i$-th row in $\left( \frac{{H}_{s}-(1-p_1)(\iota_s \iota'_s -I_{s})}{(1-p_{0}-p_{1})^2}y_{s},\frac{{H}_{s}-p_0(\iota_s \iota'_s -I_{s})}{(1-p_{0}-p_{1})^2}y_{s}\right) $.} 

Let $\triangledown _{p}\left[R_{s}(p)\theta \right] $ be an $n_{s}$-by-$2$ matrix with each row $i\leq n_{s}$ being $\theta ^{\prime }\triangledown _{p}R_{s,i}(p)$; let $\triangledown _{p}\left[ \mathbf{R}(p)\theta \right] $ be an $N$-by-$2$ matrix formed by stacking these $n_{s}$-by-$2$\ matrices over $s=1,2,...,S$.
Define $\kappa _{s}\equiv Z_{s}^{\prime }v_{s}-F_{0}\tau _{s}$, where $v_{s}$ is the $n_{s}$-by-1 vector of composite errors in the feasible structural form (\ref%
{SF1}), and $F_{0}\equiv \lim_{S\rightarrow \infty
}S^{-1}\sum\nolimits_{s=1}^{S}E\left\{ Z_{s}^{\prime }\triangledown \left[
R_{s}(p)\theta \right] \right\} $. Intuitively, $F_{0}$ illustrates how the moment condition in this adjusted 2SLS depends on misclassification rates $p$, and the added term ``$-F_{0}\tau _{s}$'' in the influence function accounts for the first-stage estimation error in 
$\widehat{p}$.

\begin{proposition} \label{pn:Asy_2SLS}
Suppose (A1)-(A4) hold, and (IV-R) is satisfied with $%
Z\equiv (H^{\prime }X,X)$. Then under the regularity conditions (REG) in the Online Appendix,
\begin{equation*}
\sqrt{S}\left( \widehat{\theta }-\theta \right) \overset{d}{\longrightarrow }%
\mathcal{N}(0,\Sigma _{0}E(\kappa _{s}\kappa _{s}^{\prime })\Sigma
_{0}^{\prime })\text{.}
\end{equation*}%
\end{proposition}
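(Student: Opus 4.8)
The plan is to treat $\widehat{\theta}$ as a standard two-step GMM / 2SLS estimator in which the first step estimates the nuisance parameter $p$ and the second step plugs $\widehat{p}$ into the moment conditions $E[Z_s' v_s(\theta,p)] = 0$, where $v_s(\theta,p) = y_s - \lambda W_s(p) y_s - X_s\beta$. I would begin from the closed-form expression $\widehat{\theta} = (\mathbf{A}'\mathbf{B}^{-1}\mathbf{A})^{-1}\mathbf{A}'\mathbf{B}^{-1}(\mathbf{Z}'Y)$ and substitute $\mathbf{Z}'Y = \mathbf{Z}'\mathbf{R}(p)\theta + \mathbf{Z}'V$, where $V$ stacks the true composite errors $v_s = v_s(\theta,p)$. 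This gives
\begin{equation*}
\sqrt{S}(\widehat{\theta}-\theta) = (\mathbf{A}'\mathbf{B}^{-1}\mathbf{A})^{-1}\mathbf{A}'\mathbf{B}^{-1}\cdot \tfrac{1}{\sqrt{S}}\big[\mathbf{Z}'V + \mathbf{Z}'(\mathbf{R}(\widehat{p}) - \mathbf{R}(p))\theta\big],
\end{equation*}
using $\mathbf{Z}'\mathbf{R}(\widehat p)\widehat\theta$ identities carefully (one must be slightly careful that $\mathbf{A}$ itself carries $\widehat p$, but that only affects the outer, consistently-estimated factor). The first bracketed term is a sum of mean-zero i.i.d. terms $Z_s'v_s$ by Lemma \ref{lm:exogenous_v} and the independence of groups. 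The second term is the first-stage correction: a mean-value expansion of $\mathbf{R}(\widehat p)\theta - \mathbf{R}(p)\theta$ in $\widehat p$ around $p$ yields $\triangledown_p[\mathbf{R}(\bar p)\theta](\widehat p - p)$ for some $\bar p$ on the segment, and substituting the asymptotic linear representation $\sqrt{S}(\widehat p - p) = \tfrac{1}{\sqrt S}\sum_s \tau_s + o_p(1)$ from Section \ref{subsec:est_MR} converts it into $F_0 \cdot \tfrac{1}{\sqrt S}\sum_s \tau_s + o_p(1)$, where $F_0 = \lim_S S^{-1}\sum_s E\{Z_s'\triangledown[R_s(p)\theta]\}$.

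The key steps, in order, are: (i) establish consistency of $\widehat\theta$ — this follows from $\widehat p \overset{p}{\to} p$ (Section \ref{subsec:est_MR}), continuity of $W_s(\cdot)$ in $p$ on a neighborhood where $1-p_0-p_1$ is bounded away from zero, a uniform LLN so that $\tfrac{1}{S}\mathbf{A} \to A_0$ and $\tfrac1S\mathbf{B}\to B_0$, the rank condition (IV-R) ensuring $A_0,B_0$ have full column rank, and Proposition \ref{pn:1}/\ref{pn:HX} identifying $\theta$; (ii) the decomposition above; (iii) a Taylor expansion of the plug-in term and substitution of the first-stage influence function, collecting everything into $\kappa_s \equiv Z_s'v_s - F_0\tau_s$; (iv) a multivariate Lindeberg–Feller CLT applied to $\tfrac{1}{\sqrt S}\sum_s \kappa_s$ (the $\kappa_s$ are independent across $s$, mean zero, with finite second moments by (REG)), giving $\tfrac{1}{\sqrt S}\sum_s \kappa_s \overset{d}{\to}\mathcal N(0, E(\kappa_s\kappa_s'))$; and (v) Slutsky, combining with $(\mathbf{A}'\mathbf{B}^{-1}\mathbf{A})^{-1}\mathbf{A}'\mathbf{B}^{-1} \overset{p}{\to} \Sigma_0 = (A_0'B_0^{-1}A_0)^{-1}A_0'B_0^{-1}$, to obtain the stated sandwich covariance $\Sigma_0 E(\kappa_s\kappa_s')\Sigma_0'$. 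The regularity conditions (REG) in the Online Appendix are what justify the uniform LLN, the interchange of limits, the moment-existence needed for the CLT, and the negligibility of the $o_p(1)$ remainders (including controlling $\triangledown_p[\mathbf{R}(\bar p)\theta] - \triangledown_p[\mathbf{R}(p)\theta]$ via $\widehat p \to p$ and boundedness of second derivatives of $W_s(\cdot)$, noting group sizes $n_s$ are uniformly bounded).

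The main obstacle is bookkeeping the first-stage correction cleanly: one must verify that the expansion of the nonlinear map $p\mapsto W_s(p)y_s$ is legitimate (the denominator $1-p_0-p_1$ stays bounded away from zero on a shrinking neighborhood of the truth with probability approaching one, so the Jacobian and Hessian are uniformly bounded given bounded $n_s$ and suitable moments on $y_s$), and that the Jacobian term indeed aggregates to the deterministic matrix $F_0$ by a LLN while the $\widehat p$-dependence in $\mathbf A$ elsewhere washes into the consistent outer factor. A secondary technical point is checking that $\kappa_s$ has a nonsingular (or at least well-defined finite) second moment and that $Z_s'v_s$ and $\tau_s$ are jointly measurable functions of $(y_s,X_s,H_s)$ so that their sum is i.i.d. across groups; both are immediate from the construction once (REG) supplies the requisite moment bounds. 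Everything else — the CLT, the Slutsky steps, the identification of the probability limits — is routine.
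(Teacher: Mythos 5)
Your proposal is correct and takes essentially the same route as the paper: a standard two-step 2SLS expansion in which the plug-in of $\widehat{p}$ is linearized via the first-stage influence function $\tau_s$, yielding the corrected influence function $\kappa_s = Z_s'v_s - F_0\tau_s$, followed by a CLT across independent groups and Slutsky with the (suitably $S$-normalized) outer factor converging to $\Sigma_0$. One small slip to fix: since $Y=\mathbf{R}(p)\theta+V$, the decomposition should carry $\mathbf{Z}'\left[\mathbf{R}(p)-\mathbf{R}(\widehat{p})\right]\theta$ rather than $\mathbf{Z}'\left[\mathbf{R}(\widehat{p})-\mathbf{R}(p)\right]\theta$; it is precisely this sign that delivers the ``$-F_0\tau_s$'' term in the $\kappa_s$ you correctly state at the end.
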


Note that this limiting distribution includes group-level clustering. The conditions in (\textit{REG}), presented in the Online Appendix, are needed for
applying the law of large numbers, the central limit theorem, and the delta
method to observations from independent groups with heterogeneous sizes. Standard errors for $\widehat{\theta }$\ (which are clustered at the group level) are calculated by replacing $A_{0}$%
, $B_{0}$, $F_{0}$, and $E(\kappa _{s}\kappa _{s}^{\prime })$ with their
sample analogs: 
\begin{equation*}
\widehat{A}=\tfrac{1}{S}\sum\nolimits_{s}Z_{s}^{\prime }R_{s}(\widehat{p})%
\text{, \space }\widehat{B}=\tfrac{1}{S}\sum\nolimits_{s}Z_{s}^{\prime }Z_{s}\text{, \space }\widehat{\kappa }_{s}=Z_{s}^{\prime }\left( y_{s}-R_{s}(\widehat{p})%
\widehat{\theta }\right) -\widehat{F}\widehat{\tau }_{s}\text{.}
\end{equation*}

One could combine the two steps in Section \ref{subsec:est_MR} and \ref{subsec:A2SLS_1H} into a single GMM step by stacking the moments used in these two sections. 
This would allow one to estimate $\theta$ jointly with $(p_0,p_1)$, and standard GMM asymptotics could be applied. However, this GMM requires numerically solving a nonlinear optimization problem, while the two-step method yields a closed-form estimator that is straightforward to compute with no numerical searching, thus providing a computational advantage over the GMM alternative with numerical stability. 

\subsection{Adjusted 2SLS using multiple measures} \label{subsec: est_2H}

We now apply the same idea for estimation under the other setting in Section \ref{sec:use_2H}, where the sample reports two conditionally independent measures $H_{s}^{(t)}$ for $t=1,2$, with misclassification rates $p_{0}^{(t)},p_{1}^{(t)}$ for $t=1,2$ respectively.
These measures may either be \textit{symmetrized} or \textit{unsymmetrized}.
To reiterate, when $H_{s}^{(t)}$ are \emph{unsymmetrized}, our estimation method applies \emph{regardless of} whether the actual adjacency $G$ is symmetric or not; 
on the other hand, when $H_{s}^{(t)}$ are symmetrized, (A1) holds only if $G$ is symmetric.

As noted in Section \ref{sec:use_2H}, these measures lead to two feasible structural forms:%
\begin{equation}
y_{s}=R_{s}^{(t)}\theta +v_{s}^{(t)}\text{ for }t=1,2\text{,}  \label{eq:2SF}
\end{equation}%
where $\theta \equiv (\lambda ,\beta ^{\prime })^{\prime }$, $R_{s}^{(t)}\equiv \left( W_{s}^{(t)}y_{s},X_{s}\right) $ and $v_{s}^{(t)}\equiv \varepsilon _{s}+\lambda \left( G_{s}-W_{s}^{(t)}\right) y_{s}$, with $W_{s}^{(t)}\equiv \frac{H_{s}^{(t)}-p_0^{(t)}(\iota _{s}\iota' _{s}-I_{s})}{1-p_{0}^{(t)}-p_{1}^{(t)}}$. 
This leads to two sets of moment conditions:
\begin{eqnarray*}
    E\left[(H_s^{(3-t)}X,X)^{\prime}(y_s-\lambda W^{(t)}_s y_s - X_s\beta)\right] 
     &=& E\left[(H_s^{(3-t)}X,X)^{\prime}v^{(t)}_s\right] = 0 \text{ for }t=1,2,
\end{eqnarray*}
with instruments $Z_{s}^{(t)}\equiv \left(H_{s}^{(3-t)}X_{s},X_{s}\right)$ for $t=1,2$. 
Stack the moments by defining:%
\begin{equation*}
\tilde{Z}_{s}\equiv \left( 
\begin{array}{cc}
Z_{s}^{(1)} & 0 \\ 
0 & Z_{s}^{(2)}%
\end{array}%
\right) \text{; }\tilde{y}_{s}\equiv \left( 
\begin{array}{c}
y_{s} \\ 
y_{s}%
\end{array}%
\right) \text{; }\tilde{R}_{s}\equiv \left( 
\begin{array}{c}
R_{s}^{(1)} \\ 
R_{s}^{(2)}%
\end{array}%
\right) \text{.}
\end{equation*}%
Instrument exogeneity then implies: $E\left[ \tilde{Z}_{s}^{\prime }(\tilde{y}_{s}-\tilde{R}_{s}\theta )\right] =0$. This moment condition identifies $\theta $, provided $E(\tilde{Z}_{s}^{\prime }\tilde{R}_{s})$ has full rank. 
Using arguments similar to Proposition \ref{pn:IVR} in Section \ref{subsec:use_1H}, we can derive analogous sufficient conditions for this rank condition. 

We define a system, or stacked adjusted 2SLS (S2SLS) estimator as follows. 
Let $\mathbf{\tilde{Z}}$ be a $2N$-by-$4K$ matrix constructed by vertically stacking $S$ matrices $(\tilde{Z}_{s})_{s\leq S}$. 
Likewise, construct a $2N$-by-$(K+1)$ matrix$\ \mathbf{\tilde{R}}$ by stacking $(\tilde{R}_{s})_{s\leq S}$, where $p_{0}^{(t)}$ and $p_{1}^{(t)}$ are replaced by estimates $\hat{p}_{0}^{(t)}$ and $ \hat{p}_{1}^{(t)}$, and construct a $2N$-by-$1$ vector $\mathbf{\tilde{y}}$ by stacking $(\tilde{y}_{s})_{s\leq S}$. 
The S2SLS estimator is%
\begin{equation}
\tilde{\theta}\equiv \mathbf{[\tilde{R}}^{\prime }\mathbf{\tilde{Z}(\tilde{Z}%
}^{\prime }\mathbf{\tilde{Z})}^{-1}\mathbf{\tilde{Z}}^{\prime }\mathbf{%
\tilde{R}]}^{-1}\mathbf{\tilde{R}}^{\prime }\mathbf{\tilde{Z}(\tilde{Z}}%
^{\prime }\mathbf{\tilde{Z})}^{-1}\mathbf{\tilde{Z}}^{\prime }\mathbf{\tilde{%
y}}\text{.}  \label{eq:S2SLS}
\end{equation}%
This provides us with a single estimator that exploits both sets of instruments in the two structural forms in (\ref{eq:2SF}). 
Similar to $\hat{\theta}$ in (\ref{2sls}), we can construct the standard error for $\tilde{\theta}$ that accounts for estimation errors in $\hat{p}_{0}^{(t)},\hat{p}_{1}^{(t)}$ for $t=1,2$. We omit details here for brevity.

\section{Simulation} \label{sec:simulation}

We use the Monte Carlo simulation to examine the finite-sample performance of the adjusted 2SLS estimator proposed in Section \ref{sec:estimation}. The DGP is:%
\[
y_{s}=\lambda G_{s}y_{s}+X_{s}\beta +\alpha _{s}+\varepsilon _{s},\text{with 
}s=1,...,S. 
\]
\noindent Each member $i$ in group $s$ has individual characteristics $X_{s,i}\equiv(X_{s,i,1},X_{s,i,2})\in \mathbb{R}^{2}$, drawn independently across $i$ and $s$, from a Bernoulli with success probability 0.5 and an $N(0,1)$ respectively.
The error term $\varepsilon _{s,i}$ is also drawn from $N(0,1)$ independently across $i$ and $s$. 
The coefficients for social effects are $\lambda =0.05$ and $\beta =(\beta_{1},\beta _{2})=(1,2)$. The group-level fixed effect is $\alpha _{s}=5\overline{X}_{s}\beta -1.5+e_{s}$, where $\overline{X}_{s}$ is the group average of $X_{s}$ and $e_{s}$ is drawn from $N(0,1)$ independently across $i$ and $s$. 
This construction allows the fixed effects $\alpha _{s}$ to be correlated with group demographics $\overline X_{s}\beta $. 
The dyadic link formation rates are%
\begin{eqnarray*}
\pi _{1} =\Pr \{G_{s,ij}=1|X_{s,i,1}=X_{s,j,1}\}=0.2\text{ and }\pi _{0} =\Pr \{G_{s,ij}=1|X_{s,i,1}\neq X_{s,j,1}\}=0.1.
\end{eqnarray*}%
For $t=1,2$, we generate the following measure $H_{s}^{(t)}$ with link misclassification:
\[
H_{s,ij}^{(t)}=m_{ij,1}^{(t)}\cdot 1\{G_{s,ij}=1\}+(1-m_{ij,0}^{(t)})\cdot 1\{G_{s,ij}=0\}, 
\]%
where $m_{ij,0}^{(t)}$ and $m_{ij,1}^{(t)}$ are drawn independently across ordered pairs $(i,j)$ from Bernoulli distributions with success probabilities $1-p_{0}^{(t)}$ and $1-p_{1}^{(t)}\ $ respectively. 

To see how various estimators behave in the presence of misclassified links, we use two sets of misclassification rates.  In the first set, the misclassification rates are \textit{small}:%
\begin{eqnarray*}
p_{0}^{(1)} &=&\Pr \{H_{s,ij}^{(1)}=1|G_{s,ij}=0,X\}=0.10,\text{ } 
p_{1}^{(1)} =\Pr \{H_{s,ij}^{(1)}=0|G_{s,ij}=1,X\}=0.20; \\
p_{0}^{(2)} &=&\Pr \{H_{s,ij}^{(2)}=1|G_{s,ij}=0,X\}=0.08,\text{ } 
p_{1}^{(2)} =\Pr \{H_{s,ij}^{(2)}=0|G_{s,ij}=1,X\}=0.16.
\end{eqnarray*}
In the second set, we specify \textit{large} misclassification rates that are twice as high: $p_{0}^{(1)} =0.20,\text{ } p_{1}^{(1)} =0.40;\text{ } p_{0}^{(2)}=0.16,\text{ } 
p_{1}^{(2)} =0.32$. Each group has the same size $n_{s}=n$. We experiment with group sizes $n\in\{25,50\}$ and the number of groups $S\in\{50,100\}$. The total sample size is $nS.$
For each combination of $\{n,S\},$ we generate $Q=100$ samples. 

Table 1(a) reports the mean and the standard deviation of the estimates for $\pi_0, \pi_1$, $p_0^{(1)}, p_1^{(1)}$, $p_0^{(2)}, p_1^{(2)}$ based on their empirical distribution across these $100$ samples.

\begin{center}
 \renewcommand{\arraystretch}{0.7} 
\textbf{Table 1(a): Estimates of Misclassification Rates and Network Parameters}

\bigskip
\begin{tabular}{l|llllll}
\hline \hline
Small & $\pi _{1}=$0.2 & $\pi _{0}=$0.1 & $p_{0}^{(1)}=$0.1 & $p_{1}^{(1)}=$%
0.2 & $p_{0}^{(2)}=$0.08 & $p_{1}^{(2)}=$0.16 \\ \hline
$S=50$ & \multicolumn{1}{|c}{$\widehat{\pi }_{1}$} & \multicolumn{1}{c}{$%
\widehat{\pi }_{0}$} & \multicolumn{1}{c}{$\widehat{p}_{0}^{(1)}$} & 
\multicolumn{1}{c}{$\widehat{p}_{1}^{(1)}$} & \multicolumn{1}{c}{$\widehat{p}%
_{0}^{(2)}$} & \multicolumn{1}{c}{$\widehat{p}_{1}^{(2)}$} \\ 
$n=25$ & 0.2009 & 0.1015 & 0.0990 & 0.2020 & 0.0792 & 0.1638 \\ 
& (0.0123) & (0.0081) & (0.0061) & (0.0301) & (0.0059) & (0.0349) \\ 
$n=50$ & 0.1996 & 0.0998 & 0.1002 & 0.2000 & 0.0800 & 0.1573 \\ 
& (0.0063) & (0.0042) & (0.0031) & (0.0150) & (0.0031) & (0.0186) \\ 
\hline
$S=100$ &  &  &  &  &  &  \\ 
$n=25$ & 0.1994 & 0.0997 & 0.0996 & 0.1968 & 0.0804 & 0.1588 \\ 
& (0.0099) & (0.0060) & (0.0042) & (0.0241) & (0.0047) & (0.0245) \\ 
$n=50$ & 0.2006 & 0.1006 & 0.0997 & 0.2011 & 0.0798 & 0.1608 \\ 
& (0.0043) & (0.0029) & (0.0020) & (0.0099) & (0.0019) & (0.0112) \\ 
\hline
Large & $\pi _{1}=$0.2 & $\pi _{0}=$0.1 & $p_{0}^{(1)}=$0.2 & $p_{1}^{(1)}=$%
0.4 & $p_{0}^{(2)}=$0.16 & $p_{1}^{(2)}=$0.32 \\ \hline
$S=50$ & \multicolumn{1}{|c}{$\widehat{\pi }_{1}$} & \multicolumn{1}{c}{$%
\widehat{\pi }_{0}$} & \multicolumn{1}{c}{$\widehat{p}_{0}^{(1)}$} & 
\multicolumn{1}{c}{$\widehat{p}_{1}^{(1)}$} & \multicolumn{1}{c}{$\widehat{p}%
_{0}^{(2)}$} & \multicolumn{1}{c}{$\widehat{p}_{1}^{(2)}$} \\ 
$n=25$ & 0.2032 & 0.1039 & 0.1994 & 0.4012 & 0.1586 & 0.3191 \\ 
& (0.0370) & (0.0260) & (0.0092) & (0.0442) & (0.0112) & (0.0654) \\ 
$n=50$ & 0.1987 & 0.0994 & 0.2005 & 0.3990 & 0.1602 & 0.3137 \\ 
& (0.0174) & (0.0122) & (0.0045) & (0.0224) & (0.0052) & (0.0330) \\ 
 \hline
$S=100$ &  &  &  &  &  &  \\ 
$n=25$ & 0.1987 & 0.0993 & 0.1995 & 0.3943 & 0.1604 & 0.3142 \\ 
& (0.0257) & (0.0173) & (0.0062) & (0.0322) & (0.0075) & (0.0452) \\ 
$n=50$ & 0.2011 & 0.1012 & 0.1998 & 0.4013 & 0.1594 & 0.3189 \\ 
& (0.0123) & (0.0090) & (0.0032) & (0.0159) & (0.0039) & (0.0216) \\ 

\hline \hline
\end{tabular}

Note: standard deviations based on 100 simulated samples are reported in parentheses.

\end{center}

From Table 1(a), we can see the misclassification rates $(p^{(t)}_0,p^{(t)}_1)$, as well as the network parameters $(\pi_0,\pi_1)$, are accurately estimated in all settings. 
For a fixed group number $S$, the standard deviations decrease at the rate $n$. 
For a fixed groups size $n$, the standard deviations decrease at the rate $\sqrt{S}$, as the size of the sample used for estimation is  $S\times n^2$. 
The standard deviations of these estimates are also larger when the misclassification rates are higher.

Then, we compare five estimators based on three versions of 2SLS estimation:  naive,  adjusted, and oracle (infeasible).
The naive 2SLS uses the noisy measure $H$ in place of the true network $G$, which means it uses $H_{s}y_{s}$ as an endogenous regressor and $H_{s}X_{s}$ as its instrument. 
The adjusted 2SLS estimator is what we proposed in Section \ref{sec:estimation}.
It requires two steps. First, estimate the misclassification rates based on $(H^{(1)},H^{(2)},X)$. Second, construct 
\[
W_{s}^{(t)}=\frac{H_{s}^{(t)}-\widehat{p}_{0}^{(t)}(\iota_{n}\iota'_{n}-I_{n})}{1-\widehat{p}_{0}^{(t)}-\widehat{p}_{1}^{(t)}}\text{ for }t=1,2, 
\]%
based on the first-step estimates $\widehat{p}_{0}^{(t)}$ and $\widehat{p}_{1}^{(t)}$, and apply 2SLS using $W_{s}^{(t)}y$ as an endogenous regressor and $W_{s}^{(t^{\prime})}X$ as its instrument where $t\neq t^{\prime }$. 
The oracle (infeasible) 2SLS uses the true $G_{s}y_{s}$, as an endogenous regressor, and uses $G_{s}X_{s}$ as its instrument.

Across the simulated samples indexed by $q=1,2,...,Q$, we record the empirical distribution of these estimates of $(\lambda ,\beta _{1},\beta _{2})$. 
Tables 1(b) and (c) report the average estimates, and standard deviations from this empirical distribution under different misclassification rates.

Simulation results in Tables 1(b) and (c) demonstrate the following patterns. 
First, the naive method ignoring the misclassification in $H$ has serious bias when estimating the peer effects $\lambda =0.05$. 
With lower misclassification rates, it estimates $\lambda $ at around 0.028 using $H^{(1)}$ and around 0.031 using $H^{(2)}$, about $40\%$ bias; with higher misclassification rates, it estimates $\lambda $ at around 0.013 using $H^{(1)}$ and around 0.018 using $H^{(2)}$, about $70\%$ bias. 
When estimating $\beta $, the naive estimation also shows bias, but relatively smaller than the bias in $\lambda$. 

Second, our proposed adjusted 2SLS can estimate $(\lambda,\beta _{1},\beta _{2})$ with high accuracy. The average estimates are very close to the oracle estimates, albeit with larger standard deviations. This is of course due to the noise from link misclassification as well as estimation errors in the initial estimates of the misclassification rates.

Third, with the fixed group size $n$, as the group number increases from $S=50$ to $100$, the standard deviation decreases by around 1/$\sqrt{2}$, consistent with our theory of $\sqrt{S}$ asymptotics.

\renewcommand{\arraystretch}{0.7}

{\centering
\textbf{Table 1(b): Peer Effects Estimation: Small Misclassification}
\bigskip

\begin{tabular}{c|cc|cc|c|cc|cc|c}
\hline \hline
& \multicolumn{5}{|c|}{${\small S=50}$} & \multicolumn{5}{|c}{${\small S=100}
$} \\ \hline
& \multicolumn{2}{|c}{\small Naive} & \multicolumn{2}{|c|}{\small Adjusted}
& {\small Oracle} & \multicolumn{2}{|c}{\small Naive} & \multicolumn{2}{|c|}%
{\small Adjusted} & {\small Oracle} \\ 
{\small Reg.} & ${\small H}^{(1)}{\small y}$ & ${\small H}^{(2)}{\small y}$
& $W^{(1)}{\small y}$ & $W^{(2)}{\small y}$ & ${\small Gy%
}$ & ${\small H}^{(1)}{\small y}$ & ${\small H}^{(2)}{\small y}$ & $W^{(1)}{\small y}$ & $W^{(2)}{\small y}$ & ${\small Gy}$ \\ 
{\small IV} & ${\small H}^{(1)}{\small X}$ & ${\small H}^{(2)}{\small X}$ & $%
{\small H}^{(2)}{\small X}$ & ${\small H}^{(1)}{\small X}$ & ${\small GX}$ & 
${\small H}^{(1)}{\small X}$ & ${\small H}^{(2)}{\small X}$ & ${\small H}%
^{(2)}{\small X}$ & ${\small H}^{(1)}{\small X}$ & ${\small GX}$ \\ \hline
${\small n=25}$ & \multicolumn{10}{c}{\small Expected \# of peers 3.75} \\ 
\hline
${\small \lambda =0.05}$ & {\small 0.0259} & {\small 0.0307} & {\small 0.0490%
} & {\small 0.0467} & {\small 0.0508} & {\small 0.0283} & {\small 0.0324} & 
{\small 0.0517} & {\small 0.0511} & {\small 0.0489} \\ 
{\small s.t.d} & {\small (0.007)} & {\small (0.006)} & {\small (0.012)} & 
{\small (0.014)} & {\small (0.005)} & {\small (0.005)} & {\small (0.005)} & 
{\small (0.008)} & {\small (0.009)} & {\small (0.007)} \\ 
${\small \beta }_{1}{\small =1}$ & {\small 1.0613} & {\small 1.0523} & 
{\small 1.0113} & {\small 1.0131} & {\small 1.0108} & {\small 1.0614} & 
{\small 1.0540} & {\small 1.0102} & {\small 1.0117} & {\small 1.0112} \\ 
{\small s.t.d} & {\small (0.078)} & {\small (0.081)} & {\small (0.079)} & 
{\small (0.086)} & {\small (0.062)} & {\small (0.064)} & {\small (0.066)} & 
{\small (0.062)} & {\small (0.064)} & {\small (0.078)} \\ 
${\small \beta }_{2}{\small =2}$ & {\small 1.9978} & {\small 1.9983} & 
{\small 1.9950} & {\small 1.9951} & {\small 2.0018} & {\small 2.0064} & 
{\small 2.0058} & {\small 2.0041} & {\small 2.0027} & {\small 1.9946} \\ 
{\small s.t.d} & {\small (0.046)} & {\small (0.046)} & {\small (0.047)} & 
{\small (0.047)} & {\small (0.031)} & {\small (0.032)} & {\small (0.032)} & 
{\small (0.034)} & {\small (0.032)} & {\small (0.046)} \\ \hline
${\small n=50}$ & \multicolumn{10}{c}{\small Expected \# of peers 7.5} \\ 
\hline
${\small \lambda =0.05}$ & {\small 0.0274} & {\small 0.0312} & {\small 0.0492%
} & {\small 0.0497} & {\small 0.0499} & {\small 0.0274} & {\small 0.0310} & 
{\small 0.0495} & {\small 0.0493} & {\small 0.0499} \\ 
{\small s.t.d} & {\small (0.003)} & {\small (0.004)} & {\small (0.006)} & 
{\small (0.006)} & {\small (0.003)} & {\small (0.002)} & {\small (0.003)} & 
{\small (0.005)} & {\small (0.004)} & {\small (0.003)} \\ 
${\small \beta }_{1}{\small =1}$ & {\small 1.1001} & {\small 1.0836} & 
{\small 1.0029} & {\small 0.9971} & {\small 1.0019} & {\small 1.1021} & 
{\small 1.0897} & {\small 1.0010} & {\small 1.0059} & {\small 0.9988} \\ 
{\small s.t.d} & {\small (0.068)} & {\small (0.064)} & {\small (0.067)} & 
{\small (0.060)} & {\small (0.043)} & {\small (0.047)} & {\small (0.047)} & 
{\small (0.047)} & {\small (0.046)} & {\small (0.060)} \\ 
${\small \beta }_{2}{\small =2}$ & {\small 2.0036} & {\small 2.0032} & 
{\small 2.0021} & {\small 2.0008} & {\small 1.9991} & {\small 2.0017} & 
{\small 2.0013} & {\small 1.9990} & {\small 1.9983} & {\small 2.0010} \\ 
{\small s.t.d} & {\small (0.032)} & {\small (0.031)} & {\small (0.035)} & 
{\small (0.032)} & {\small (0.020)} & {\small (0.021)} & {\small (0.020)} & 
{\small (0.022)} & {\small (0.021)} & {\small (0.030)} \\ \hline \hline
\end{tabular}
}

{\centering
\textbf{Table 1(c): Peer Effects Estimation: Large Misclassification}

 \bigskip
\begin{tabular}{c|cc|cc|c|cc|cc|c}
\hline \hline
& \multicolumn{5}{|c|}{${\small S=50}$} & \multicolumn{5}{|c}{${\small S=100}
$} \\ \hline
& \multicolumn{2}{|c|}{\small Naive} & \multicolumn{2}{|c|}{\small Adjusted}
& {\small Oracle} & \multicolumn{2}{|c}{\small Naive} & \multicolumn{2}{|c|}%
{\small Adjusted} & {\small Oracle} \\ 
{\small Reg.} & ${\small H}^{(1)}{\small y}$ & ${\small H}^{(2)}{\small y}$
& $\mathcal{H}^{(1)}{\small y}$ & $\mathcal{H}^{(2)}{\small y}$ & ${\small Gy%
}$ & ${\small H}^{(1)}{\small y}$ & ${\small H}^{(2)}{\small y}$ & $\mathcal{%
H}^{(1)}{\small y}$ & $\mathcal{H}^{(2)}{\small y}$ & ${\small Gy}$ \\ 
{\small IV} & ${\small H}^{(1)}{\small X}$ & ${\small H}^{(2)}{\small X}$ & $%
{\small H}^{(2)}{\small X}$ & ${\small H}^{(1)}{\small X}$ & ${\small GX}$ & 
${\small H}^{(1)}{\small X}$ & ${\small H}^{(2)}{\small X}$ & ${\small H}%
^{(2)}{\small X}$ & ${\small H}^{(1)}{\small X}$ & ${\small GX}$ \\ \hline
${\small n=25}$ & \multicolumn{10}{c}{\small Expected \# of peers 3.75} \\ 
\hline
${\small \lambda =0.05}$ & {\small 0.0118} & {\small 0.0180} & {\small 0.0460%
} & {\small 0.0437} & {\small 0.0489} & {\small 0.0136} & {\small 0.0195} & 
{\small 0.0532} & {\small 0.0500} & {\small 0.0508} \\ 
{\small s.t.d} & {\small (0.007)} & {\small (0.007)} & {\small (0.020)} & 
{\small (0.027)} & {\small (0.007)} & {\small (0.005)} & {\small (0.004)} & 
{\small (0.019)} & {\small (0.020)} & {\small (0.005)} \\ 
${\small \beta }_{1}{\small =1}$ & {\small 1.0813} & {\small 1.0733} & 
{\small 1.0117} & {\small 1.0173} & {\small 1.0112} & {\small 1.0822} & 
{\small 1.0722} & {\small 1.0005} & {\small 1.0189} & {\small 1.0108} \\ 
{\small s.t.d} & {\small (0.081)} & {\small (0.081)} & {\small (0.101)} & 
{\small (0.095)} & {\small (0.078)} & {\small (0.068)} & {\small (0.068)} & 
{\small (0.085)} & {\small (0.078)} & {\small (0.062)} \\ 
${\small \beta }_{2}{\small =2}$ & {\small 1.9967} & {\small 1.9980} & 
{\small 1.9951} & {\small 1.9937} & {\small 1.9946} & {\small 2.0045} & 
{\small 2.0059} & {\small 2.0023} & {\small 2.0027} & {\small 2.0018} \\ 
{\small s.t.d} & {\small (0.047)} & {\small (0.046)} & {\small (0.054)} & 
{\small (0.054)} & {\small (0.046)} & {\small (0.033)} & {\small (0.032)} & 
{\small (0.042)} & {\small (0.035)} & {\small (0.031)} \\ \hline
${\small n=50}$ & \multicolumn{10}{c}{\small Expected \# of peers 7.5} \\ 
\hline
${\small \lambda =0.05}$ & {\small 0.0132} & {\small 0.0188} & {\small 0.0510%
} & {\small 0.0510} & {\small 0.0499} & {\small 0.0133} & {\small 0.0184} & 
{\small 0.0491} & {\small 0.0486} & {\small 0.0499} \\ 
{\small s.t.d} & {\small (0.003)} & {\small (0.003)} & {\small (0.014)} & 
{\small (0.020)} & {\small (0.003)} & {\small (0.002)} & {\small (0.002)} & 
{\small (0.009)} & {\small (0.011)} & {\small (0.003)} \\ 
${\small \beta }_{1}{\small =1}$ & {\small 1.1431} & {\small 1.1273} & 
{\small 0.9942} & {\small 0.9865} & {\small 0.9988} & {\small 1.1458} & 
{\small 1.1348} & {\small 0.9956} & {\small 1.0111} & {\small 1.0019} \\ 
{\small s.t.d} & {\small (0.072)} & {\small (0.068)} & {\small (0.097)} & 
{\small (0.088)} & {\small (0.060)} & {\small (0.050)} & {\small (0.051)} & 
{\small (0.067)} & {\small (0.071)} & {\small (0.043)} \\ 
${\small \beta }_{2}{\small =2}$ & {\small 2.0011} & {\small 2.0027} & 
{\small 1.9987} & {\small 1.9995} & {\small 2.0010} & {\small 2.0000} & 
{\small 2.0010} & {\small 1.9967} & {\small 1.9976} & {\small 1.9991} \\ 
{\small s.t.d} & {\small (0.030)} & {\small (0.031)} & {\small (0.046)} & 
{\small (0.036)} & {\small (0.030)} & {\small (0.022)} & {\small (0.021)} & 
{\small (0.030)} & {\small (0.022)} & {\small (0.017)} \\ \hline \hline
\end{tabular}
}

\renewcommand{\arraystretch}{1}
\section{Application: Microfinance Participation in India}

\label{sec:application}

We apply our method to study the peer effects in household decisions to participate in a microfinance program. The sample was collected by \cite{banerjee2013diffusion} using survey questionnaires from the State of Karnataka, India between 2006-2007. 
\cite{banerjee2013diffusion} impute a social network structure in the sample by aggregating several network measures that were inferred from the survey responses. 
They study how the dissemination of information about a microfinance program, Bharatha Swamukti Samsthe, or \textit{BSS}, depended on the network position of the households that were the first to be informed about the program. 
\cite{banerjee2013diffusion} use a binary response model with social interactions to disentangle the effect of information diffusion from the peer effects, a.k.a. \textit{endorsement} effects. 
In contrast, we use two of the multiple measures in \cite{banerjee2013diffusion} as noisy measures for an actual network, and apply our method to estimate peer effects.

\subsection{Institutional background and data}

The sample was collected by \cite{banerjee2013diffusion} through survey from $43$ villages in the State of Karnataka, India.\footnote{%
    The data are publicly available at: http://economics.mit.edu/faculty/eduflo/social.} 
These villages are largely linguistically homogeneous but heterogeneous in terms of caste. 
The sample contains socioeconomic status and some demographic characteristics of 9,598 households. On average, there were about 223 households in each village, with a minimum of 114, a maximum of 356, and a standard deviation of 56.2.

We merge the information from a full-scale household census and an individual-level survey in \cite{banerjee2013diffusion}. 
The household census gathered demographic information on a variety of amenities, such as roofing material and quality of access to electric power. 
The individual survey was administered to a randomly selected sub-sample of villagers, which covered 46\% of all households in the census.
Individual questionnaires collected demographic information, such as age, caste and sub-caste, etc., but does not include explicit financial information. 
After merging the the household head information from the individual survey with the household information from the census, we have a sample of 4,149 households. 

Table 2(a) reports summary statistics for the dependent variable ($y=1$ if participates in the microfinance program) as well as a few continuous and binary explanatory variables. 
Summary statistics for additional categorical variables, such as religion, caste, property ownership, access to electricity, etc, are reported in Table 2(b).

\begin{center}
\renewcommand{\arraystretch}{0.7}

\textbf{Table 2(a): Summary of Dependent and Explanatory Variables}
\begin{tabular}{c|cccccc}
\hline
Variable & definition & mean & s.d. & min & max \\ \hline
$y$ & dummy for participation & 0.1894 & 0.3919 & 0 & 1 \\ 
$room$ & number of rooms & 2.4389 & 1.3686 & 0 & 19 \\ 
$bed$ & number of beds & 0.9229 & 1.3840 & 0 & 24 \\ 
$age$ & age of household head & 46.057 & 11.734 & 20 & 95 \\ 
$edu$ & education of household head & 4.8383 & 4.5255 & 0 & 15 \\ 
$lang$ & whether to speak other language & 0.6799 & 0.4666 & 0 & 1 \\ 
$male$ & whether the hh head is male & 0.9161 & 0.2772 & 0 & 1 \\ 
$leader$ & whether it has a leader & 0.1393 & 0.3463 & 0 & 1 \\ 
$shg$ & whether in any saving group & 0.0513 & 0.2207 & 0 & 1 \\ 
$sav$ & whether to have a bank account & 0.3840 & 0.4864 & 0 & 1 \\ 
$election$ & whether to have an election card & 0.9525 & 0.2127 & 0 & 
1 \\ 
$ration$ & whether to have a ration card & 0.9012 & 0.2985 & 0 & 1 \\ 
\hline
\end{tabular}
\end{center}

\begin{center}
\renewcommand{\arraystretch}{.7}
\textbf{Table 2(b): Summary of Category Variables}

\bigskip

\begin{tabular}{cccc|cccc}
\hline
Variable & value & obs. & per. & Variable & value & obs. & per. \\ 
\hline \hline
$religion$ & \multicolumn{3}{c|}{} & $latrine$ & \multicolumn{3}{c}{} \\ 
- & Hinduism & 3943 & 95.04 & - & Owned & 1195 & 28.80 \\ 
- & Islam & 198 & 4.77 & - & Common & 20 & 0.48 \\ 
- & Christianity & 7 & 0.19 & - & None & 2934 & 70.72 \\ \hline
$roof$ & \multicolumn{3}{c|}{} & $property$ & \multicolumn{3}{c}{ } \\ 
- & Thatch & 82 & 1.98 & - & Owned & 3727 & 89.83 \\ 
- & Tile & 1388 & 33.45 & - & Owned \& shared & 32 & 0.77 \\ 
- & Stone & 1172 & 28.25 & - & Rented & 390 & 9.40 \\ 
- & Sheet & 868 & 20.92 &  &  & \multicolumn{1}{c}{} & \multicolumn{1}{c}{}
\\ 
- & RCC & 475 & 11.45 &  &  &  &  \\ 
- & Other & 164 & 3.95 &  &  &  &  \\ \hline
$electricity$ & \multicolumn{3}{c|}{ } & $caste$ & 
\multicolumn{3}{c}{} \\ 
- & No power & 243 & 5.86  & - & Scheduled caste & 1139 & 27.54 \\ 
- & Private & 2662 & 64.18 & - & Scheduled tribe & 221 & 5.34 \\ 
- & Government & 1243 & 29.97 & - & OBC & 2253 & 54.47 \\ 
&  &  & & - & General & 523 & 12.65 \\ \hline
\end{tabular}

\end{center}

The individual-level survey in \cite{banerjee2013diffusion} also collected information about social interactions between households, including (i) individuals whose homes the respondent visited, and (ii) individuals who visited the respondent's home. 
\cite{banerjee2013diffusion} construct networks with undirected links by symmetrizing the data.\footnote{
    Two households $i$ and $j$ are considered connected by an undirected link if an individual from either household mentioned the name of someone from the other household in response to question (i). 
    Likewise, a second symmetric network measure is constructed based on responses to (ii).} 
In other words, the sample in \cite{banerjee2013diffusion} contains two symmetrized measures for the same latent network, based on the responses to (i) and (ii) respectively. 
These two measures, reported as \textquotedblleft visitGo\textquotedblright \ and \textquotedblleft visitCome\textquotedblright \ matrices in the sample and denoted as $H^{(1)}$ and $H^{(2)}$ in our notation, lend themselves to application of our method in Section \ref{sec:use_2H}.\footnote{
    \cite{banerjee2013diffusion} aggregate responses from 12 questions, including (i) and (ii), to construct a single symmetric network as the actual adjacency matrix $G$. 
    In contrast, we take a different approach by interpreting responses to questions (i) and (ii) as two noisy measures of a single, actual adjacency matrix.}

Table 3 reports the degrees of $H^{(1)}$ and $H^{(2)}$. Because these measures are symmetric, there is no distinction between the degrees of in-bound or out-bound links. Each column lists the number of households in $H^{(1)}$ and in $H^{(2)}$ that report the number of links given by the degree column heading. 
If there were no misclassification of actual undirected links in these measures, we would expect $H^{(1)}$ and $H^{(2)}$ to be identical, and therefore have the same degree distribution. 
  Table 3 shows large differences between the two matrices in the number of reported connections between households. The fact that they differ substantially is indicative of substantial link misclassification in the measures, possibly due to the respondents' recall errors, or differences in how they interpreted the questions regarding visits.

\renewcommand{\arraystretch}{.8} 

\begin{center}

\textbf{Table 3: Degree Distribution in Two Network Measures}

\begin{tabular}{l|lllllllllll}
\hline \hline
Degree & $0$ & $1$ & $2$ & $3$ & $4$ & $5$ & $6$ & $7$ & $8$ & $9$ & $10$ \\ 
\hline
$H^{(1)}$ visit-go & 2 & 21 & 110 & 227 & 357 & 505 & 526 & 546 & 506 & 379 & 269 \\ 
$H^{(2)}$ visit-come & 4 & 24 & 112 & 245 & 384 & 522 & 534 & 577 & 491 & 386 & 255 \\ 
\hline
Degree & $11$ & $12$ & $13$ & $14$ & $15$ & $16$ & $17$ & $18$ & $19$ & $20$
& $\geq 21$ \\ \hline
$H^{(1)}$ visit-go& 224 & 145 & 90 & 74 & 54 & 33 & 27 & 15 & 9 & 6 & 24 \\ 
$H^{(2)}$ visit-come& 179 & 137 & 102 & 59 & 46 & 28 & 22 & 13 & 9 & 3 & 17 \\ 
\hline \hline
\end{tabular}

\end{center}

\subsection{Empirical strategy for estimating peer effects}

We use the following specification for the adjusted feasible structural form: 
\begin{equation}
y=\lambda W^{(t)}y+X\beta +villageFE+v^{(t)}\text{ for $t=1,2$,}
\label{eq:ESF}
\end{equation}%
where $y$ is a binary variable indicating whether the household participated in the microfinance program, $X$ is a matrix of household characteristics as listed in Table 2, and $villageFE$ are village fixed effects. Note that (\ref{eq:ESF}) provides \textit{two} different feasible structural forms (of the same actual structural model), corresponding to $t=1,2$ respectively.

Define $\phi_{ij}=1$ if $i$ and $j$ have the same caste, and $0$ otherwise. 
Then, based on two matrices $H^{(1)}$ (visit-go) and $H^{(2)}$ (visit-come), we get the following estimates:
\begin{eqnarray*}
& \widehat{\pi }_1=E(G_{ij}|\phi_{ij}=1)=0.0357, \text{ }
\widehat{\pi }_0=E(G_{ij}|\phi_{ij}=0)=0.0144, \\
& \widehat{p}_{0}^{(1)} =\Pr \{H_{ij}^{(1)}=1|G_{ij}=0\}=0.0020, \text{ }
\widehat{p}_{1}^{(1)} =\Pr \{H_{ij}^{(1)}=0|G_{ij}=1\}=0.1425, \\
& \widehat{p}_{0}^{(2)} =\Pr \{H_{ij}^{(2)}=1|G_{ij}=0\}=0.0001, \text{ }
\widehat{p}_{1}^{(2)} =\Pr \{H_{ij}^{(2)}=0|G_{ij}=1\}=0.1079.
\end{eqnarray*}%

Let $n_s$ be the group size of village $s$. We then construct the adjusted measures for $s=1,...,S$ and $t=1,2$: $W_{s}^{(t)}=\frac{H_{s}^{(t)}-\widehat{p}%
_{0}^{(t)}(\iota_{n_s}\iota'_{n_s}-I_{n_s})}{1-\widehat{p}_{0}^{(t)}-%
\widehat{p}_{1}^{(t)}}$, and apply our adjusted 2SLS estimator.
The estimation results are reported in Table 4, whose columns are defined as follows:

\begin{itemize}[leftmargin=*]    
    \item OLS: regression of a simple, linear model that ignores network effects by setting $\lambda =0$.
    \item (a): naive 2SLS that uses $H^{(1)}y$ as an endogenous regressor $H^{(1)}X$ as its instruments. 
    \item (b): adjusted 2SLS uses $H^{(2)}X$ as instruments for the \textit{adjusted} endogenous regressor $W^{(1)}y$. 
    \item (c): naive 2SLS analogous to (a), only with $H^{(1)}$ replaced by $H^{(2)}$.
    \item (d): adjusted 2SLS analogous to (b), uses $H^{(1)}X$ as instruments for $W^{(2)}y$.
    \item (e): S2SLS as defined in (\ref{eq:S2SLS}). This is a ``combined'' estimator that stacks the moments and associated IVs from both structural forms in (b) and (d).
\end{itemize}

In summary, columns (a) and (c) report estimators that a researcher would use if he or she ignored the issue of link misclassification, and treated either $H^{(1)} $ or $H^{(2)}$, respectively, as if it were the true adjacency matrix $G$, applying a standard 2SLS estimator in the literature. 
In contrast, columns (b), (d) and (e) report the adjusted 2SLS estimators we propose to remove the estimation bias due to link misclassification.\footnote{
    We need two network measures because the measures in the sample are symmetric. 
    As noted in Section \ref{subsec:use_1H}, we can also apply the adjusted 2SLS when the sample reports a single \textit{asymmetric} network measure.} 
Column (e) combines the information used for the estimators in (b) and (d), and so is our preferred estimator.

\subsection{Empirical results}

\begin{center}
\renewcommand{\arraystretch}{.7} 
\textbf{Table 4: Adjusted Two-stage Least Square Estimates} 

\bigskip

\begin{tabular}{c|ccc|cc|c}
\hline
& OLS & (a) & (b) & (c) & (d) & (e) \\ \hline
{\small R.h.s. Endogeneity} &  & ${\small H}^{(1)}{\small y}$ & $W%
^{(1)}{\small y}$ & ${\small H}^{(2)}{\small y}$ & $W^{(2)}{\small %
y}$ & $W^{(t)}{\small y}$ \\ 
{\small Instruments} &  & ${\small H}^{(1)}{\small X}$ & ${\small H}^{(2)}%
{\small X}$ & ${\small H}^{(2)}{\small X}$ & ${\small H}^{(1)}{\small X}$ & 
{\small Combined} \\ \hline \hline
$\widehat{\lambda }$ &  & {\small 0.0523***} & {\small 0.0499***} & {\small %
0.0550***} & {\small 0.0542***} & {\small 0.0515***} \\ 
&  & {\small (0.0079)} & {\small (0.0086)} & {\small (0.0097)} & {\small %
(0.0082)} & {\small (0.0083)} \\ \hline
${\small leader}$ & {\small 0.0515***} & {\small 0.0371**} & {\small 0.0355**%
} & {\small 0.0414**} & {\small 0.0403**} & {\small 0.0379**} \\ 
& {\small (0.0175)} & {\small (0.0187)} & {\small (0.0188)} & {\small %
(0.0184)} & {\small (0.0184)} & {\small (0.0185)} \\ \hline
${\small age}$ & {\small -0.0012***} & {\small -0.0017***} & {\small %
-0.0017***} & {\small -0.0016***} & {\small -0.0017***} & {\small -0.0017***}
\\ 
& {\small (0.0005)} & {\small (0.0005)} & {\small (0.0005)} & {\small %
(0.0005)} & {\small (0.0005)} & {\small (0.0005)} \\ \hline
${\small ration}$ & {\small 0.0502**} & {\small 0.0438**} & {\small 0.0430**}
& {\small 0.0420**} & {\small 0.0412**} & {\small 0.0422**} \\ 
& {\small (0.0212)} & {\small (0.0201)} & {\small (0.0202)} & {\small %
(0.0195)} & {\small (0.0194)} & {\small (0.0198)} \\ \hline
${\small electricity-gov}$ & {\small 0.0441**} & {\small 0.0338**} & {\small %
0.0326**} & {\small 0.0349**} & {\small 0.0339**} & {\small 0.0333**} \\ 
& {\small (0.0152)} & {\small (0.0157)} & {\small (0.0158)} & {\small %
(0.0156)} & {\small (0.0155)} & {\small (0.0156)} \\ \hline
${\small electricity-no}$ & {\small 0.0162} & {\small 0.0226} & {\small %
0.0233} & {\small 0.0240} & {\small 0.0248} & {\small 0.0240} \\ 
& {\small (0.0275)} & {\small (0.0296)} & {\small (0.0296)} & {\small %
(0.0300)} & {\small (0.0298)} & {\small (0.0297)} \\ \hline
${\small caste-tribe}$ & {\small -0.0411} & {\small -0.0278} & -{\small %
0.0263} & {\small -0.0270} & -{\small 0.0255} & -{\small 0.0260} \\ 
& {\small (0.0294)} & {\small (0.0309)} & {\small (0.0305)} & {\small %
(0.0301)} & {\small (0.0298)} & {\small (0.0301)} \\ \hline
${\small caste-obc}$ & -{\small 0.0822***} & -{\small 0.0505**} & -{\small %
0.0468**} & -{\small 0.0472**} & -{\small 0.0435***} & -{\small 0.0456***}
\\ 
& {\small (0.0163)} & {\small (0.0217)} & {\small (0.0214)} & {\small %
(0.0218)} & {\small (0.0210)} & {\small (0.0212)} \\ \hline
${\small caste-gen}$ & {\small -0.1142***} & {\small -0.0718***} & {\small %
-0.0669***} & {\small -0.0669***} & {\small -0.0620**} & {\small -0.0650***}
\\ 
& {\small (0.0239)} & {\small (0.0238)} & {\small (0.0244)} & {\small %
(0.0244)} & {\small (0.0235)} & {\small (0.0241)} \\ \hline
${\small religion-Islam}$ & {\small 0.1225***} & {\small 0.0967***} & 
{\small 0.0938***} & {\small 0.0880***} & {\small 0.0843***} & {\small %
0.0895***} \\ 
& {\small (0.0332)} & {\small (0.0325)} & {\small (0.0325)} & {\small %
(0.0346)} & {\small (0.0349)} & {\small (0.0335)} \\ \hline
${\small religion-Chri}$ & {\small 0.1569} & {\small 0.1427} & {\small 0.1410%
} & {\small 0.1462} & {\small 0.1450} & {\small 0.1431} \\ 
& {\small (0.1440)} & {\small (0.1295)} & {\small (0.1279)} & {\small %
(0.1310)} & {\small (0.1299)} & {\small (0.1287)} \\ \hline
${\small Controls}$ & ${\small \surd }$ & ${\small \surd }$ & ${\small \surd 
}$ & ${\small \surd }$ & ${\small \surd }$ & ${\small \surd }$ \\ 
${\small VillageFE}$ & ${\small \surd }$ & ${\small \surd }$ & ${\small %
\surd }$ & ${\small \surd }$ & ${\small \surd }$ & ${\small \surd }$ \\ 
${\small R}^{2}$ & {\small 0.0862} & {\small 0.1339} & {\small 0.1353} & 
{\small 0.1356} & {\small 0.1366} & {\small 0.1358} \\ 
{\small Obs} & {\small 4134} & {\small 4134} & {\small 4134} & {\small 4134}
& {\small 4134} & {\small 4134} \\ \hline
\multicolumn{7}{c}{\small Note: s.e. clustered at village level are in parentheses. ***, **, and * indicate
1\%, 5\% and 10\% significant.} \\ 
\multicolumn{7}{c}{{\small Controls include }${\small male}${\small , }$%
{\small roof}${\small , }${\small room}${\small , }${\small bed}${\small , }$%
{\small latrine}${\small , }${\small edu}${\small , }${\small lang}${\small %
, }${\small shg}${\small , }${\small sav}${\small , }${\small election}$%
{\small , }${\small own}${\small .}}%
\end{tabular}
\end{center}

Table 4 reports that our adjusted 2SLS estimates for the peer effect $\widehat{\lambda }$ are 0.0499 when using $W^{(1)}y$ in the structural form (column (b)), 0.0542 using $W^{(2)}y$ (column (d)), and 0.0515 using both measures and S2SLS (column (e)), all significant at the 1\% level, and the differences between them are small relative to their standard errors. 
These estimates imply the likelihood of a household to participate in the microfinance program is increased by about 5.15\% when the household is linked to one more participating household on the network. 
With the average participation rate being 18.9\% in the sample, these estimates suggest that peer effects
are substantial.

The signs of estimated marginal effects by individual or household characteristics are plausible. 
Column (e) suggests the head of household being a \textquotedblleft leader\textquotedblright \ (e.g. a teacher, a leader of a self-help group, or a shopkeeper) increases the participation rate by around 3.8\%. 
These households with \textquotedblleft leaders\textquotedblright \ were the first ones to be informed about the program, and were asked to forward information about the microfinance program to other potentially interested villagers. 
Leaders had received first-hand, detailed information about the program from its administrator, which could be conducive to higher participation rates. Households with younger heads are more likely to participate, 
but the magnitude of this age effect is less substantial. 
Being 10 years younger increases the participation rate by 1.7\%. Having a ration card increases the participation rate by around 4.2\%. 
Compared to households using private electricity, households using government-supplied electricity have a 3.3\% higher participation rate. 
The two factors indicate that holding other things equal, households in poorer economic conditions are
more inclined to participate in the microfinance program. 

Table 4 also shows that, if we had ignored the issue of misclassified links
in network measures, and had done 2SLS using $H^{(t)}X$ as instruments for
the (un-adjusted) endogenous peer outcomes $H^{(t)}y$, then the estimator
would have been biased. In (a), where we use $H^{(1)}X$ as instruments for $%
H^{(1)}y$, the estimate for $\lambda $ is 0.0523. In comparison, in (b)
where we correct for misclassified link bias by using $H^{(2)}X$ as
instruments for $W^{(1)}y$, then the estimated $\lambda $ is
0.0499. The upward bias resulted from ignoring the misclassified links is
about 4.8\% (as 0.0523/0.0499=1.048). Likewise, in (c) where we erroneously
use $H^{(2)}X$ as instruments for $H^{(2)}y$, we get an upward bias about
1.5\% in the peer effect estimate compared with the correct estimate in (d)
(as 0.0550/0.0542=1.015).

As explained in Section \ref{sec:augmentBias}, the bias in (a) and (c) is due to the correlation between $H^{(t)}X$ and the composite errors $\varepsilon + \lambda [G-H^{(t)}] y$. 
The magnitude of this bias is determined in part by the misclassification rates $(p_{0}^{(t)},p_{1}^{(t)})$, which affect the correlation between the composite errors and the traditional instruments $H^{(t)}X$ for endogenous peer outcomes $H^{(t)}y$ in a naive 2SLS. 
This is evident from (\ref{eq:relate_compErr}): if both $p_0$ and $p_1$ were close to zero, then the right hand side of (\ref{eq:relate_compErr}) would be almost reduced to $v$, which is mean independent from $X$ under Lemma \ref{lm:exogenous_v}. 
In that case, $H^{(3-t)}X$ would be valid IVs for $H^{(t)}y$ even without making adjustments in $W^{(t)}$.

The fact that estimates in (a) and (c) are fairly close to those in (b), (d) and (e) indicate the impact of link misclassification on peer effects is relatively low in this application.
However, our Monte Carlo simulations sometimes showed much larger impacts from misclassification, which suggests that in other empirical environments, we may expect larger bias when misclassification of links is not accounted for in estimation. The method we propose in this paper provides an easy remedy for this issue.

Table 4 suggests significant, positive endogenous peer effects around 5\% across various specification, while \cite{banerjee2013diffusion} find no significant ``endorsement effects'' \textit{after} controlling for information passing between the households.\footnote{\cite{banerjee2013diffusion} define ``endorsement effects'' as the impact of friends' decisions (to adopt a product) on the decisions of informed individuals \textit{within} a diffusion process.}
This difference arises because, in contrast with \cite{banerjee2013diffusion}, we do not separately account for an additional layer of structure that gives rise to information diffusion.
In this sense, our model is more ``reduced-form'' than that of \cite{banerjee2013diffusion}. 
Therefore, our estimates for $\lambda$ could be interpreted as a compound of what they define as the endorsement effect and the effect of information diffusion. 
The latter is indeed found to be statistically significant by \cite{banerjee2013diffusion}.

We conclude this section with model validation results in Table 5, which shows how the predicted values of $E(y|X)$ fit with the sample data. 
The Probit and Logit models use the same set of regressors as in Table 4. 
We report the summary statistics of the fitted values $\widehat{E(y|X)}$ under different models. 
Columns (a) through (d) of Table 5 are the fitted values of the feasible structural models used in each of the corresponding columns in Table 4.

In all but one of the models in Table 5, the sample mean of the predicted participation probability $\widehat{E(y|X)}$ is 0.1894, which is equal to the sample mean of $y$ in the 4,134 observations used in the regression. 
The standard deviation of the predicted participation probability varies across different models. 
Predictions of linear probability models (LPM), reported under the column of \textquotedblleft OLS\textquotedblright \ and (a)-(e), are mostly within the unit interval $[0,1]$. LPM predictions are strictly less than $1$ for all observations in the sample; 
only 2.95\% to 5.56\% of the households in the sample end up with negative LPM predictions. That is, about 95\% all LPM predictions in the sample are indeed within the unit interval.

Based on $\widehat{E(y|X)}$, we use the indicator $1\{\widehat{E(y|X)}>0.5\}$ to
predict whether an individual participates in the microfinance program, and
calculate prediction rates. Predictions in our linear social network models
in columns (a)-(e) generally outperform the OLS, Probit and Logit models in
terms of the percentage of correct predictions. 
\begin{center}
\renewcommand{\arraystretch}{.7} 
\textbf{Table 5: Model Validation: Predicted Microfinance Participation}

\bigskip 

\begin{tabular}{c|c|c|c|cc|cc|c}
\hline \hline
$\widehat{E(y|X)}$ & {\small Probit} & {\small Logit} & {\small OLS} & 
\multicolumn{2}{|c|}{(a)\space \space \space \space \space \space(b)} & 
\multicolumn{2}{|c|}{(c)\space \space \space \space \space \space(d)} & {\small %
(e)} \\ \hline
${\small mean}$ & {\small 0.1894} & {\small 0.1894} & {\small 0.1894} & 
{\small 0.1894} & {\small 0.1894} & {\small 0.1894} & {\small 0.1894} & 
{\small 0.1894} \\ \hline
${\small s.t.d}$ & {\small 0.1176} & {\small 0.1181} & {\small 0.1151} & 
{\small 0.1357} & {\small 0.1403} & {\small 0.1372} & {\small 0.1416} & 
{\small 0.1405} \\ \hline
$\min $ & {\small 0.0103} & {\small 0.0166} & {\small -0.0953} & {\small %
-0.1062} & {\small -0.1107} & {\small -0.1282} & {\small -0.1316} & {\small %
-0.1314} \\ \hline
$\max $ & {\small 0.7490} & {\small 0.7673} & {\small 0.6895} & {\small %
0.7911} & {\small 0.8159} & {\small 0.7370} & {\small 0.7615} & {\small %
0.8286} \\ \hline
${\small <0}$ & {\small 0\%} & {\small 0\%} & {\small 2.95\%} & {\small %
4.96\%} & {\small 5.32\%} & {\small 5.06\%} & {\small 5.56\%} & {\small %
5.41\%} \\ \hline \hline
${\small I\{}\widehat{E(y|X)}{\small >0.5\}}$ &  &  &  &  &  &  &  &  \\ 
\hline
{\small underpred.}${\small \text{ (1 to 0)}}$ & {\small 17.76\%} & {\small %
17.66\%} & {\small 18.34\%} & {\small 17.27\%} & {\small 17.05\%} & {\small %
17.30\%} & {\small 17.08\%} & {\small 17.10\%} \\ \hline
{\small overpred.}${\small \text{ (0 to 1)}}$ & {\small 0.92\%} & {\small %
1.11\%} & {\small 0.27\%} & {\small 0.94\%} & {\small 1.14\%} & {\small %
0.87\%} & {\small 1.92\%} & {\small 1.04\%} \\ \hline
{\small correct} & {\small 81.33\%} & {\small 81.23\%} & {\small 81.40\%} & 
{\small 81.79\%} & {\small 81.81\%} & {\small 81.83\%} & {\small 81.91\%} & 
{\small 81.86\%} \\ \hline
\end{tabular}

\end{center}

\section{Conclusion}

This paper proposes adjusted-2SLS estimators that consistently estimate structural parameters, including peer effects, in social networks when the reported links are subject to random misclassification errors. 
By adjusting the endogenous peer outcomes and applying new instruments constructed from noisy network measures, our estimators resolve the additional endogeneity issues caused by link misclassification. 
As an initial step of our method, we propose simple, closed-form estimators for the misclassification rates in the network measures. 

We apply our method to analyze the peer effects in households' decisions to participate in a microfinance program in Indian villages, using the data collected by \cite{banerjee2013diffusion}. Consistent with our theory, our empirical estimates show that ignoring the issue of misclassified links in 2SLS estimation of social network models leads to an upward bias of up to 5\% in the estimates of peer effects. A Monte Carlo analysis shows that in other applications, the bias from failing to account for link misclassification can be much larger.

\begin{spacing}{1}

\addcontentsline{toc}{chapter}{References}
\bibliographystyle{chicago}
\bibliography{missing_links}
\end{spacing}

\section*{Appendix}\label{sec: appendix}

\subsection*{A1. Proofs in Sections \ref{sec:assum_H}-\ref{sec:use_H}}

\begin{proof}[Proof of Lemma \protect\ref{lm:exogenous_v}]
Under (A3), $E(Gy|G,X)=E[GM(X\beta + \varepsilon )|G,X]=GMX\beta $, and $E(Wy|G,X)=E[WME\left( X\beta + \varepsilon |H,G,X\right) |G,X]=E(W|G,X)MX\beta$. 
Under (A1) and (A2), $E(W|G,X)=G$. 
It follows from the definition of $v$ in (\ref{SF1}) that $E(v|G,X)=0$.\bigskip
\end{proof}

\begin{proof}[Proof of Proposition \protect\ref{pn:HX}]
By (A1), (A2), (A4), conditional mean of $(i,j)$-th entry in $%
W^{2}$ is 
\begin{eqnarray}
&&E\left[ (W^{2})_{ij}|G,X\right] =E\left( \left.
\sum\nolimits_{k\neq i,j}W_{ik}W_{kj}\right\vert
G,X\right) =\sum\nolimits_{k\neq i,j}E\left( \left. W_{ik}{%
W}_{kj}\right\vert G,X\right)  \nonumber \\
&=&\sum\nolimits_{k\neq i,j}E\left( \left. W_{ik}\right\vert
G_{ik},X\right) E\left( \left. W_{kj}\right\vert G_{kj},X\right) = \sum\nolimits_{k\neq i,j}G_{ik}G_{kj} = \left( G^{2}\right) _{ij}\text{.}  \label{eq:H2}
\end{eqnarray}%
It then follows that 
\begin{eqnarray}
&&E[(W^{\prime }X)^{\prime }v|G,X] = E\left[ \left. X^{\prime }W(\varepsilon+\lambda\left( G-W\right) y)\right\vert G,X\right]=\lambda E\left[ \left. X^{\prime }W\left( G-W\right)
MX\beta \right\vert G,X\right]  \nonumber \\
&=&\lambda X^{\prime }\left[ E(W|G,X)G-E(W^{2}|G,X)%
\right] MX\beta  =\lambda X^{\prime }\left( G^{2}-G^{2}\right) MX\beta =0\text{,}
\label{eq:HH}
\end{eqnarray}%
where the first two equalities are due to (A3) and the reduced form of $y$, and the last due to (\ref%
{eq:H2}) and the fact that $E\left[ W|G,X\right] =G$ under (A1)
and (A2).

As $H=(1-p_{0}-p_{1})W+p_{0}(\iota \iota'
-I) $, $E[(H^{\prime }X)^{\prime }v|G,X]=0+E\left\{ \left. X^{\prime }p_{0}(\iota
\iota' -I)v\right\vert G,X\right\} =0 $, where the first equality is due to (\ref{eq:HH}) and the second due to Lemma \ref{lm:exogenous_v}.\bigskip
\end{proof}

As noted in Section \ref{sec:use_2H}, we can construct instruments from
multiple \textit{symmetrized} measures for $G$, denoted by $H^{(1)}$ and $%
H^{(2)}$. 
Suppose $H^{(1)}$ and $H^{(2)}$ both satisfy (A1), (A2), (A3), and are
independent in the sense of (A4'). Let $W^{(t)}$ be defined for $%
t=1,2$ as in the text.

We can construct feasible structural forms as in (\ref{eq:2HSF}), and use $%
W^{(2)}X$ (or $H^{(2)}X$) as instruments for $v^{(1)}$. To see
why, note that for all $i$ and $j$ (including the case with $i=j$):%
\begin{eqnarray}
&&E\left[ (W^{(2)}W^{(1)})_{ij}|G,X\right] =E\left(
\left. \sum\nolimits_{k\neq i,j}W_{ik}^{(2)}W%
_{kj}^{(1)}\right\vert G,X\right)=\sum\nolimits_{k\neq i,j}E\left( \left. W_{ik}^{(2)}W%
_{kj}^{(1)}\right\vert G,X\right) \nonumber\\&& =\sum\nolimits_{k\neq i,j}E\left( \left. 
W_{ik}^{(2)}\right\vert G_{ik},X\right) E\left( \left. W%
_{kj}^{(1)}\right\vert G_{kj},X\right) =\sum\nolimits_{k\neq i,j}G_{ik}G_{kj}=\left( G^{2}\right) _{ij}\text{.}
\label{eq:HA2}
\end{eqnarray}%
Under (A1)-(A2), $E\left( W^{(2)}G|G,X\right) =E(W^{(2)}|G,X)G=G^{2}$.
It then follows that
\begin{eqnarray*}
&&E[(W^{(2)}X)^{\prime }v^{(1)}|G,X] =E(X^{\prime }W%
^{(2)}\varepsilon |G,X)+\lambda E\left\{ \left. X^{\prime }W^{(2)}%
\left[ G-W^{(1)}\right] y\right\vert G,X\right\} \\
&=&\lambda E\left[ \left. X^{\prime }W^{(2)}\left( G-W%
^{(1)}\right) MX\beta \right\vert G,X\right]=\lambda X^{\prime }E\left[ (W^{(2)}G-W^{(2)}%
W^{(1)})|G,X\right] MX\beta =0\text{.}
\end{eqnarray*}%

\begin{proof}[Proof of Proposition \protect\ref{pn:IVR}]
Define some $K$-by-$K$ moments involving $(G,X)$: 
\begin{eqnarray*}
B_{1} \equiv E(X^{\prime }G^{2}MX)\text{, }B_{2}\equiv E(X^{\prime }GMX)%
\text{, }B_{3}\equiv E(X^{\prime }G^{2}X)\text{, } B_{4} &\equiv &E(X^{\prime }GX)\text{, }B_{5}\equiv E(X^{\prime }X)\text{.}
\end{eqnarray*}

Recall $Z\equiv (W^{\prime }X,X)$ and $R\equiv (Wy,X)$.
Under (A1), (A2), (A3), and (A4), 
\begin{eqnarray*}
E(Z^{\prime }R) &=&\left( 
\begin{array}{cc}
E(X^{\prime }W^{2}y) & E(X^{\prime }WX) \\ 
E(X^{\prime }Wy) & E(X^{\prime }X)%
\end{array}%
\right) =\left( 
\begin{array}{cc}
E(X^{\prime }G^{2}MX\beta ) & E(X^{\prime }GX) \\ 
E(X^{\prime }GMX\beta ) & E(X^{\prime }X)%
\end{array}%
\right) \equiv \left( 
\begin{array}{cc}
B_{1}\beta & B_{4} \\ 
B_{2}\beta & B_{5}%
\end{array}%
\right) \text{.}
\end{eqnarray*}%
If $E(Z^{\prime }R)$ does not have full
rank, it implies the singularity of the $2K$-by-$2K$ square matrix 
\begin{equation}
\left( 
\begin{array}{cc}
B_{1} & B_{4} \\ 
B_{2} & B_{5}%
\end{array}%
\right)  \label{SQM}
\text{.}\end{equation}%
Hence, the non-singularity of the square matrix
in (\ref{SQM}) implies $E(Z^{\prime }R)$ has full rank.

As $M-\lambda GM=I$, $GM=\lambda ^{-1}(M-I)$ and $G^{2}M=\lambda
^{-1}(GM-G)=\lambda ^{-2}(M-I-\lambda G)$,
\[
\left( 
\begin{array}{cc}
B_{1} & B_{4} \\ 
B_{2} & B_{5}%
\end{array}%
\right) =\left( 
\begin{array}{cc}
\lambda ^{-1}E[X^{\prime }(GM-G)X] & E(X^{\prime }GX) \\ 
E(X^{\prime }GMX) & E(X^{\prime }X)%
\end{array}%
\right) . 
\]%
Adding (the 2nd row)$\times(-\frac{1}{\lambda })$ to the 1st
row, we get $\left( 
\begin{array}{cc}
-\frac{1}{\lambda }E(X^{\prime }GX) & E(X^{\prime }GX)-\frac{1}{\lambda }%
E(X^{\prime }X) \\ 
E(X^{\prime }GMX) & E(X^{\prime }X)%
\end{array}%
\right)$. 

Adding (the 2nd column) $\times\frac{1}{\lambda }$ to the 1st
column, we get 
\[
\left( 
\begin{array}{cc}
-\frac{1}{\lambda ^{2}}E(X^{\prime }X) & E(X^{\prime }GX)-\frac{1}{\lambda }%
E(X^{\prime }X) \\ 
E(X^{\prime }(GM+\frac{1}{\lambda }I)X) & E(X^{\prime }X)%
\end{array}%
\right) =\left( 
\begin{array}{cc}
-\frac{1}{\lambda ^{2}}E(X^{\prime }X) & -\frac{1}{\lambda }E(X^{\prime
}M^{-1}X) \\ 
\frac{1}{\lambda }E(X^{\prime }MX) & E(X^{\prime }X)%
\end{array}%
\right) . 
\]%
The determinant of the matrix on the right-hand side is the product of $%
\lambda ^{-2K}$ and the determinant of $[E(X^{\prime }X),E(X^{\prime
}M^{-1}X);E(X^{\prime }MX),E(X^{\prime }X)]$. Hence, the matrix in (\ref{SQM}%
) is non-singular iff $[E(X^{\prime }X),E(X^{\prime }M^{-1}X);E(X^{\prime
}MX),E(X^{\prime }X)]$ is non-singular.

By the same token, (A1), (A2), and (A4) imply that 
\[
E(Z^{\prime }Z)=\left( 
\begin{array}{cc}
E(X^{\prime }W^{2}X) & E(X^{\prime }WX) \\ 
E(X^{\prime }WX) & E(X^{\prime }X)%
\end{array}%
\right) =\left( 
\begin{array}{cc}
E(X^{\prime }G^{2}X) & E(X^{\prime }GX) \\ 
E(X^{\prime }GX) & E(X^{\prime }X)%
\end{array}%
\right) =\left( 
\begin{array}{cc}
B_{3} & B_{4} \\ 
B_{4} & B_{5}%
\end{array}%
\right) \text{.} 
\]%
Therefore, $E(Z^{\prime }Z)$ has full rank if and only if $%
[B_{3},B_{4};B_{4},B_{5}]$ is non-singular.\bigskip
\end{proof}

\subsection*{A2. Identifying misclassification rates in Section \protect \ref{sec:get_p}}

\label{subsec:A2}

Consider the case in Section \ref{subsubsec: 2H} where the sample reports
two measures $H^{(t)}$ with misclassification rates $%
p_{0}^{(t)},p_{1}^{(t)}$ for $t=1,2$ respectively. Introduce $H_{ij}^{(3)}\equiv \max \left \{ H_{ij}^{(1)},H_{ij}^{(2)}\right \} $.

By construction, for $t=1,2,3$, the distribution of $H_{ij}^{(t)}$ is
related to $p_{0}^{(t)},p_{1}^{(t)}$ and link formation probability $\pi
_{1}\equiv E(G_{ij}|\phi _{ij}(X)=1)$ as follows:%
\begin{equation}
\psi _{1}^{(t)}\equiv E\left[ \left. H_{ij}^{(t)}\right \vert \phi _{ij}(X)=1%
\right] =\left( 1-p_{1}^{(t)}\right) \pi _{1}+p_{0}^{(t)}(1-\pi
_{1})=p_{0}^{(t)}+\left( 1-p_{1}^{(t)}-p_{0}^{(t)}\right) \pi _{1}\text{,}
\label{eq:psi_ta}
\end{equation}%
where (A4') implies: $p_{0}^{(3)} =p_{0}^{(1)}+p_{0}^{(2)}-p_{0}^{(1)}p_{0}^{(2)}$, and $p_{1}^{(3)} =p_{1}^{(1)}p_{1}^{(2)}$. 

Then $\psi _{0}^{(t)}$ is defined by replacing \textquotedblleft $\phi
_{ij}(X)=1$\textquotedblright \ and $\pi _{1}$ by \textquotedblleft $%
\phi _{ij}(X)=0$\textquotedblright \ and $\pi _{0}$ in (\ref{eq:psi_ta}).
\bigskip
\noindent \textbf{[Identifying }$p_{0}^{(1)}$\textbf{\ and }$p_{0}^{(2)}$%
\textbf{.]} For convenience, let $\xi _{1}\equiv \left(
1-p_{0}^{(2)}-p_{1}^{(2)}\right) \pi _{1}$ so that
\begin{eqnarray}
\psi _{1}^{(1)} =p_{0}^{(1)}+r_{(12)}\xi _{1}\text{; \ }\psi
_{1}^{(2)}=p_{0}^{(2)}+\xi _{1}\text{; } 
\psi _{1}^{(3)}=p_{0}^{(1)}+p_{0}^{(2)}-p_{0}^{(1)}p_{0}^{(2)}+r_{(32)}\xi _{1}\text{,}
\label{eq:defn_psi}
\end{eqnarray}%
where $r_{(t^{\prime }t)}\equiv (\psi _{0}^{(t^{\prime })}-\psi
_{1}^{(t^{\prime })})/(\psi _{0}^{(t)}-\psi _{1}^{(t)})$ for $t^{\prime
},t\in \{1,2,3\}$. This implies
\begin{equation}
p_{0}^{(1)}=\psi _{1}^{(1)}-r_{(12)}\xi _{1}\text{ and }p_{0}^{(2)}=\psi
_{1}^{(2)}-\xi _{1}\text{.}  \label{eq:P0s}
\end{equation}%
Substituting these into $\psi _{1}^{(3)}$ in (\ref%
{eq:defn_psi}) implies: $\psi _{1}^{(3)}=\left( \psi _{1}^{(1)}-r_{(12)}\xi _{1}-1\right) \left(
1-\psi _{1}^{(2)}+\xi _{1}\right) +1+r_{(32)}\xi _{1}$.
Rearranging terms, we write this quadratic equation in $\xi _{1}$ as
\begin{equation}
\mathcal{C}_{2}\xi _{1}^{2}-\mathcal{C}_{1}\xi _{1}-\mathcal{C}_{0}=0\text{,}
\label{eq:QE}
\end{equation}%
where $\mathcal{C}_{2} \equiv r_{(12)}\text{, } \mathcal{C}_{1} \equiv \psi _{1}^{(1)}-1+r_{(32)}-r_{(12)}(1-\psi
_{1}^{(2)})\text{, } \mathcal{C}_{0} \equiv \psi _{1}^{(1)}+\psi _{1}^{(2)}-\psi _{1}^{(1)}\psi
_{1}^{(2)}-\psi _{1}^{(3)}$.

Then $\Delta \equiv (\mathcal{C}_{1})^{2}+4\mathcal{C}_{2}\mathcal{C}_{0}>0$
and $\sqrt{\Delta }>\mathcal{C}_{1}$ as $\mathcal{C}%
_{2}=[1-p_{0}^{(1)}-p_{1}^{(1)}]/[1-p_{0}^{(2)}-p_{1}^{(2)}]>0$, and $\mathcal{C}_{0}=[ 1-p_{0}^{(1)}-p_{1}^{(1)}] [
1-p_{0}^{(2)}-p_{1}^{(2)}] \pi _{1}(1-\pi _{1})>0$. Hence, (\ref{eq:QE}) has two
solutions in $\xi _{1}$: $\xi _{1}=\frac{1}{2\mathcal{C}_{2}}(\mathcal{C}_{1}\pm \sqrt{\Delta })$. Furthermore, $\frac{1}{2\mathcal{C}_{2}}%
\left( \mathcal{C}_{1}-\sqrt{\Delta }\right) <0$, so the only solution in (\ref%
{eq:QE}) is $\xi _{1}=\frac{1}{2\mathcal{C}_{2}}\left( \mathcal{C}_{1}+%
\sqrt{\Delta }\right) $. Plugging in this solution of $\xi _{1}$ into (\ref%
{eq:P0s}) identifies $p_{0}^{(1)}$ and $p_{0}^{(2)}$.\bigskip

\noindent \textbf{[Identifying }$\pi _{1}$\textbf{.]} Note that (\ref%
{eq:psi_ta}) implies
\begin{equation}
p_{1}^{(t)}=1-p_{0}^{(t)}-\frac{\psi _{1}^{(t)}-p_{0}^{(t)}}{\pi _{1}}\text{
for }t=1,2,3.  \label{eq:p1t}
\end{equation}%
Recall that $\psi _{1}^{(t)}$ for $t=1,2,3$ are directly identified from the
data. With $p_{0}^{(t)}$ identified for $t=1,2$, we can recover $p_{0}^{(3)}$
. This implies $\pi _{1}$ is identified as:%
\begin{equation}
\pi _{1}=\frac{\left( \psi _{1}^{(1)}-p_{0}^{(1)}\right) \left( \psi
_{1}^{(2)}-p_{0}^{(2)}\right) }{\left( 1-p_{0}^{(1)}\right) \left( \psi
_{1}^{(2)}-p_{0}^{(2)}\right) +\left( 1-p_{0}^{(2)}\right) \left( \psi
_{1}^{(1)}-p_{0}^{(1)}\right) -\left( \psi _{1}^{(3)}-p_{0}^{(3)}\right) }%
\text{.}  \label{eq:pi_a}
\end{equation}%

\bigskip
\noindent \textbf{[Identifying }$p_{1}^{(1)}$, $p_{1}^{(2)}$\textbf{\ and }$%
\pi _{0}$\textbf{.]} With $p_{0}^{(1)},p_{0}^{(2)}$, and $\pi _{1}$
identified above, we can use (\ref{eq:p1t})\ to recover $p_{1}^{(t)}$ from $%
\psi _{1}^{(t)}$ for $t=1,2$ and use (\ref{eq:psi_ta}) to
identify $\pi _{0}$ as: $\pi _{0}=\frac{\psi _{0}^{(t)}-p_{0}^{(t)}}{\psi _{1}^{(t)}-p_{0}^{(t)}}\pi
_{1}\text{ for }t=1,2,3\text{.}$

\bigskip

\noindent \textbf{[Single, unsymmetrized measure.]} The same argument applies for the case in Section \ref{subsubsec: 1H}, where the
sample reports a single, unsymmetrized measure $H$ with misclassification
rates $p_{1},p_{0}$ when the actual $G$ is symmetric. For each
\textit{unordered} pair $\{i,j\}$, define $H_{\{i,j\}}^{(1)}\equiv H_{ij}$, $%
H_{\{i,j\}}^{(2)}\equiv H_{ji}$, and $H_{\{i,j\}}^{(3)}\equiv \max \{H_{ij},H_{ji}\}$.
There exists a system analogous to (\ref{eq:psi_ta}), with $H_{ij}^{(t)}$
replaced by $H_{\{i,j\}}^{(t)}$. 
However, the first two equations for $t=1,2$ coincide with each other by construction. 
The remaining steps for identification are identical to the case above with two measures $H^{(1)}$ and $H^{(2)}$, except that the closed-form expressions are further simplified due to $r_{(12)}=1$, $\psi _{1}^{(1)}=\psi _{1}^{(2)}$, and $p_{d}^{(1)}=p_{d}^{(2)}$ for $d\in \{0,1\}$.

\end{spacing}

\end{document}


\begin{spacing}{1.5}

\title{Online Appendix: Estimating Social Network Models with Link Misclassification}
\author{Arthur Lewbel, Xi Qu, and Xun Tang}
\date{%
\today%
}
\maketitle

\section{Extensions of the Benchmark Model} \label{sec:extension}

We now extend the method in Section 3 in the main text to more general settings with contextual effects, heterogeneous misclassification rates, or group fixed effects. 
We focus on extending the ideas for constructive identification. 
Estimation in each case follows from an analog principle and similar steps as in Section 4. To simplify exposition, we let group sizes $n_s =n$ be fixed throughout the remainder of this section. 

\subsection{Contextual effects}

Suppose the structural form, based on perfect observation of the actual adjacency $G$, is:
\begin{equation*}
y=\lambda Gy+X\beta +GX\gamma +\varepsilon \text{,}
\end{equation*}%
where $\gamma $ are contextual effects showing how individual outcomes are directly influenced by the characteristics of others linked to the individual.
The feasible structural form, based on $H$ and subject to misclassification errors, is: $y=\lambda Wy + X\beta + WX \gamma +\eta$, 
where $W\equiv [H-p_0(\iota\iota' -I)]/(1-p_0-p_1)$ as before, and the composite error $\eta$ is: $\eta \equiv \varepsilon -\lambda \left( W-G\right) y-\left( 
W-G\right) X\gamma$.

Under the same conditions and by the same arguments as in the case with no contextual effects in Section 3, we can show that the new composite error $\eta $ is mean-independent from $(X,G)$.
Similarly, we can construct instruments using network measures $H$ as before. 
Our next proposition establishes these results. 
For generality, let $\zeta(X)\in \mathbb{R}^{n\times L}$ be any generic function of $X$ with $L\geq K$.

\begin{aproposition}
\label{pn:HX_context} Suppose (A1), (A2), and (A3) hold. Then $E(\eta|X,G)=0 $. If in addition (A4) holds, then $E [\zeta \left(X\right)^{\prime }W\eta] = E[\zeta\left(X\right)^{\prime}H\eta] = 0$.
\end{aproposition}

\begin{proof}[Proof of Proposition \protect\ref{pn:HX_context}]
Under (A3),
\begin{eqnarray*}
E(Gy|X,G) &=&E[GM(X\beta +GX\gamma +\varepsilon )|X,G]=GM\left( X\beta
+GX\gamma \right) \text{,} \\
E(Wy|X,G) &=&E[WME\left( X\beta +GX\gamma +\varepsilon |X,G,H\right)
|X,G]=E(W|G,X)M(X\beta +GX\gamma )\text{.}
\end{eqnarray*}%
Under (A1) and (A2), $E(W|G,X)=G$. It then follows that $E(\eta |X,G)=0$. 

Next, note 
\begin{eqnarray*}
E\left[ \zeta (X)^{\prime }W Wy|G,X\right]  &=&\zeta \left( X\right) ^{\prime
}E(W^{2}|G,X)M(X\beta +GX\gamma )\text{; } \\
E[\zeta \left( X\right) ^{\prime }W WX|G,X] &=&\zeta \left( X\right) ^{\prime
}E(W^{2}|G,X)X\text{; } \\
E\left[ \zeta (X)^{\prime }WGy|G,X\right]  &=&\zeta \left( X\right) ^{\prime
}E(W|G,X)GM(X\beta +GX\gamma )\text{; } \\
E[\zeta \left( X\right) ^{\prime }WGX|G,X] &=&\zeta \left( X\right) ^{\prime
}E(W|G,X)GX\text{.}
\end{eqnarray*}%
As shown in the appendix of the main text, under (A4), $E(W^{2}|G,X)=G^2$. 
Because $E(W|G,X)=G$ under (A1) and (A2), this implies $E\left[ \zeta \left( X\right) ^{\prime }W\eta \right] =0$.
Also, $ E[\zeta(X)'H\eta] = (1-p_0-p_1) E[\zeta(X)'W\eta] + E[\zeta(X)'p_0(\iota\iota'-I)\eta] = 0$, where the second equality holds because $E(\eta|X,G)=0$.
\end{proof}

This proposition implies that $H^{\prime }\zeta (X)$ and $W^{\prime} \zeta(X)$ satisfy instrument exogeneity for generic functions of $X$. 
In fact, a stronger result holds under (A1)-(A4): $E(W \eta|G,X)=E(H\eta |G,X)=0$. 
The intuition is the same as in Proposition 2. 
Thus we can apply 2SLS as before to consistently estimate $(\lambda,\beta ^{\prime },\gamma ^{\prime })^{\prime }$ using $(H^{\prime}X,X,H^{\prime }\zeta (X))$ as instruments for $\left( Wy,X,WX\right) $, provided the appropriate rank conditions hold.

\subsection{Heterogeneous misclassification rates}

We now allow the misclassification rates $p_{0},p_{1}$ to vary with individual characteristics $X$. To fix ideas, we return to the case with no contextual effects as in Section 3. 
Generalization to include contextual effects, using the results from the preceding sub-section, is immediate.

Suppose we relax Assumption (A2) as follows:%
\begin{equation*}
\text{(A2') }E(H_{ij}|G_{ij}=1,X)=1-p_{ij,1}(X)\text{ and }%
E(H_{ij}|G_{ij}=0,X)=p_{ij,0}(X)\text{ }\forall i\not=j\text{.}
\end{equation*}%
Define 
\begin{equation*}
W_{ij}(X)\mathcal{\equiv }\frac{H_{ij}-p_{ij,0}(X)}{%
1-p_{ij,0}(X)-p_{ij,1}(X)}\text{ if }i\not=j\text{, and }W%
_{ii}(X)=0\text{.}
\end{equation*}%
Under (A2'), $E[W_{ij}(X)|G,X]=1$ for $G_{ij}=1$, and $E[W_{ij}(X)|G,X]=0$ for $G_{ij}=0$. 
Hence $E(W(X)|G,X)=G$.

To recover misclassification rates $p_{ij,1}(\cdot )$ and $p_{ij,0}(\cdot )$, we can apply methods in Section 3.4 to pairwise links $G_{ij}$ and conditioning on $X$. 
In practice, we can mitigate the curse of dimensionality by specifying the rates $p_{ij,1}(X)$ and $p_{ij,0}(X)$ as functions of $X_{i}$ and $X_{j}$.

With knowledge of these heterogeneous misclassification rates, we can use adjusted 2SLS to consistently estimate $(\lambda ,\beta ^{\prime })^{\prime} $ from a feasible structural form:
\begin{equation*}
y=\lambda W(X)y+X\beta +\underset{v^{\ast }}{\underbrace{%
\varepsilon +\lambda \lbrack G-W(X)]y}}\text{.}
\end{equation*}%
Under (A2') and (A3),%
\begin{eqnarray}
&&E(v^{\ast }|G,X)=\lambda \{GE(y|G,X)-E\left[ \left. W%
(X)y\right\vert G,X\right] \}  \notag \\
&=&\lambda \{GMX\beta -E\left[ W(X)|G,X\right] MX\beta \}=\lambda
(G-G)MX\beta =0\text{.}  \label{EXO2}
\end{eqnarray}

\noindent Let $R^{\ast }\equiv \left( W(X)y,X\right) $ and $Z^{\ast }\equiv
(\zeta (X),X)$ where $\zeta (X)\in \mathbb{R}^{n\times L}$ is a nonlinear
function of $X$ with $L\geq K$ (e.g., $\zeta (X)\equiv X\circ X$, where $\circ$ denotes the Hadamard product of matrices). Then (\ref%
{EXO2}) implies $E(Z^{\ast \prime }v^{\ast })=0$. If $E(R^{\ast \prime
}Z^{\ast })$ and $E(Z^{\ast \prime }Z^{\ast })$ have full rank, we can use
this adjusted 2SLS to consistently estimate $(\lambda,\beta')'$.%

\subsection{Group fixed effects}

Suppose there is group-level unobserved heterogeneity $\alpha $ in the DGP: $y=\lambda Gy+X\beta +\alpha +\varepsilon $. We can implement the \textquotedblleft with-in\textquotedblright\ transformation on the adjusted network measure $W$, as in fixed-effect estimation of linear panel data models, to get $\mathcal{\dot{W}\equiv }\left[ I-\iota \iota'/n \right] W$. Essentially, this transformation just corresponds to demeaning $W$ within groups. Similarly, define with-in transformations on $y,\varepsilon ,X,G$ to obtain $\dot{y},\dot{\varepsilon},\dot{X},\dot{G}$ respectively. 
The resulting demeaned version of the structural form is:
\begin{equation}
\dot{y}=\lambda \mathcal{\dot{W}}y+\dot{X}\beta +\underset{\equiv \dot{v}}{%
\underbrace{\dot{\varepsilon}+\lambda(\dot{G}-\mathcal{\dot{W})}y}}. 
\end{equation}
Because $\dot{G}$ and $\mathcal{\dot{W}}$ are linear functions of $G$ and $H$ respectively, the same argument as Lemma 1 in Section 3 applies to show that $E(\dot{v}|X,G)=E(\dot{v}|\dot{X},\dot{G})=0$. 
Note that the presence of group fixed effects does not affect our method for recovering the misclassification rates in Section 3.4. 
With multiple network measures $H^{(t)}$ for $t=1,2$, we can apply adjusted 2SLS as in Section 4 to estimate $(\lambda ,\beta ^{\prime})^{\prime }$, using $\dot H^{(2)}X$ as instruments for $\mathcal{\dot{W}}^{(1)}y$.

\section{General Formula for $\mathbf P_{(H^*_{i\cdot}| G^*_{i\cdot})}$ in Section 3.6} %

Define a 2-by-2 matrix $T \equiv [1-p_0 \text{ , } p_0 \text{ ; } p_1 \text{ , }1-p_1]$. 
Let $\mathcal G_n$ be a $2^{n-1}\times (n-1)$ matrix whose rows enumerate the joint support of $n-1$ binary variables.
That is, $\mathcal G_n$ is the support of $(G^*_{ij})_{j\neq i}$, and by construct also that of $(H^*_{ij})_{j\neq i}$, in a group with $n$ members.
Specifically, 
\begin{equation*}
\mathcal G_3 = \left( 
\begin{array}{cc}
0 & 0 \\ 0 & 1 \\ 1 & 0 \\ 1 & 1%
\end{array}%
\right) \text{ ; and }
\mathcal G_{n+1}=\left( 
\begin{array}{cc}
0_{2^{n-1}\times 1} & \mathcal G_n \\ 
1_{2^{n-1}\times 1} & \mathcal G_n %
\end{array}%
\right) \text{ for } n\geq 3.
\end{equation*}

For a group with $n$ members, let $\mathbf P_{(H^*_{i\cdot}\vert G^*_{i\cdot})}$ be the $2^{n-1}$-by-$2^{n-1}$ matrix of conditional probability of $ H^*_{i\cdot} $ given $ G^*_{i\cdot} $. 
The rows and columns of this matrix are labeled by the realizations of the vector of off-diagonal components $(G^*_{ij})_{j\neq i}$ and $(H^*_{ij})_{j\neq i}$ respectively. 
In particular, these row and column labels are aligned in the same order as the rows in $\mathcal G_n$ defined above.

Proposition 4 states that the general formula for the probability mass of $ H^*_{i\cdot} $ given $ G^*_{i\cdot} $, for a group with size $n$, is: 
\begin{eqnarray*}
\mathbf{P}_{(H_{i.}^{\ast }|G_{i.}^{\ast })} &=& T^{\otimes (n-1)} \equiv \underset{%
n-1}{\underbrace{T\otimes T\otimes \cdots \otimes T}}.
\end{eqnarray*}
It is thus convenient to calculate the inverse as 
\( (T^{\otimes (n-1)})^{-1} = \underset{n-1}{\underbrace{T^{-1}\otimes T^{-1}\otimes \cdots \otimes T^{-1}}}\).

We prove Proposition 4 by induction. 
First, for $n=3$, it is straightforward to verify that the conditional probability mass in Example 1 conforms to the formula $T^{\otimes 2} \equiv T\otimes T$. 
Next, suppose for a group with $n\geq 3$, the probability mass of $(H^*_{ij})_{j\neq i}$ conditional on $(G^*_{ij})_{j\neq i}$ is given by $T^{\otimes (n-1)}$.
Expand $(H^*_{ij})_{j\neq i}$ and $(G^*_{ij})_{j\neq i}$ by inserting an additional binary component in each, i.e., let the group size grow to $n+1$. 
Given how we define and order the rows in $\mathcal G_n$ for $n\geq 3$, the conditional probability mass for the expanded group with $n+1$ members, according to the right-hand side of the formula 
\begin{equation*} \label{eq:CPM}
    P(H^*_{i\cdot}\vert G^*_{i\cdot}) = \prod_{j\neq i} \left[(1-p_1)^{H^*_{ij}G^*_{ij}}(p_1)^{(1-H^*_{ij})G^*_{ij}}(p_0)^{H^*_{ij}(1-G^*_{ij})}(1-p_0)^{(1-H^*_{ij})(1-G^*_{ij})}\right],
\end{equation*}is
\begin{equation*}
\left( 
\begin{array}{cc}
(1-p_0)T^{\otimes (n-1)} & p_0 T^{\otimes (n-1)} \\ 
p_1 T^{\otimes (n-1)}& (1-p_1)T^{\otimes (n-1)}%
\end{array}%
\right) \equiv T^{\otimes n}.
\end{equation*}

\section{Asymptotics of the Adjusted 2SLS Estimator}

We derive the limiting distribution of our adjusted 2SLS estimator for the structural effects $\hat{\lambda}$ and $\hat{\beta}$ in Proposition 5 of Section 4. 

Recall from Section 4 that we have defined for each group $s$,
\begin{eqnarray*}
\upsilon _{1s,1} &\equiv &\frac{2}{n_{s}(n_{s}-1)}\sum_{i>j}H_{s,\{i,j\}}1\{%
\phi _{s,\{i,j\}}=1\}, \\
\upsilon _{2s,1} &\equiv &\frac{2}{n_{s}(n_{s}-1)}\sum_{i>j}1\{ \phi
_{s,\{i,j\}}=1\},
\end{eqnarray*}%
and defined $\upsilon _{1s,0}$, $\upsilon _{2s,0}$ analogously by replacing $%
\phi _{s,\{i,j\}}=1$ with $\phi _{s,\{i,j\}}=0$. Let $\upsilon _{s}\equiv
(\upsilon _{1s,1},\upsilon _{2s,1},\upsilon _{1s,0},\upsilon
_{2s.0})^{\prime }$.  We maintain the following regularity conditions:\medskip

\noindent (\textit{REG}) (i) $\exists \delta >0$ s.t. $\lim_{S\rightarrow \infty }\sum\nolimits_{s=1}^{S}E\left\{ \left\Vert Z_{s}^{\prime}R_{s}(p)\right\Vert ^{1+\delta }\right\} /(1+\delta )<\infty $; 
similar conditions hold for $Z_{s}^{\prime }Z_{s}$ and $Z_{s}^{\prime }\triangledown \left[ R_{s}(p)\theta \right] $. 
(ii) Let $\tau _{s}$, $\zeta _{s}$ be defined as in (\ref{AsyLP}) and (\ref{ASYM2}) below.
$\exists \delta ^{\prime }>0$ s.t. $E(||\tau _{s}||^{2+\delta ^{\prime }})<\infty $, and $S\times Var\left[S^{-1}\left( \sum_{s=1}^{S}\tau _{s}\right) \right] >0$ is bounded away from zero by some positive constants for $S$ large enough; similar conditions hold for $\zeta _{s}$.
\medskip

Under these conditions, we can apply appropriate versions of the law of large numbers, the central limit theorem, and the delta method to our sample which consists of observations $y_s, X_s, H_s$ that are independent and potentially heterogeneously distributed (due to the variation in group sizes $n_s$).

First, note our estimator for misclassification rates $\hat{p}$ is a
closed-form function of the sample averages of $\upsilon _{s}$. Thus the
asymptotic linear presentation of $\widehat{p}$ is 
\begin{equation} \label{AsyLP}
\sqrt{S}(\widehat{p}-p)=\tfrac{1}{\sqrt{S}}\sum\nolimits_{s}\underset{\equiv
\tau _{s}}{\underbrace{\mathcal{J}_{0}\times \left[ \upsilon _{s}-E(\upsilon
_{s})\right] }}+o_{p}(1)\text{,}
\end{equation}%
where $\mathcal{J}_{0}$ depends on the Jacobian matrix of $\hat{p}$ w.r.t.
the sample averages of $\upsilon _{s}$, evaluated at population counterparts. Next, note that by construction,%
\begin{eqnarray}
\sqrt{S}\left( \widehat{\theta }-\theta \right) &=&\sqrt{S}\left( \mathbf{A}%
^{\prime }\mathbf{B}^{-1}\mathbf{A}\right) ^{-1}\mathbf{A}^{\prime }\mathbf{B%
}^{-1}\mathbf{Z}^{\prime }\left[ Y-\mathbf{R}(\widehat{p})\theta \right] 
\notag \\
&=&\left( A_{0}^{\prime }B_{0}^{-1}A_{0}\right) ^{-1}A_{0}^{\prime
}B_{0}^{-1}\tfrac{1}{\sqrt{S}}\mathbf{Z}^{\prime }\left[ Y-\mathbf{R}(%
\widehat{p})\theta \right] +o_{p}(1)\text{,}  \label{ASYM1}
\end{eqnarray}%
where the second equality holds as $\mathbf{A}/S\overset{p}{\rightarrow 
}A_{0}$, $\mathbf{B}/S\overset{p}{\rightarrow }B_{0}$, $S^{-1/2}\mathbf{Z}%
^{\prime }\left[ Y-\mathbf{R}(\widehat{p})\theta \right] =O_{p}(1)$.

Recall the following definitions from the text:%
\begin{equation*}
F_{0}\equiv \lim_{S\rightarrow \infty
}S^{-1}\sum\nolimits_{s=1}^{S}E\left\{ Z_{s}^{\prime }\triangledown \left[
R_{s}(p)\theta \right] \right\} \text{.}
\end{equation*}%
For each group $s$ and individual $i\leq n_{s}$, let $R_{s,i}(p)$ denote the
corresponding row in $\mathbf{R}(p)$, and $\triangledown _{p}R_{s,i}(p)$ be
the $(K+1)$-by-$2$ Jacobian matrix of $R_{s,i}(p)$ with respect to $p$. Let $%
\triangledown _{p}\left[ R_{s}(p)\theta \right] $ denote an $n_{s}$-by-$2$
matrix with each row $i$ being $\theta ^{\prime }\triangledown _{p}R_{s,i}(p)
$, and let $\triangledown _{p}\left[ \mathbf{R}(p)\theta \right] $ be an $N$%
-by-$2$ matrix that stacks them for $s\leq S$. Then,%
\begin{eqnarray}
\tfrac{1}{\sqrt{S}}\mathbf{Z}^{\prime }\left[ Y-\mathbf{R}(\widehat{p}%
)\theta \right]  &=&\tfrac{1}{\sqrt{S}}\mathbf{Z}^{\prime }\left[ Y-\mathbf{R%
}(p)\theta \right] -\left( \tfrac{1}{S}\mathbf{Z}^{\prime }\triangledown _{p}%
\left[ \mathbf{R}(p)\theta \right] \right) \sqrt{S}(\widehat{p}-p)+o_{p}(1) 
\notag \\
&=&\tfrac{1}{\sqrt{S}}\sum\nolimits_{s}Z_{s}^{\prime }\left[
y_{s}-R_{s}(p)\theta \right] -F_{0}\left( \tfrac{1}{\sqrt{S}}%
\sum\nolimits_{s}\tau _{s}\right) +o_{p}(1)  \notag \\
&=&\tfrac{1}{\sqrt{S}}\sum\nolimits_{s}\underset{\equiv\zeta _{s}}{\underbrace{%
Z_{s}^{\prime }v_{s}-F_{0}\tau _{s}}}+o_{p}(1)\text{.}  \label{ASYM2}
\end{eqnarray}%
The first equality follows form a Taylor approximation around the actual
misclassification rates $p=(p_{0},p_{1})^{\prime }$; the second from $\tfrac{%
1}{S}\mathbf{Z}^{\prime }\triangledown _{p}\left[ \mathbf{R}(p)\theta \right]
\overset{p}{\longrightarrow }\lim_{S\rightarrow \infty
}S^{-1}\sum_{s}E\left\{ Z_{s}^{\prime }\triangledown _{p}\left[
R_{s}(p)\theta \right] \right\} $ and from the asymptotic linear
representation of the estimator $\widehat{p}=(\hat{p}_{0},\hat{p}_{1})$; the
third from $y_{s}=R_{s}(p)\theta +v_{s}$. This proves the limiting
distribution of $\sqrt{S}(\widehat{\theta }-\theta )$ in Proposition 5.

\section{A Single Large Network}\label{sub:singlenetwork}

In the main text, we focus on cases where the sample consists of many small, fixed-sized groups, where no links exist between members of different groups. For example, blocks could be villages or schools, with many links within each village or school, and practically no links between people in different villages or schools. 

We now show how an adjusted 2SLS also applies when there is interdependence between \textit{all} individuals in a sample. 
Specifically, we consider a setting in which the sample is partitioned into well-defined, \textit{approximate groups}, which we henceforth refer to as \textquotedblleft \textit{blocks}\textquotedblright. 
Formally, the individuals in the sample are partitioned into $S$ blocks. Links within each block $s$  are dense (i.e., the probability of forming links between individuals within the same block does \textit{not} diminish as the sample size increases);  
links between different blocks are sparse, with the rate of formation diminishing as the number of blocks increases. 

\subsection{A single large network setting}\label{sub: asinglenetwork}

The sample size is $N\equiv \sum_{s=1}^{S}n_{s}$. Let $G_{N}$ and $H_{N}$ denote the true and noisy measure of $N$-by-$N$ adjacency matrices respectively, which span the $S$ blocks in the sample.
Link misclassification exists in $H_N$ in two ways. 
First, links within each block are randomly misclassified in the sample at rates $p_0$ and $p_1$ as before. Second, the sparse cross-block links are \textit{never} reported in the sample, and so are missing.
Because all cross-block links are missing, $H_N$ is block-diagonal, with each diagonal block indexed as $H_{N,s}$ for $s=1,2,...,S$.

To facilitate derivation of the asymptotic properties of our 2SLS estimator, let $\widetilde{G}_{N}$ be a hypothetical \textit{block-diagonal approximation} of $G_{N}$, which perfectly reports all within-block links but drops all cross-block links. That is, for all individual $i$, 
\begin{equation*}
\widetilde{G}_{N,ij}=G_{N,ij}\text{ if }j\in s(i)\text{; }\widetilde{G}%
_{N,ij}=0\text{ if }j\notin s(i)\text{,}
\end{equation*}%
where $s(i)$ indicates the block that $i$ belongs to. 
By construction, all elements outside the diagonal blocks in $\widetilde G_N $ are zeros.
We maintain the following assumptions on the measurement errors in $H_{N}$:%
\begin{equation*}
\text{(N1) }E(H_{N,ij}|\widetilde{G}_{N},X)=E(H_{N,ij}|\widetilde{G}%
_{N,ij},X)\text{ }\forall i \neq j; 
\end{equation*}%
\begin{equation*}
\text{(N2) }E(H_{N,ij}|\widetilde{G}_{N,ij}=1,X)=1-p_1\text{, }E(H_{N,ij}|%
\widetilde{G}_{N,ij}=0,X)=p_0\text{ }\forall i \text{ and } j\neq i \text{ in } s(i).
\end{equation*}%
As before, assume $p_0+p_1<1$. Furthermore, we maintain that the block-specific random arrays, $H_{N,s}$, $\widetilde{G}_{N,s}$, $X_{N,s}$, $\epsilon _{N,s}$ (with $H_{N,s},\widetilde{G}_{N,s}$ being $n_{s}$-by-$n_{s}$ matrices), are drawn independently across the blocks. 
Under these maintained conditions, we can consistently estimate the within block misclassification rates following the same approach as in Section 4. 
For the rest of this section, we take $(p_0,p_1)$ as given, and focus on the asymptotic properties of an adjusted 2SLS that removes misclassification bias by adjusting the diagonal block measures.

Let $W_N$ be a block-diagonal matrix, with each of its $S$ diagonal blocks adjusted as $W_{N,s} \equiv [H_{N,s}-p_0(\iota_{n_s}\iota'_{n_s}-I_{n_s})]/(1-p_0-p_1)$.
In the Web Appendix, we show that the structural model 
\[y_{N}=\lambda G_{N}y_{N}+X_{N}\beta +\varepsilon _{N}\]
can be written as
\begin{equation}
y_{N}=\lambda W_{N}y_{N}+X_{N}\beta +v_{N}+u_{N}\text{,}
\label{eq:noisy_SF_nbd}
\end{equation}%
where $u_{N}\equiv \left( I_{N}-\lambda W_N\right) \left(
I_{N}-\lambda \widetilde{G}_{N}\right) ^{-1}\lambda \Delta _{N}y_{N}$ 
with $\Delta _{N}\equiv G_{N}-\tilde{G}_{N}$ and
\begin{equation*}
v_{N}\equiv \varepsilon _{N}+\lambda \left( \widetilde{G}_{N}-W_N\right) \widetilde{y}_{N}\text{ with }\widetilde{y}_{N}\equiv
(I_{N}-\lambda \widetilde{G}_{N})^{-1}(X_{N}\beta +\varepsilon _{N})\text{.}
\end{equation*}
Note that we decompose composite errors in (\ref{eq:noisy_SF_nbd}) into $%
u_{N}$ and $v_{N}$, which are both vectorizations of block-specific vectors $u_{N,s}$ and $v_{N,s}$. 
While $v_{N,s}$ are independent across the blocks, $u_{N,s}$ are correlated across the blocks because of interdependence between $y_{N,s}$ due to sparse links between the blocks in $G_N$.
This difference requires us to apply separate tactics to characterize their contribution to the estimation errors.

This decomposition of the composite error is useful for illustrating two main steps for deriving the asymptotic result. 
Let $Z_N$ denote the matrix of instruments, with $Z_{N,s}$ being its sub-matrix specific to block $s$. Instrument exogeneity requires $E(Z_{N,s}'v_{N,s})=0$ for all $s$.
Recall the 2SLS estimator that uses $%
Z_{N}$ as instruments for $R_{N}\equiv
(W_{N}y_{N},X_{N})$ is $\widehat{\theta }=\left( A_{N}^{\prime }B_{N}^{-1}A_{N}\right)
^{-1}A_{N}^{\prime }B_{N}^{-1}Z_{N}^{\prime }y_{N}$, 
where $A_{N}\equiv Z_{N}^{\prime }R_{N}$ and $B_{N}\equiv Z_{N}^{\prime
}Z_{N}$. By definition,%
\begin{equation*}
\widehat{\theta }-\theta =\left( A_{N}^{\prime
}B_{N}^{-1}A_{N}\right) ^{-1}A_{N}^{\prime }B_{N}^{-1}Z_{N}^{\prime
}(v_{N}+u_{N})\text{.}
\end{equation*}
The asymptotic property of the estimator thus depends on that of $Z_{N}^{\prime }v_{N}$ and $Z_{N}^{\prime }u_{N}$, which we investigate sequentially.

First, we characterize the order of $Z_{N}^{\prime }v_{N}$, using the fact
that $v_{N,s}$ are independent across blocks $s$. 
To see why such independence holds, recall that $H_{N,s}$, $\widetilde{G}_{N,s}$, $X_{N,s}$, $\epsilon _{N,s}$ are assumed independent across blocks $s$. 
By construct, $\widetilde{G}_{N}$, $H_{N}$, $W_{N}$ and $(I-\lambda \widetilde{G}_{N})^{-1}$ are all block-diagonal. 
Hence $\widetilde{y}_{N,s}=(I_{s}-\lambda \widetilde{G}_{N,s})^{-1}(X_{N,s}\beta +\varepsilon _{N,s})$ are independent across $s$.%
\footnote{
    We refer to $\widetilde{y}_{N}$ as a \textit{hypothetical }reduced form, because it is based on the block-diagonal approximation $\widetilde{G}_{N}$ rather than the actual $G_{N}$.} 
It then follows that $v_{N,s}=\varepsilon
_{N,s}+\lambda \left( \widetilde{G}_{N,s}-W_{N,s}\right) 
\widetilde{y}_{N,s}$, and are independent across $s$.

We maintain exogeneity and independence conditions which are analogous to
(A3) and (A4) for the case with small groups in Section 3:%
\begin{eqnarray*}
\text{(N3)\ } &&E(\varepsilon _{N,s}|X_{N,s},G_{N,s},H_{N,s})=0\text{ for
all }s\text{;} \\
\text{(N4) } &&\text{Conditional on }(G_{N},X_{N})\text{, }H_{N,ij}\bot 
\text{ }H_{N,kl}\text{ for all }(i,j)\not=(k,l)\text{.}
\end{eqnarray*}%
Under these conditions, $E(v_{N,s}|X_{N,s},G_{N,s})=0$. 
The independence between $v_{N,s}$ mentioned above then allows us to apply the law of large numbers to show that 
\begin{equation*}
\frac{1}{S}Z_{N}^{\prime }v_{N}=\frac{1}{S}\sum
\nolimits_{s}Z_{N,s}^{\prime }v_{N,s}=O_{p}(S^{-1/2}).
\end{equation*}

Second, the order of $\frac{1}{S} Z_{N}^{\prime }u_{N}$ is bounded above by the expected number of misclassified links across the blocks, which are assumed to be sparse in the following sense:
\begin{equation*}
\text{(S-LOB)}\sum \nolimits_{i=1}^{N}\sum \nolimits_{j\not \in
s(i)}E\left( \left \vert \Delta _{N,ij}\right \vert \right) =O(S^{\rho })%
\text{ for some }\rho <1\text{.}
\end{equation*}%
This condition is the same as in \cite{lewbel2023ignoring}, who provide examples with primitive conditions. 
Among other things, it requires the links outside these blocks, or approximate groups, to be sparse with diminishing formation rates as $S\rightarrow\infty$.

\subsection{Asymptotic property of adjusted 2SLS in Section \ref{sub: asinglenetwork}}

We now derive the asymptotic property of adjusted 2SLS in the setting of a single, large network that is near-block diagonal, as defined in the previous subsection. 
Our objective is to show that, when the order of magnitude of the misclassification errors outside the diagonal blocks, or approximate groups, are small enough in the sense of (S-LOB), a 2SLS that only adjusts the link measure within each block while ignoring sparse, off-diagonal links is a root-n, consistent, asymptotically normal estimator for social effects.

Regularity conditions for deriving asymptotic properties are collected in Condition (S-REG). Suppose $I_{N}-\lambda G_{N}$ and $I_{N}-\lambda \widetilde{G}_{N}$ are invertible almost surely, and denote $M_{N}\equiv (I_{N}-\lambda G_{N})^{-1}$,  $\widetilde{M}_{N}\equiv (I_{N}-\lambda \widetilde{G}_{N})^{-1}$. Let $%
\widetilde{R}_{N,s}\equiv (W_{N,s}\widetilde{M}_{N,s}X_{N,s},X_{N,s})$.\medskip

\noindent (\textbf{S-REG}) (i) For all $i$, $\sup_{i}\left[ \sum
\nolimits_{j}|M_{N,ij}|\right] <\infty $; $\sup_{j}E\left( \left.
|X_{N,j}\beta |+|\varepsilon _{N,j}|\right \vert \Delta _{N}\right) <\infty $;

$\sup_{j}\left \vert \left( X_{N}^{\prime }H_{N}W_N\widetilde{M}_{N}\right)
_{ij}\right \vert <\infty $ and $\sup_{j}\left \vert \left( X_{N}^{\prime
}W_{N}\widetilde{M}_{N}\right) _{ij}\right \vert <\infty $ almost surely.

(ii) $(H_{N,s}$, $\widetilde{G}_{N,s}$, $X_{N,s}$, $\epsilon _{N,s})$ are
independent across blocks $s=1,2,...,S$.

(iii) There exist $\delta >0$ s.t. for all $s$, $E\left[ ||Z_{N,s}^{\prime }%
\widetilde{R}_{N,s}||^{1+\delta }\right] $, $E|\left[ ||Z_{N,s}^{\prime
}W_{N,s}\widetilde{M}_{N,s}\varepsilon _{N,s}||^{1+\delta }\right] $, and $%
E\left( \left \Vert Z_{N,s}^{\prime }Z_{N,s}\right \Vert ^{1+\delta }\right) 
$ are uniformly bounded.

(iv) For some $\delta >0$, $E\left \Vert Z_{N,s}^{\prime
}v_{N,s}\right
\Vert ^{2+\delta }<\Delta <\infty $ and $S^{-1}%
\sum_{s=1}^{S}Var(Z_{N,s}^{\prime }v_{N,s})>\delta ^{\prime }>0$ for $S$
sufficiently large.

(v) $\sup_{j}\left \vert \left[ \left( I_{N}-\lambda W_N%
\right) \widetilde{M}_{N}\right] _{ij}\right \vert <\infty $ for all $i$
almost surely.

(vi) $\lim_{S\rightarrow\infty }\frac{1}{S}\sum
\nolimits_{s}E\left( Z_{N,s}^{\prime }Z_{N,s}\right) $ and $%
\lim_{S\rightarrow\infty }\frac{1}{S}\sum \nolimits_{s}E\left(
Z_{N,s}^{\prime }\widetilde{R}_{N,s}\right) $ exist and are
non-singular.\medskip

Assumption (S-REG) collects regularity conditions needed for deriving the
asymptotic properties of $\widehat{\theta }-\theta$. Part (ii)
implies that exogenous variables are drawn independently across the blocks.
Part (i) and (v) introduce bound conditions on exogenous arrays in the
model. These allow us to relate differences between $y_{N}$ and its
near-block diagonal approximation $\widetilde{y}_{N}$ to the order of
difference between $G_{N}$ and $\widetilde{G}_{N}$. Parts (iii) and (iv) are
boundedness conditions on population moments that ensure a law of large
numbers and a central limit theorem apply to components of the
estimator.

Applying arguments similar to those in Proposition 3.1 and 3.2 of  \cite{lewbel2023ignoring}, we have the following proposition.

\begin{aproposition}
\label{pn:SLN_asymp} \textit{Suppose (N1), (N2), (N3) and (N4) hold. If
Assumptions (S-LOB) and (S-REG) hold, then}%
\begin{equation*}
\widehat{\theta }-\theta =O_{p}(S^{-1/2}\vee S^{\rho -1})\text{.}
\end{equation*}%
\textit{If in addition }$\rho <1/2$\textit{,} then%
\begin{equation*}
\sqrt{S}\left( \widehat{\theta }-\theta \right) \overset{d}{%
\longrightarrow }\mathcal{N}(0,\Omega )\text{,}
\end{equation*}
where $\Omega \equiv \left( A_{0}^{\prime }B_{0}^{-1}A_{0}\right)
^{-1}A_{0}^{\prime }B_{0}^{-1}\omega _{0}B_{0}^{-1}A_{0}\left( A_{0}^{\prime
}B_{0}^{-1}A_{0}\right) ^{-1}$ with $A_{0},B_{0},\omega _{0}$ being constant arrays defined in the appendix.
\end{aproposition}

To focus on this main goal, we take the misclassification rates $(p_0,p_1)$ as given and fixed in the adjustment.
(A proof that also accounts for estimation errors in the initial estimates of $(p_0,p_1)$ would follow from steps similar to Proposition 5 in Section 4.2, but do not add any insight for the main goal.) 
Also, for conciseness, we only investigate the case with a single, unsymmetrized measure as in Section 3.3.1; parallel results for the case of multiple, symmetrized measure follow from analogous arguments and are omitted for brevity.

We begin by deriving the noisy, feasible structural form in (17). 
First off, note that the reduced form of $y_{N}$ is:%
\begin{eqnarray}
y_{N} &=&(I_{N}-\lambda G_{N})^{-1}(X_{N}\beta +\varepsilon _{N})  \notag \\
&=&(I_{N}-\lambda \widetilde{G}_{N})^{-1}(X_{N}\beta +\varepsilon _{N})-%
\left[ (I_{N}-\lambda \widetilde{G}_{N})^{-1}-(I_{N}-\lambda G_{N})^{-1}%
\right] (X_{N}\beta +\varepsilon _{N})  \label{Redysingle} \\
&=&\underset{\equiv \widetilde{y}_{N}}{\underbrace{(I_{N}-\lambda \widetilde{%
G}_{N})^{-1}(X_{N}\beta +\varepsilon _{N})}}+(I_{N}-\lambda \widetilde{G}%
_{N})^{-1}\lambda \underset{\equiv \Delta _{N}}{\underbrace{(G_{N}-%
\widetilde{G}_{N})}}\underset{=y_{N}}{\underbrace{(I_{N}-\lambda
G_{N})^{-1}(X_{N}\beta +\varepsilon _{N})}}\text{.}  \notag 
\end{eqnarray}%
where the third equality follows from the fact that $\mathcal{A}^{-1}-%
\mathcal{B}^{-1}=\mathcal{A}^{-1}(\mathcal{B}-\mathcal{A})\mathcal{B}^{-1}$
for invertible matrices $\mathcal{A}$, $\mathcal{B}$. Next, write (
14) as 
\begin{eqnarray*}
y_{N} &=&W_{N}y_{N}+X_{N}\beta +\varepsilon
_{N}+\lambda \left( \widetilde{G}_{N}-W_N\right)
y_{N}+\lambda \Delta _{N}y_{N} \\
&=&W_{N}y_{N}+X_{N}\beta +\underset{\equiv v_{N}}{%
\underbrace{\varepsilon _{N}+\lambda \left( \widetilde{G}_{N}-W_N\right) \widetilde{y}_{N}}+}\underset{\equiv u_{N}}{\underbrace{\lambda
^{2}\left( \widetilde{G}_{N}-W_N\right) (I_{N}-\lambda 
\widetilde{G}_{N})^{-1}\Delta _{N}y_{N}+\lambda \Delta _{N}y_{N}}}\text{,}
\end{eqnarray*}%
where the second equality holds because we substitute $y_{N}$ in $\lambda
\left( \widetilde{G}_{N}-W_N\right) y_{N}$ using the r.h.s.
of (\ref{Redysingle}). Furthermore, we can write 
\begin{equation*}
u_{N}=\left[ \lambda \left( \widetilde{G}_{N}-W_N\right)
(I_{N}-\lambda \widetilde{G}_{N})^{-1}+I_{N}\right] \lambda \Delta
_{N}y_{N}=\left( I_{N}-\lambda W_N\right) \left(
I_{N}-\lambda \widetilde{G}_{N}\right) ^{-1}\lambda \Delta _{N}y_{N}\text{.}
\end{equation*}%

\medskip

\noindent \textbf{Lemma A1.} \textit{Let }$a_{N}$\textit{, }$b_{N}$\textit{%
\ be random vectors in }$\mathbb{R}^{N}$.\textit{\ Suppose there exist
constants }$C_{1},C_{2}<\infty $\textit{\ such that }$\Pr \{ \sup_{i\leq
N}E(|a_{i}||\Delta _{N})\leq C_{1}\}=1$\textit{\ and }$\Pr \{ \sup_{j\leq
N}E\left( |b_{j}||\Delta _{N}\right) \leq C_{2}\}=1$\textit{.\ Then
Assumption S-LOB implies }$\frac{1}{S}a_{N}^{\prime }\Delta
_{N}b_{N}=O_{p}(S^{\rho -1})$.\medskip

\begin{proof}[Proof of Lemma A1]
From Assumption S-LOB, $\sum \nolimits_{i}\sum \nolimits_{j}E\left \vert
\Delta _{N,ij}\right \vert =O(S^{\rho })$\ for some $\rho <1$. By
construction,%
\begin{eqnarray*}
E\left( |\tfrac{1}{S}a_{N}^{\prime }\Delta b_{N}|\right) &\leq &\tfrac{1}{S}E%
\left[ \sup \nolimits_{i,j}E\left( |a_{i}b_{j}|\mid \Delta _{N}\right) \cdot
\left( \sum \nolimits_{i}\sum \nolimits_{j}|\Delta _{N,ij}|\right) \right]
\\
&\leq &\tfrac{1}{S}E\left[ C_{1}C_{2}\left( \sum \nolimits_{i}\sum
\nolimits_{j}|\Delta _{N,ij}|\right) \right] =O(S^{\rho -1})\text{.}
\end{eqnarray*}%
It then follows that $\tfrac{1}{S}a_{N}^{\prime }\Delta
_{N}b_{N}=O_{p}(S^{\rho -1})$.\ \ $\ \ $\medskip
\end{proof}

\noindent \textbf{Lemma A2. }\textit{Under the conditions in (S-REG)-(i),
there exists a constant }$C^{\ast }<\infty $\textit{\ such that }$\Pr \{
\sup_{i\leq N}E(|y_{i}||\Delta _{N})\leq C^{\ast }\}=1$\textit{\ for all }$N$%
.\medskip

\begin{proof}[Proof of Lemma A2]
Let $M_{N}\equiv (I_{N}-\lambda G_{N})^{-1}$. For any matrix $\mathcal{A}$,
let $\mathcal{A}_{(i)}$
denote its $i$-th row; 
and $\mathcal{A}_{ij}$ denote its $(i,j)$-th
component. It then follows from the reduced form that 
\begin{eqnarray*}
&&\sup_{i\leq N}E(|y_{N,i}|\mid \Delta _{N})=\sup_{i}E\left( \left. \left
\vert \sum \nolimits_{j}M_{N,ij}\left( X_{N,(j)}\beta +\varepsilon
_{j}\right) \right \vert \right \vert \Delta _{N}\right) \\
&\leq &\sup_{i}\left[ \sum \nolimits_{j}|M_{N,ij}|\right] \times
\sup_{j}E\left( \left. |X_{N,(j)}\beta |+|\varepsilon _{N,j}|\right \vert
\Delta _{N}\right) .
\end{eqnarray*}%
Hence, there exists some constant $C^{\ast }<\infty $ with $\Pr \{
\sup_{i}E(|y_{i}||\Delta _{N})\leq C^{\ast }\}=1$.\ \ $\ \ \medskip $
\end{proof}

\noindent \textbf{Lemma A3.} \textit{Under the conditions in (S-REG), }$%
\frac{1}{S}R_{N}^{\prime }Z_{N}=A_{0}+o_{p}(1)$\textit{, }$\frac{1}{S}%
Z_{N}^{\prime }Z_{N}=B_{0}+o_{p}(1)$\textit{, and }$\frac{1}{S}Z_{N}^{\prime
}v_{N}=O_{p}(S^{-1/2})$\textit{.}\medskip

\begin{proof}[Proof of Lemma A3]
By definition, $\frac{1}{S}Z_{N}^{\prime }Z_{N}=\frac{1}{S}\sum
\nolimits_{s=1}^{S}Z_{N,s}^{\prime }Z_{N,s}$, with $Z_{N,s}$ independent
across $s$ due to (S-REG)-(ii). Then by (S-REG)-(iii) and the law of large
numbers for independent and heterogeneously distributed observations (e.g.,
Corollary 3.9 in \cite{white2001asymptotic}), $\frac{1}{S}Z_{N}^{\prime
}Z_{N}=B_{0}+o_{p}(1)$ where $B_{0}\equiv \lim_{S\rightarrow \infty }%
\frac{1}{S}\sum \nolimits_{s}E\left( Z_{N,s}^{\prime }Z_{N,s}\right) $%
. Next, note by construction and (\ref{Redysingle}),%
\begin{equation}
\frac{1}{S}Z_{N}^{\prime }R_{N}=\frac{1}{S}\left( 
\begin{array}{cc}
X_{N}^{\prime }H_{N}W_N\widetilde{y}_{N} & X_{N}^{\prime }H_{N}X_{N} \\ 
X_{N}^{\prime }W_{N}\widetilde{y}_{N} & X_{N}^{\prime }X_{N}%
\end{array}%
\right) +\frac{1}{S}\lambda \left( 
\begin{array}{cc}
X_{N}^{\prime }{H}_{N}W_N\widetilde{M}_{N}\Delta _{N}y_{N} & 0 \\ 
X_{N}^{\prime }W_{N}\widetilde{M}_{N}\Delta _{N}y_{N} & 0%
\end{array}%
\right) \text{.}  \label{eq:C3_1}
\end{equation}%
By (S-REG)-(i) and Lemma A2, $y_{N}$ satisfies the condition on $b_{N}$ in
Lemma A1. It then follows from Lemma A1 that the \textit{second} term on the
right-hand side of (\ref{eq:C3_1}) is $O_{p}(S^{\rho -1})$. Besides, the 
\textit{first} term on the r.h.s. of (\ref{eq:C3_1}) is%
\begin{equation}
\frac{1}{S}\sum \nolimits_{s=1}^{S}Z_{N,s}^{\prime }\widetilde{R}_{N,s}+%
\frac{1}{S}\sum \nolimits_{s=1}^{S}\left( Z_{N,s}^{\prime }W_{N,s}%
\widetilde{M}_{N,s}\varepsilon _{N,s},\mathbf{0}\right) \text{.}
\label{eq:C3_2}
\end{equation}%
By (N3), $E\left( Z_{N,s}^{\prime }W_{N,s}\widetilde{M}_{N,s}\varepsilon
_{N,s}\right) =0$. It then follows from (S-REG)-(iii) that the expression in
(\ref{eq:C3_2})\ is $A_{0}+o_{p}(1)$, with $A_{0}\equiv
\lim_{S\rightarrow \infty }\frac{1}{S}\sum \nolimits_{s}E\left(Z_{N,s}^{\prime }\widetilde{R}_{N,s}\right) $.

Next, note that by definition,%
\begin{equation}
\frac{1}{S}Z_{N}^{\prime }v_{N}=\frac{1}{S}\sum
\nolimits_{s=1}^{S}Z_{N,s}^{\prime }\varepsilon _{N,s}+\lambda \frac{1}{S}%
\sum \nolimits_{s=1}^{S}Z_{N,s}^{\prime }\left( \widetilde{G}_{N,s}-W_{N,s}\right) \widetilde{y}_{N,s}\text{.}  \label{eq:C3_3}
\end{equation}%
By construction, $Z_{N,s}$, $\varepsilon _{N,s}$, $\widetilde{G}_{N,s}$ and $%
H_{N,s}$ are independent across blocks $s=1,2,...,S$. Also, recall that $%
\widetilde{y}_{N,s}$ is defined as $\widetilde{y}_{N,s}\equiv (I_{s}-\lambda 
\widetilde{G}_{N,s})^{-1}(X_{N,s}\beta +\varepsilon _{N,s})$, Hence $%
\widetilde{y}_{N,s}$ is also independent across the blocks. Assumption (N3)
implies $E(Z_{N,s}^{\prime }\varepsilon _{N,s})=0$; Assumptions (N1) and
(N2) imply 
\begin{equation*}
E\left( \left. W_{N,s}\right \vert \widetilde{G}_{N,s},X_{N,s}\right) =%
\widetilde{G}_{N,s}\text{.}
\end{equation*}%
Furthermore, the same argument as in the proof of Proposition 2 
shows that under (N1), (N2), (N3) and (N4)%
\begin{equation*}
E\left( \left. H_{N,s}W_{N,s}\right \vert \widetilde{G}_{N,s},X_{N,s}\right)
=E\left( \left. H_{N,s}\widetilde{G}_{N,s}\right \vert \widetilde{G}%
_{N,s},X_{N,s}\right) \text{,}
\end{equation*}%
so that 
\begin{equation*}
E\left[ Z_{N,s}^{\prime }\left( \widetilde{G}_{N,s}-W_{N,s}%
\right) \widetilde{y}_{N,s}\right] =0\text{.}
\end{equation*}%
It then follows from (S-REG)-(iv) and the Central Limit Theorem that $\frac{1%
}{S}Z_{N}^{\prime }v_{N}=O_{p}(S^{-1/2})$. \ \ $\ $\medskip
\end{proof}

\begin{proof}[Proof of Proposition \ref{pn:SLN_asymp}]
As shown in Lemma A3, $\frac{1}{S}R_{N}^{\prime }Z_{N}=A_{0}+o_{p}(1)$, $%
\frac{1}{S}Z_{N}^{\prime }Z_{N}=B_{0}+o_{p}(1)$, and $\frac{1}{S}%
Z_{N}^{\prime }v_{N}=O_{p}(S^{-1/2})$ under (N1)-(N4), (S-LOB) and (S-REG).
Furthermore, with (S-REG)-(v), Lemma A1 and Lemma A2 imply that $\frac{1}{S}%
Z_{N}^{\prime }u_{N}=O_{p}(S^{\rho -1})$. When $\rho <1/2$, we have
\begin{equation*}
\frac{1}{\sqrt{S}}Z_{N}^{\prime }(u_{N}+v_{N})\overset{d}{\rightarrow }\frac{%
1}{\sqrt{S}}Z_{N}^{\prime }v_{N}\overset{d}{\rightarrow }\mathcal{N}%
(0,\omega _{0}),
\end{equation*}%
where $\omega _{0}=\lim_{S\rightarrow \infty }\frac{1}{S}%
\sum \nolimits_{s}E\left( Z_{N,s}^{\prime }v_{N,s}v_{N,s}^{\prime
}Z_{N,s}\right) .${\small \ Hence,\ 
\begin{eqnarray*}
\sqrt{S}(\widehat{\theta }-\theta ) &=&\left( A_{0}^{\prime
}B_{0}^{-1}A_{0}\right) ^{-1}A_{0}^{\prime }B_{0}^{-1}\frac{1}{\sqrt{S}}%
Z_{N}^{\prime }v_{N}+{\small o_{p}(1)} \\
&&\overset{d}{\rightarrow }\mathcal{N}(0,\left( A_{0}^{\prime
}B_{0}^{-1}A_{0}\right) ^{-1}A_{0}^{\prime }B_{0}^{-1}\omega
_{0}B_{0}^{-1}A_{0}\left( A_{0}^{\prime }B_{0}^{-1}A_{0}\right) ^{-1}).
\end{eqnarray*}%
}

\end{proof}

\bibliographystyle{chicago}

\bibliography{missing_links}

\end{spacing}